\documentclass[envcountsame,runningheads,notitlepage]{llncs}

\usepackage{amsmath, amssymb, amsfonts,stmaryrd, xspace}
\SetSymbolFont{stmry}{bold}{U}{stmry}{m}{n}

\usepackage{thm-restate}

\usepackage[utf8]{inputenc}
\usepackage[T1]{fontenc}
\usepackage{color}
\usepackage{enumitem}
\usepackage{ellipsis}
\usepackage{etoolbox} \usepackage[misc,geometry]{ifsym} \usepackage{mathrsfs}
\usepackage{mathtools}
\usepackage[normalem]{ulem}

\usepackage{makecell}
\usepackage[mathscr]{eucal}

\usepackage{tikz}
\usetikzlibrary{matrix}

\usepackage{todonotes}

\usepackage{lipsum}

\usepackage{breakcites}
\usepackage{float}
\usepackage{booktabs}
\usepackage{threeparttable}

\usepackage{hyperref}
\usepackage[normalem]{ulem}
\pagestyle{plain}

\newtoggle{longversion}
\toggletrue{longversion}

\definecolor{ao}{rgb}{0.4, 0.8, 0.0}

\newcommand{\word}[1]{\ensuremath{\boldsymbol{\rm #1}}}
\renewcommand{\vec}[1]{\word{#1}}
\newcommand{\bfa}{\word{a}}
\newcommand{\bfb}{\word{b}}

\newcommand{\bfe}{\word{e}}
\newcommand{\bfg}{\word{g}}
\newcommand{\bfh}{\word{h}}

\newcommand{\bfm}{\word{m}}
\newcommand{\bfs}{\word{s}}

\newcommand{\bfv}{\word{v}}

\newcommand{\bfx}{\word{x}}
\newcommand{\bfy}{\word{y}}
\newcommand{\bfz}{\word{z}}

\newcommand{\mat}[1]{\ensuremath{\boldsymbol{\rm #1}}}
\newcommand{\bfA}{\mat{A}}

\newcommand{\bfG}{\mat{G}}
\newcommand{\bfH}{\mat{H}}

\newcommand{\bfGamma}{\mat{\Gamma}}

\newcommand{\Gal}{{\ensuremath{\rm Gal}}}
\newcommand{\Aut}{{\ensuremath{\rm Aut}}}

\newcommand{\goth}[1]{\ensuremath{\mathfrak{#1}}}

\newcommand{\OO}{\mathcal{O}} 
 \newcommand{\gothP}{\mathfrak{P}}

\newcommand{\F}[1]{\mathbb{F}_{#1}}
\newcommand{\Fp}{\mathbb{F}_p}
\newcommand{\Fq}{\mathbb{F}_q}
\newcommand{\FF}{\mathbb{F}}
\newcommand{\NN}{\mathbb{N}}
\newcommand{\N}{\mathbb{N}}
\newcommand{\QQ}{\mathbb{Q}}

\newcommand{\CC}{\mathbb{C}}
\newcommand{\GG}{\mathbb{G}}
\newcommand{\ZZ}{\mathbb{Z}}

\newcommand{\Ring}{\mathcal{R}}

\newcommand{\ideal}[1]{\langle #1 \rangle}

\mathchardef\mhyphen="2D
\newcommand{\PCG}{{\sf PCG}\xspace}
\newcommand{\PCGs}{{\sf PCG}'s\xspace}
\newcommand{\PRG}{{\sf PRG}\xspace}
\newcommand{\Expand}{\ensuremath{\mathsf{Expand}}\xspace}
\newcommand{\SPFSS}{{\sf SPFSS}\xspace}
\newcommand{\key}{{\sf k}}

\newcommand{\RSample}{{\sf RSample}\xspace}
\newcommand{\cF}{\mathcal{F}}

\newcommand{\R}{R}

\newcommand{\cC}{\mathcal{C}}
 
\newcommand{\FullEval}{\ensuremath{\mathsf{FullEval}}\xspace}
\newcommand{\Eval}{\ensuremath{\mathsf{Eval}}\xspace}
\newcommand{\Sim}{\ensuremath{\mathsf{Sim}}\xspace}
\newcommand{\Leak}{\ensuremath{\mathsf{Leak}}\xspace}
\newcommand{\abs}[1]{\ensuremath{|#1|}\xspace}
\newcommand{\class}{\ensuremath{\mathcal{C}}\xspace}

\newcommand{\FSS}{{\sf FSS}\xspace}
\newcommand{\Gen}{{\sf Gen}\xspace}
\newcommand{\Share}{{\sf Share}}

\newcommand{\bit}{\{0,1\}}

\newcommand{\secpar}{{\lambda}\xspace}
\newcommand{\getsr}{\stackrel{{}_\$}{\leftarrow}}
\newcommand{\Dist}{\ensuremath{\mathcal{D}}\xspace}
\newcommand{\UniDist}{\ensuremath{\mathcal{U}}\xspace}

\newcommand{\suchthat}{\ensuremath{\phantom{i}|\phantom{i}}\xspace}
\newcommand{\Adv}{\ensuremath{\mathcal{A}}\xspace}
\newcommand{\bias}{\ensuremath{\mathsf{bias}}\xspace}
\newcommand{\poly}{{\mathsf{poly}}}

\newcommand{\HW}[1]{\ensuremath{\text{wt}(#1)}\xspace}

\newcommand{\indist}{\stackrel{\rm c}{\approx}}

\newcommand{\Trans}{\ensuremath{{\intercal}}\xspace}
\newcommand{\QACSD}{\ensuremath{\mathsf{QA}\text{-}\mathsf{SD}}\xspace}
\newcommand{\QASD}{\ensuremath{\mathsf{QA}\text{-}\mathsf{SD}}\xspace}
\newcommand{\QCSD}{\ensuremath{\mathsf{QC}\text{-}\mathsf{SD}}\xspace}
\newcommand{\Expect}{\mathbb{E}}

\newcommand{\mindist}{\ensuremath{\mathsf{d}}\xspace}
\newcommand{\wt}[1]{\HW{#1})}

\newcommand{\code}[1]{\mathscr{#1}}

\newcommand{\eqdef}{\stackrel{\textrm{def}}{=}}
\renewcommand{\leq}{\leqslant}
\renewcommand{\le}{\leqslant}
\renewcommand{\geq}{\geqslant}
\renewcommand{\ge}{\geqslant}
\newcommand{\map}[4]{\left\{
    \begin{array}{ccc}
      #1 & \longrightarrow & #2\\
      #3 & \longmapsto & #4
    \end{array}\right.
}
\newcommand{\ie}{{\em i.e. }}

\newcommand{\eps}{\varepsilon}
\newcommand{\Prob}{\mathbb{P}}
\newcommand{\adv}{Adv}

\newcommand{\distrib}{\psi}
\newcommand{\sample}{\leftarrow}
\newcommand{\ffd}[1]{\mathcal{F}_{#1}}

\newcommand{\Ber}{{\rm Ber}}

\newcommand{\fract}[2]{\hbox{\leavevmode
\kern.1em \raise .5ex \hbox{\the\scriptfont0 $#1$}\kern-.1em }/
\hbox{\kern-.15em \lower .25ex \hbox{\the\scriptfont0 $#2$}}
}

\newcommand{\LPN}{\ensuremath{\mathsf{LPN}}\xspace}
\newcommand{\VDLPN}{\ensuremath{\mathsf{VDLPN}}\xspace}
\newcommand{\EALPN}{\ensuremath{\mathsf{EALPN}}\xspace}

\newcommand{\SD}{\ensuremath{{\mathsf{SD}}}\xspace}
\newcommand{\FFDP}{\ensuremath{{\mathsf{FF\mhyphen DP}}}\xspace}
\newcommand{\RLWE}{\textsf{Ring-}\ensuremath{\mathsf{LWE}}\xspace}
\newcommand{\LWE}{\ensuremath{\mathsf{LWE}}\xspace}
\newcommand{\OLE}{\ensuremath{{\mathsf{OLE}}}\xspace}
\newcommand{\OLEs}{\ensuremath{{\mathsf{OLE}}}'s\xspace}
\newcommand{\OT}{\ensuremath{{\mathsf{OT}}}\xspace}
\newcommand{\OTs}{\ensuremath{{\mathsf{OT}}}'s\xspace}
\newcommand{\DOOM}{\ensuremath{{\mathsf{DOOM}}}\xspace}

\newcommand{\BIKE}{{\rm BIKE}\xspace}
\newcommand{\HQC}{{\rm HQC}\xspace}

 \author{\iftoggle{longversion}{Maxime Bombar\inst{1,2}\and Geoffroy Couteau\inst{3,4}\and Alain
      Couvreur\inst{2,1}\and Clément Ducros\inst{4,2}}{Maxime
      Bombar\inst{1,2}\orcidID{0000-0001-9081-6094}\textsuperscript{({\tiny\Letter})
      } \and Geoffroy Couteau\inst{3,4}\orcidID{0000-0002-6645-0106}
      \and Alain Couvreur\inst{2,1}\orcidID{0000-0003-4554-6720} \and
      Clément Ducros\inst{4,2}} \thanks{This work was funded by the
      French Agence Nationale de la Recherche through the France 2030
      ANR Projects ANR-22-PECY-003 SecureCompute and ANR-22-PETQ-0008
      PQ-TLS, the ANR BARRACUDA (ANR-21-CE39-0009-BARRACUDA), the ANR
      SCENE (ANR-20-CE39-0001-SCENE), and by the DIM RFSI through the
      project LICENCED. }}

  \institute{Laboratoire LIX, \'Ecole Polytechnique, Institut
    Polytechnique de Paris, 1 rue Honor\'e d'Estienne d'Orves, 91120
    {\sc Palaiseau Cedex} \and INRIA \and CNRS \and IRIF, Université
    Paris Cité \\
    \email{\{maxime.bombar, alain.couvreur\}@inria.fr}\\
    \email{\{couteau, cducros\}@irif.fr}
  }

\iftoggle{longversion}{
\usepackage{geometry}
\geometry{
  a4paper,         textwidth=15cm,  textheight=24cm, heightrounded,   hratio=1:1,      vratio=2:3,      }}

\begin{document}

\makeatletter
\renewcommand*\l@author[2]{}
\renewcommand*\l@title[2]{}
\makeatother

\title{Correlated Pseudorandomness from the Hardness of Quasi-Abelian Decoding}
\iftoggle{longversion}{\date{\today}}{\date{}}

\maketitle

\begin{abstract}
  \iftoggle{longversion}{Secure computation often benefits from the
      use of correlated randomness to achieve fast, non-cryptographic
      online protocols. }{}A recent paradigm put forth by Boyle
    \emph{et al.} (CCS 2018, Crypto 2019) showed how
    \emph{pseudorandom correlation generators} (\PCG) can be used to
    generate large amounts of useful forms of correlated
    (pseudo)randomness, using minimal interactions followed solely by
    local computations, yielding \emph{silent} secure two-party
    computation protocols\iftoggle{longversion}{ (protocols where the
        preprocessing phase requires almost no communication).
        Furthermore, \emph{programmable} \PCGs can be used similarly
        to generate multiparty correlated randomness to be used in
        silent secure $\textup{N}$-party protocols.}{. This can be
        extended to $\textup{N}$-party using \emph{programmable}
        \PCGs.} Previous works constructed very efficient
      (non-programmable) \PCGs for correlations such as random
      oblivious transfers. However, the situation is less satisfying
      \iftoggle{longversion}{for the case of \emph{random oblivious
            linear evaluation} (\OLE), which generalises oblivious
          transfers over large fields, and are a core resource for
          secure computation of arithmetic circuits.}{for \emph{random
            oblivious linear evaluations} (\OLEs), their
          generalisation over large fields.} The state-of-the-art work
        of Boyle \emph{et al.} (Crypto 2020) constructed programmable
        \PCGs for \OLE, but their work suffers from two important
        downsides: (1) it only generates \OLEs over \emph{large
          fields}, and (2) it relies on a relatively new
        ``splittable'' ring-\LPN assumption, which lacks strong
        security foundations.

        In this work, \iftoggle{longversion}{we construct new
            programmable \PCGs for the \OLE correlation, that overcome
            both limitations. To this end,}{} we introduce the
          \emph{quasi-abelian syndrome decoding problem} (\QASD), a
          family of assumptions which generalises the well-established
          quasi-cyclic syndrome decoding
          assumption\iftoggle{longversion}{. Building upon \QASD, we
              construct new programmable \PCGs for \OLEs over any
              field $\FF_q$ with $q > 2$.}{ and allows to construct
              new programmable \PCGs for \OLE over any field $\FF_{q}$
              with $q>2$.} Our analysis also sheds light on the security of the ring-\LPN
        assumption used in Boyle \emph{et al.} (Crypto 2020). Using
        our new \PCGs, we obtain the first efficient
        $\textup{N}$-party silent secure computation protocols for
        computing general arithmetic circuit over $\FF_q$ for any
        $q > 2$.
\end{abstract}

\begin{keywords}
  Pseudorandom correlation generators, oblivious linear evaluation,
  quasi-abelian codes, silent secure computation
\end{keywords}

\iftoggle{longversion}{
\setcounter{tocdepth}{2}
  \tableofcontents}{}

\section{Introduction}
\label{sec:introduction}

Correlated randomness is a powerful resource in secure computation.
Following the seminal work of Beaver~\cite{C:Beaver91b}, many lightweight,
concretely efficient secure computation protocols have been designed
in a model where the parties have access to long trusted correlated
random strings: $\Omega(n)$-length instances of a simple correlation
enable securely computing circuits with $n$ gates. Depending on the
setting, various correlations are used: for example, oblivious
transfer (\OT) correlations are used for two-party (semi-honest) secure
computation of Boolean circuits, and oblivious linear evaluation (\OLE{})
correlations, which generalize \OT over arbitrary fields, enable
2-party semi-honest secure computation of arithmetic circuits.
Eventually, $n$-party Beaver triples enable $n$-party semi-honest
secure computation of arithmetic circuits, and authenticated Beaver
triples enable maliciously secure computation of arithmetic circuits.

Since protocols in the correlated randomness paradigm are lightweight
and very efficient, they gave rise to a popular, two-stage approach:
first, the parties run an input-independent \emph{preprocessing
  phase}, which securely generates and distributes the correlated
strings, and second, these strings are consumed by an \emph{online}
protocol. Traditional approaches for implementing the preprocessing
phase had $\Omega(n)$ communication~\cite{C:IKNP03,C:DPSZ12,EC:KelPasRot18} and formed the efficiency
bottleneck of the overall protocol. The situation changed recently
with a new approach, introduced in~\cite{CCS:BCGIO17,CCS:BCGI18,C:BCGIKS19} and further refined in many
subsequent works~\cite{CCS:BCGIKRS19,CCS:SGRR19,C:BCGIKS20,FOCS:BCGIKS20,CCS:YWLZW20,C:CouRinRag21,C:BCGIKRS22}, with appealing efficiency features such as a one-time, $o(n)$-communication phase followed solely by local computation. At the heart of this approach is the notion of \emph{pseudorandom
  correlation generators} (\PCGs{}). Informally, a \PCG{} has two
algorithms: $\Gen(1^\secpar)$ outputs two \emph{short correlated keys}
$(\key_0,\key_1)$, and $R_\sigma \gets \Expand(\key_\sigma)$ stretches
$\key_\sigma$ into a long string $R_\sigma$, such that $(R_0, R_1)$ is
a pseudorandom instance of a target correlation. \PCGs{} enable an
efficient, two-stage \emph{silent} preprocessing phase:
\begin{enumerate}
  \item First, the parties securely distribute the short \PCG{} seeds,
        using a small amount of work and communication (often
        independent of the circuit size).
  \item Second, the parties locally stretch the \PCGs{} into long
        correlated pseudorandom strings: this part is the bulk of the
        computation, but does not require any further communication
        among the parties.
\end{enumerate}
This is the model of secure computation with \emph{silent
  preprocessing} (or \emph{silent secure computation} in short), where most of the preprocessing phase is pushed
offline. Previous works gave efficient constructions of \PCGs{} for
various correlations such as \OTs{}~\cite{CCS:BCGIKRS19,CCS:SGRR19,C:CouRinRag21,C:BCGIKRS22}, vector-\OLE{}~\cite{CCS:BCGI18}, \OLEs{}
over large fields~\cite{C:BCGIKS20}, authenticated Beaver triples~\cite{C:BCGIKS20} and many more. These \PCGs{} all build upon a common template, which combines
function secret sharing (\FSS{}) for simple function classes with
suitable variants of the syndrome decoding assumption.

\subsection{\PCGs{}: State of the Art and Challenges}

Very efficient constructions of \PCGs{} for the \OT{} correlations have been proposed~\cite{CCS:BCGIKRS19,CCS:SGRR19,C:CouRinRag21,C:BCGIKRS22}. The most recent constructions (see~\cite{C:CouRinRag21,C:BCGIKRS22}) allow to generate millions of random \OTs{} per second on one core of a standard laptop. Combined with the GMW protocol, they effectively enable extremely efficient two-party secure computation of Boolean circuits in the semi-honest model, with minimal communication in the preprocessing phase (a few dozen of kilobytes, independent of the circuit size), followed by cheap local computation, and a fast online phase (exchanging four bits per AND gate).

The situation, however, is much less satisfactory in essentially all other standard settings of secure computation, where the \OT{} correlation is not the best choice of correlation\footnote{While the \OT{} correlation is complete even for $N$-party malicious secure computation of arithmetic circuits, its use induces large overheads in the online phase: an $\Omega(N^2)$ communication overhead for handling $N$ parties, an $\Omega(\log^2|\FF|)$ overhead for handling larger fields $\FF$, and an $\Omega(\secpar)$ overhead for handling malicious parties. In contrast, other choices of correlated randomness can avoid each of these overheads.}, and one of the major open problems in this line of work is to improve this state of affair. Concretely, when targeting any one of \emph{multiparty} computation (with $N > 2$ parties), \emph{arithmetic} computation (for arithmetic circuits over a field $\F{}$ of size $|\F{}| > 2$), or \emph{malicious} security, the best-known \PCG{}-based solutions lag way behind the state of the art for 2-party, semi-honest secure computation of Boolean circuits. At a high level, the problem is twofold:
\begin{itemize}
  \item Secure computation of arithmetic circuits requires the \OLE{} correlation rather than the \OT{} correlation, and the constructions of~\cite{CCS:BCGIKRS19,CCS:SGRR19,C:CouRinRag21,C:BCGIKRS22} are inherently limited to the \OT{} correlation. To handle \OLE{}, a fundamentally different approach is required.
  \item Additionally, handling $N> 2$ parties or achieving malicious security both require the underlying \PCG{} for \OLE{} (or \OT{}) to satisfy a property known as \emph{programmability} (at a high level, programmability allows both to generate $N$-party correlations from $O(N^2)$ 2-party correlations, which is required because all known \PCGs{} are inherently restricted to the 2-party setting, and to \emph{authenticate} 2-party correlations with a MAC, which is needed for malicious security). Unfortunately, the constructions of~\cite{CCS:BCGIKRS19,CCS:SGRR19,C:CouRinRag21,C:BCGIKRS22} cannot (by design) achieve programmability.
\end{itemize}

These two limitations were addressed in the recent work of~\cite{C:BCGIKS20}, which introduced the first (reasonably efficient) construction of programmable \PCG{} for the \OLE{} correlation. While not as efficient as the best known \PCGs{} for \OT{}, it can produce around $10^5$ \OLEs{} per second on a standard laptop. However, the result of~\cite{C:BCGIKS20} suffers from two important downsides:
\begin{itemize}
  \item it can only produce \OLEs{} over \emph{large enough fields} (concretely, the field size must be larger than the circuit size). This leaves open the question of designing efficient programmable \PCGs{} for \OLE{} over small fields.
\item it relies on a relatively new \emph{ring-\LPN{} with splittable polynomial} assumption which states, in essence, that $(a, as+e)$ is hard to distinguish from $(a,b)$, where $a,b$ are random polynomials from a ring $\Ring = \Fp[X]/(P(X))$ where $P$ splits into $\deg(P)$ linear factors, and $s,e$ are random \emph{sparse} polynomials from $\Ring$. The ring-\LPN{} assumption was introduced a decade ago in~\cite{FSE:HKLPP12} to build efficient authentication protocols, and it has received some attention from the cryptography community~\cite{bernstein2012never,EPRINT:DamPar12,CANS:LipPav15,guo2015new,C:BCGIKS20,C:BomCouDeb22}. However, so far, we lack both a principled understanding of \emph{which} choice of the underlying polynomial $P$ yield solid instances (beyond the observation that reducible polynomials seem to enable more efficient attacks~\cite{guo2015new,C:BCGIKS20}), and a general methodology to argue the security of ring-\LPN{} assumptions.
\end{itemize}

At a high level, the construction of \PCG{} for \OLE{} from~\cite{C:BCGIKS20} proceeds by generating a single large pseudorandom \OLE{} correlation over a polynomial ring $\Ring = \Fp[X]/(P(X))$, assuming the hardness of the ring-\LPN{} assumption over $\Ring$. When $P$ splits into $N = \deg(P)$ linear factors, the Chinese Remainder Theorem permits to convert this large \OLE{} correlation over $\Ring$ into $N$ \OLE{} correlations over $\Fp$ (by reducing it modulo each of the factors of $P$). Note that the condition that $P$ splits requires $|\Fp| \geq N$, hence the restriction to large fields. Because the ring-\LPN{} assumption with a splittable polynomial is relatively new, the authors also provided a broad overview of its security against standard attacks and provided an ad-hoc analysis of the relation between the choice of the polynomial $P$ and the security strength of this assumption.

\subsection{Our Contributions}

In this work, we put forth and analyze a new general family of cryptographic assumptions related to the hardness of decoding codes defined over group algebras. A problem called \emph{quasi-abelian syndrome decoding} (\QACSD). Our family of assumptions builds upon quasi-abelian codes, a well-known family of codes in algebraic coding theory. It generalizes both the ring-\LPN{} assumption from~\cite{C:BCGIKS20} under some conditions on the underlying choice of polynomial and the quasi-cyclic syndrome decoding assumption.
The latter assumption was in particular used in several recent works~\cite{aragon2017bike,aguilar2018efficient,melchor2018hamming,CCS:BCGIKRS19}, including prominent submissions to the NIST post-quantum competition.
We show that working over group algebras presents several advantages:
\begin{enumerate}
\item a broad family of possible instantiations;
\item a rich structure that allows stronger security foundations;
\item a group algebra contains a canonical basis given by
  the group itself, providing a canonical notion of sparsity.
\end{enumerate}
Building on our new family of assumptions, we overcome both downsides of the recent work of~\cite{C:BCGIKS20} and obtain \PCGs{} for \OLEs{} over \emph{general fields} with \emph{solid security foundations}. In more details:

\subsubsection{A Template for Building New \PCGs{}.} We revisit and generalize the approach of~\cite{C:BCGIKS20} for building pseudorandom correlation generators for \OLE{} from ring-\LPN{}. We show that any choice of quasi-abelian code yields a \PCG{} for \OLE{} over a group algebra $\Ring$ under the corresponding \QACSD assumption. We identify natural instances of our framework such that the group algebra $\Ring$:
\begin{enumerate}
  \item supports fast operations via generalizations of the Fast Fourier Transform (which allows to achieve efficiency comparable to that of~\cite{C:BCGIKS20}), and
  \item is isomorphic to a product $\Fq \times \cdots \times \Fq$ of
        $N$ copies of $\Fq$ for arbitrary small $q>2$ and arbitrary
        large $N$ and therefore yields an efficient \PCG{} for
        generating $N$ copies of \OLE{} over $\FF_q$ for any $q> 2$.
\end{enumerate}
Therefore, we obtain new constructions of efficient programmable
\PCG{} over small fields, circumventing the main limitation of the
work of~\cite{C:BCGIKS20}. Our \PCGs{} enable for the first time
secure computation of arithmetic circuits over fields $\F{}$ of any
size $|\F{}|>2$ in the silent preprocessing model. This holds for two
or more parties, in the semi-honest or in the malicious setting. The
concrete efficiency of our construction is comparable to that
of~\cite{C:BCGIKS20} (we refer the reader to Table~\ref{tab:param} for
details on the seed size and stretch of our \PCGs{}). Concretely, our
costs are essentially identical, up to the fact that~\cite{C:BCGIKS20}
uses FFT's over cyclotomic rings, while our generalization to
arbitrary field relies on a generic FFT. Because FFT's over cyclotomic
rings have been thoroughly optimized in hundreds of papers, we expect
that using generic FFT's will be noticeably slower. Still, we identify
some concrete FFT-friendly choices of quasi-abelian codes where fast
FFT algorithms comparable to cyclotomic FFT's could in principle be
designed. We leave the concrete optimization of these FFT algorithms
to future work.

\subsubsection{Strong Security Foundations.} Building upon recent
results on the minimum distance of quasi-abelian codes, we give
evidence that the assumptions from our family cannot be broken by any
attack from the \emph{linear test
  framework}~\cite{FOCS:BCGIKS20,C:CouRinRag21}, a broad framework
that encompasses essentially all known attacks on \LPN{} and syndrome
decoding (including ISD, Gaussian elimination, BKW, and many more).
Our approach also sheds light on the security of the ring-\LPN{}
assumption. In essence, a conceptual message from our new approach is
that some choices of $P$ in the ring $\FF_q[X]/(P(X))$ yield an
instance of \QASD, and as such inherit our arguments of resistance
against linear attacks. In contrast, other (seemingly very similar)
choices of $P$ yield instances that are \emph{completely broken} by
linear attacks. This suggests that choosing instantiations of the
ring-LPN assumption should be done with care, and our framework yields
a way to do it with strong security guarantees.

\iftoggle{longversion}{As a contribution of independent interest,}{In a long version of the present article available on eprint,} we
  also complement our security analysis by showing, for all concrete
  instantiations of our framework that we use in our new \PCG{}
  constructions, a search-to-decision reduction for the underlying
  assumption. Therefore, we reduce the security of all our new \PCGs{}
  to (instances of) the \emph{search} $\QASD$ assumption.

  \subsubsection{The Case of $\FF_2$.} Perhaps intriguingly, the most
  natural way to instantiate our framework goes all the way to
  $\F{3}$, but breaks down over $\F{2}$. We prove a theorem that
  states that this is in fact inherent to the approach. Basically, the
  reason why the construction is not adaptable to $\F{2}$ is due to
  the fact that the product ring
  $\F{2}^N = \F{2} \times \cdots \times \F{2}$ has only one invertible
  element and hence can never be realised as a group algebra but in
  the irrelevant case of $N=1$. \iftoggle{longversion}{We then
      discuss}{In a long version of the present article available on
      eprint, we further discuss} a general methodology toward
    circumventing this limitation over $\F{2}$. While our approach
    falls short of providing a full-fledged solution, it highlights a
    possible avenue towards the intriguing goal of one day getting an
    efficient programmable \PCG{} for \OLEs{} over $\FF_2$.

\subsubsection{Applications.} Building upon our new programmable \PCGs{}, we obtain
\begin{itemize}
  \item (via Beaver triples) secure $N$-party computation of arithmetic circuits over $\FF_q$, for any $q > 2$, with silent preprocessing and communication $N^2\cdot \poly(\secpar)\cdot \log s$ bits (preprocessing phase) plus $2Ns$ field elements (online phase), where $s$ is the number of multiplication gates. The silent preprocessing phase involves $O(N\poly(\secpar)s\log s)$ work per party. For small numbers of parties, the $N^2\cdot \poly(\secpar)\cdot \log s$ is dominated by the $2Ns$ field elements for values of $s$ as low as $2^{25}$.
  \item (via circuit-dependent correlated randomness) secure $N$-party computation of a batch of $T$ arithmetic circuits over $\FF_q$, for any $q>2$, with silent preprocessing and communication $N^2\cdot \poly(\secpar)\cdot s \log T$ bits (preprocessing phase) plus $NTs$ field elements (online phase) , where $s$ is the number of multiplication gates in each circuit. The silent preprocessing phase involves $O(N\poly(\secpar)s T\log T)$ work per party.
\end{itemize}

As in~\cite{C:BCGIKS20}, our protocols extend to the malicious setting by generating \emph{authenticated} correlated randomness instead, which our \PCGs{} allow as well, and using a maliciously secure seed distribution protocol. Since the extension to authenticated correlated randomness and the seed distribution protocols in~\cite{C:BCGIKS20} are oblivious to the concrete choice of underlying ring $\Ring$, they directly apply to our new \PCGs{} from \QASD.

\subsection{Related Works}

Traditional constructions of \OLE{} protocols require communication for each \OLE{} produced. The work of~\cite{C:Gilboa99} requires $\Omega(\log |\FF|)$ string-\OTs{} per \OLE{}\footnote{This approach crucially requires structured \OTs{}, hence we cannot remove the communication by using pseudorandom \OTs{}.}. \OLEs{} can also be produced using state-of-the-art protocols based on homomorphic encryption~\cite{EC:KelPasRot18,CCS:HIMV19}, {\em e.g.} producing 64MB worth of \OLEs{} requires about 2GB of communication with Overdrive~\cite{EC:KelPasRot18}. A recent direct construction of \OLE{} from \RLWE{} has also been described in~\cite{SCN:BEPST20}. Using their construction, generating a batch of \OLEs{} has an amortized communication of about $8$ elements of $\FF$ over a large enough field.

\PCGs{} for \OLEs{} allow removing most of the communication overhead, by generating a large number of pseudorandom \OLEs{} using sublinear communication. The work of~\cite{C:BCGIKS20}, which is our starting point, has a computational cost comparable to that of recent \OLE{} protocols~\cite{EC:KelPasRot18}, but a considerably lower communication ; however, it only works over large fields. There has been several attempts to build \PCGs{} for \OLEs{} over small fields, but all suffer from severe downsides. The work of~\cite{C:BCGIKS19} describes a \PCG{} construction that combines BGV-based somewhat homomorphic encryption (under ring-LWE) and a new, ad-hoc variant of the multivariate quadratic assumption with sparse secrets. Their \PCG{}'s require very large seed sizes and are only efficient when generating huge batches (\cite{C:BCGIKS19} estimates about 7.000 \OLEs{} per second using a 3GB seed size when producing 17GB worth of triples).

In an appendix, the work of~\cite{C:BCGIKS20} shows that the standard
variant of syndrome decoding with quasi-cyclic code yields a \PCG{}
for \OLEs{} over arbitrary fields (including small fields). At a high
level, the construction uses the fact that given two pseudorandom
vectors $\vec x^\Trans = \bfH\cdot \vec e^\Trans_x$ and
$\vec y = \bfH\cdot \vec e^\Trans_y$, generating shares of their
pointwise products (\emph{i.e.} a batch of pseudorandom \OLE{}
correlations) reduces to generating shares of the diagonal of
$\vec x^\Trans \cdot \vec y = \bfH\cdot (\vec e^\Trans_x\cdot \vec e_y) \cdot \bfH^\Trans$,
and the term $(\vec e^\Trans_x\cdot \vec e_y)$ can be shared
efficiently with \FSS{} for point functions. However, the
computational cost of generating $n$ \OLEs{} this way scales as
$\Omega(n^2\log n)$ (ignoring $\poly(\secpar)$ factors), which makes
it entirely impractical in practice (the sublinearity in these
protocols only ``kicks in'' for values of $n$ above about $2^{30}$).

Eventually, two recent works on \PCGs{}~\cite{FOCS:BCGIKS20,C:BCGIKRS22} have introduced new variants of syndrome
decoding called respectively \emph{variable-density} and
\emph{expand-accumulate} \LPN{}. Each of these variants can actually be
used to construct programmable \PCGs{} for \OLE{} over small fields (though that was not their primary purpose: \VDLPN{} was introduced to construct
pseudorandom correlation \emph{functions}, and \EALPN{} to obtain more
efficient ``online-offline'' \PCGs{} for \OT{}). The intuition is that both assumptions can be formulated as the hardness of distinguishing $\bfH\cdot \vec e^\Trans$ from random, where $\bfH$ is a \emph{sparse} matrix, and the noise distribution is such that the term $(\vec e^\Trans_x\cdot \vec e_y)$ can still be shared efficiently using some appropriate \FSS{}. In this case, extracting the diagonal of $\bfH\cdot (\vec e^\Trans_x\cdot \vec e_y) \cdot \bfH^\Trans$ does not require computing the full square matrix, and scales only as $\poly(\secpar)\cdot \tilde{\Omega}(n)$. However, the hidden costs remain prohibitively large. Concretely, for both the \EALPN{} assumption and the \VDLPN{} assumption, the row-weight of $\bfH$ must grow as $\secpar \cdot \log n$~\cite{FOCS:BCGIKS20,C:BCGIKRS22,PKC:CouDuc23} (for some specific security parameter $\secpar$), hence the cost of generating $n$ \OLEs{} boils down to $\secpar^2\cdot \log^2 n$ invocations of an \FSS{} scheme, where the concrete security parameter $\secpar$ must be quite large: the recent analysis of~\cite{PKC:CouDuc23} estimates $\secpar \approx 350$. For $n = 2^{30}$, this translates to around $10^8$ invocation of an \FSS{} scheme for \emph{each} \OLE{} produced, which is nowhere near practical.

\subsection{Organization}

We provide a technical overview of our results in
Section~\ref{sec:overview}, and preliminaries in
Section~\ref{sec:preliminaries}. Section~\ref{sec:qac} is devoted to
introducing group algebras, quasi-abelian codes, and our new \QASD
family of assumptions. Section~\ref{sec:pcg} uses our new \QASD assumption to build
programmable \PCGs{}, adapting and generalising the template
of~\cite{C:BCGIKS20}.
\iftoggle{longversion}{Section~\ref{sec:cryptanalysis} covers the
    concrete security analysis of \QASD against various known attacks,
    and in particular against \emph{folding attacks}, which exploit
    the structure of the assumption to reduce the dimension of the
    instances. Finally, in Section~\ref{sec:applications} we elaborate
    on the applications of our new \PCGs{} to secure computation.
    Appendix~\ref{app:prelims} provides more detailed preliminaries on
    \FSS and \PCGs. Appendix~\ref{app:std} complements our study of
    \QASD by providing a search-to-decision reduction for the subset
    of the \QASD family used to construct our \PCGs.
    Appendix~\ref{sec:ntff} provides some background on function field
    theory, which is used in the analysis of some of our results.
    Appendix~\ref{sec:carlitz} adds background on the Carlitz module,
    which is at the heart of our (ultimately unsuccessful) attempt to
    extend our framework to \OLEs over $\FF_2$. Appendix~\ref{sec:f2}
    covers our approach for building \OLEs over $\FF_2$ and identifies
    the missing ingredient.}{In a longer version available on eprint
    we complement our study of \QASD by providing a search-to-decision
    reduction for the subset of the \QASD used to construct our \PCGs.
    We also further analyse the concrete security of \QASD against
    various known attacks and in particular against \emph{folding
      attacks} which exploit the structure of the assumption to reduce
    the dimension of the instances. Finally, in this longer version we
    also explore a potential way to extend our framework to the case
    of $\FF_{2}$, using number theory in function fields and the
    notion of Carlitz modules.}

\section{Technical Overview}
\label{sec:overview}

\subsection{Generating Pseudorandom Correlations: a Template}

A general template to construct \PCGs{} was put forth
in~\cite{CCS:BCGI18}, and further refined in subsequent works. At a
high level, the template combines two ingredients: a method that uses
\emph{function secret sharing} to generate a \emph{sparse} version of
the target correlation, and a carefully chosen linear code for which
the syndrome decoding problem is conjectured to be intractable. To
give a concrete example let us consider the task of generating an
\OLE{} correlation over a large polynomial ring $\Ring = \Fp[X]/(P)$,
where $P$ is some degree-$N$ split polynomial, and $\Fp$ is a field. In a
ring-\OLE{} correlation, each party $P_\sigma$ receives
$(x_\sigma,y_\sigma)\in\Ring^2$ for $\sigma=0,1$, which are random
conditioned on $x_0+x_1 = y_0\cdot y_1$.

\paragraph{Sparse correlations from \FSS{}.} Informally, \FSS{} for a
function class $\cF$ allows to share functions
$f:\bit^\ell \mapsto \GG$ (where $\GG$ is some group) from $\cF$ into
$(f_0,f_1) \gets \Share(f)$ such that
\begin{enumerate}[label=(\arabic*)]
\item $f_\sigma$ hides $f$ (computationally), and
\item for any
$x\in\bit^\ell$, $f_0(x) + f_1(x) = f(x)$.
\end{enumerate}
Since \FSS{} can always be achieved trivially by sharing the truth
table of $f$, one typically wants the shares to be compact
(\emph{i.e.} not much larger than the description of $f$). Efficient
constructions of \FSS{} from a length-doubling pseudorandom generator
are known for some simple function classes, such as \emph{point
  functions} (functions $f_{\alpha,\beta}$ that evaluate to $\beta$ on
$x= \alpha$, and to $0$ otherwise). \FSS{} for point functions can be
seen as a succinct way to privately share a long unit vector. More
generally, \FSS{} for $t$-point functions yield a succinct protocol
for privately sharing a long $t$-sparse vector.

An \FSS{} for multipoint functions immediately gives a strategy to
succinctly distribute a \emph{sparse} ring-\OLE{} correlation: sample
two random $t$-sparse polynomials $y_0, y_1$ (\emph{i.e.} polynomials
with $t$ nonzero coefficients in the standard basis), and define $f$
to be the $t^2$-point function whose truth table are the coefficients
of $y_0\cdot y_1$ (over $\Fp[X]$). Each party $P_\sigma$ receives
$\key_\sigma = (y_\sigma, f_\sigma)$, where $(f_0, f_1) = \Share(f)$.
With standard constructions of multipoint \FSS{}, the size of
$\key_\sigma$ is $O(t^2\cdot \log N)$ (ignoring $\secpar$ and $\log p$
terms): whenever $t$ is small, this is an exponential improvement over
directly sharing $y_0\cdot y_1$ (which would yield keys of length
$O(N)$).

\paragraph{From sparse to pseudorandom using syndrome decoding.} It
remains to convert the sparse correlation into a pseudorandom
correlation. This step is done non-interactively, by locally
\emph{compressing} the sparse correlation using a suitable linear
mapping. Viewing the compressed vector as the syndrome of a linear
code $\code{C}$ (the compressive linear mapping
is the parity-check matrix of $\code{C}$). The
mapping must satisfy two constraints: it should be \emph{efficient}
(linear or quasi-linear in its input size), and its output on a sparse
vector should be \emph{pseudorandom}. Fortunately, decades of research
on coding theory have provided us with many linear mappings which are
conjectured to satisfy the latter; the corresponding assumptions are
usually referred to as (variants of) the \emph{syndrome decoding}
(\SD{}) assumption, or as (variants of) the \emph{learning parity with
  noise} (\LPN{}) assumption\footnote{The name \LPN{} historically
  refers to the hardness of distinguishing oracle access to samples
  $(\vec a, \langle\vec a, \vec s\rangle + e)$ (for a fixed secret
  $\vec s$) from samples $(\vec a, b)$ where $\vec a, \vec s$ are
  random vectors, $e$ is a biased random bit, and b is a uniform
  random bit. This becomes equivalent to the syndrome decoding
  assumption when the number of calls to the oracle is \emph{a priori
    bounded}, hence the slight abuse of terminology. Since we will
  mostly use tools and results from coding theory in this work, we
  will use the standard coding theoretic terminology ``syndrome
  decoding'' to refer to the variant with bounded oracle access, which
  is the one used in all works on \PCGs{}.}.

Going back to our example, we will use two instances
$(x^0_\sigma, y^0_\sigma)_{\sigma\in\bit}$ and
$(x^1_\sigma, y^1_\sigma)_{\sigma\in\bit}$ of a sparse ring-\OLE{}
correlation. Fix a random element $a\getsr \Ring$. Each party
$P_\sigma$ defines
\iftoggle{longversion}{\[}{\(}
  y_\sigma \gets (1, \vec a) \cdot (y^0_\sigma, y^1_\sigma)^\intercal = y^0_\sigma + a\cdot y^1_\sigma \bmod P(X).
\iftoggle{longversion}{\]}{\)}
The assumption that $y_\sigma$ is indistinguishable from random is
known in the literature as the \emph{ring-\LPN{} assumption}, and has
been studied in several previous works~\cite{C:BCGIKS20} (for an
appropriate choice of $P$, it is also equivalent to the quasi-cyclic
syndrome decoding assumption, used in NIST submissions such as
BIKE~\cite{aragon2017bike} and HQC~\cite{melchor2018hamming}).
Furthermore, using FFT, the mapping can be computed in time
$\tilde{O}(N)$. Then, observe that we have
\[y_0y_1 = (y^0_0 + a\cdot y^1_0)\cdot (y^0_1 + a\cdot y^1_1) = y_0^0\cdot y^0_1 + a\cdot (y^0_1\cdot y^1_0 + y^1_0\cdot y^1_1) + a^2\cdot (y^1_0\cdot y^1_1),\]
where the polynomials
$y_0^0\cdot y^0_1, y^0_1\cdot y^1_0, y^1_0\cdot y^1_1,$ and
$y^1_0\cdot y^1_1$ are all $t^2$-sparse. Hence, each of these four
polynomials can be succinctly shared using \FSS{} for a $t^2$-point
function. Therefore, shares of $y_0y_1$ can be reconstructed using a
local linear combination of shares of sparse polynomials, which can be
distributed succinctly using \FSS{} for multipoint functions.

\paragraph{Wrapping up.} The final \PCG{} looks as follows: each party
$P_\sigma$ gets $(y^0_\sigma, y^1_\sigma)$ together with four \FSS{}
shares of $t^2$-point functions whose domain correspond to these four
terms. The \PCG{} key size scales as $O(t^2\log N)$ overall. Expanding
the keys amounts to locally computing the shares of the sparse
polynomial products (four evaluations of the \FSS{} on their entire
domain, in time $O(N)$) and a few $\tilde{O}(N)$-time polynomial
multiplications with $a$ and $a^2$ (which are public parameters).
Observe that when $P$ splits into $N$ linear factors over $\Fp[X]$, a
single pseudorandom ring-\OLE{} correlation as above can be locally
transformed into $N$ instances of pseudorandom \OLEs{} over $\Fp$:
this is essentially the construction of \PCG{} for \OLE{}
of~\cite{C:BCGIKS20}. However, this requires $p$ to be larger than
$N$, restricting the construction to generating \OLEs{} over large
fields. Furthermore, the requirement of a splitting $P$ makes the
construction rely on a less-studied variant of ring-\LPN{}.

\subsection{Quasi-Abelian Codes to the Rescue}

We start by abstracting out the requirement of the construction
of~\cite{C:BCGIKS20}. In coding theoretic terms, the hardness of
distinguishing $(a,a\cdot e + f)$ with sparse $(e,f)$ is an instance
of the (decisional) \emph{syndrome decoding problem} with respect to a
code with parity check matrix $(1,a)$. At a high level, and sticking
to the coding-theoretic terminology, we need a ring $\Ring$ such that
\begin{enumerate}
  \item the (decisional) syndrome decoding problem with respect to the
        matrix $(1,a)$ is intractable with high probability over the
        random choice of $a\getsr\Ring$;
  \item given \emph{sparse} elements $(e,f)$ of $\Ring$, it is
        possible to succinctly share the element $e\cdot f \in \Ring$;
  \item operations on $\Ring$, such as products, can be computed
        efficiently (\emph{i.e.} in time quasilinear in the
        description length of elements of $\Ring$);
  \item eventually, $\Ring$ is isomorphic to
        $\FF \times \cdots \times \FF$ for some target field $\FF$ of
        interest.
 \end{enumerate}

 We identify \emph{quasi-abelian codes} as a family of codes that
 simultaneously satisfy all the above criteria. At a high level, a
 quasi-abelian code of index $\ell$ has codewords of the form
 \iftoggle{longversion}{
     \[
       \{(\bfm\bfGamma_{1},\dots,\bfm\bfGamma_{\ell}) \mid \bfm = (m_{1},\dots, m_{\ell})\in(\Fq[G])^{k}\},
     \]
   }{$\{(\bfm\bfGamma_{1},\dots,\bfm\bfGamma_{\ell}) \mid \bfm = (m_{1},\dots, m_{\ell})\in(\Fq[G])^{k}\},$}
   where each $\bfGamma_i$ is an element of $\FF_q[G]^k$. Here,
   $\FF_q[G]$ denotes the \emph{group algebra}:
   \iftoggle{longversion}{\[\FF_q[G]\eqdef \left\{ \sum_{g\in G}a_{g}g \mid a_{g}\in \Fq \right\},\]}{
       $\FF_q[G]\eqdef \left\{ \sum_{g\in G}a_{g}g \mid a_{g}\in \Fq \right\}$,}
     where $G$ is a finite abelian group. Quasi-abelian codes
     generalise quasi-cyclic codes in a natural way: a quasi-cyclic
     code is obtained by instantiating $G$ with $\ZZ/n\ZZ$. We define
     the \emph{quasi-abelian syndrome decoding problem} (\QASD) as the
     natural generalisation of the syndrome decoding problem to
     quasi-abelian codes. This encompasses both quasi-cyclic syndrome
     decoding and plain syndrome decoding. The properties of
     quasi-abelian codes have been thoroughly studied in algebraic
     coding theory. We elaborate below on why quasi-abelian codes turn
     out to be precisely the right choice given our constraints 1--4
     above.

     \subsubsection{Security Against Linear Tests.} The linear test
     framework from~\cite{FOCS:BCGIKS20,C:CouRinRag21} provides a
     unified way to study the resistance of \LPN-style and syndrome
     decoding-style assumptions against a wide family of \emph{linear}
     attacks, which includes most known attacks on \LPN and syndrome
     decoding. We refer the reader to
     Section~\ref{subsec:linear_test_framework} for a detailed
     coverage. At a high level, in our setting, security against
     linear attacks boils down to proving that $(1, a)$ generates a
     code with large minimum distance. On one hand, a recent result of
     Fan and Lin~\cite{FL15} proves that quasi-Abelian codes
     asymptotically meet the Gilbert-Varshamov bound when the code
     length goes to infinity and the underlying group is fixed. On the
     other hand, Gaborit and Z\'emor \cite{GZ06} prove a similar
     result when the size of the group goes to infinity but restricted
     to the case where the group is cyclic. We conjecture an extension
     of Gaborit and Zémor result to arbitrary abelian groups. The
     latter conjecture entails that the \QASD problem cannot be broken
     by any attack from the linear test framework, for {any} choice of
     the underlying group $G$. This is the key to circumvent the
     restrictions of~\cite{C:BCGIKS20}.

 \subsubsection{Distribution of Products of Sparse Elements.} Using
 quasi-abelian codes, the ring $\Ring$ is therefore a group algebra
 $\FF_q[G]$. Now, given $e = \sum_{g\in G}e_{g}g$ and
 $f = \sum_{g\in G}f_{g}g$ any two $t$-sparse elements of $\Ring$
 (that is, such that $(e_g)_{g\in G}$ and $(f_g)_{g\in G}$ have
 Hamming weight $t$), the product $e\cdot f$ can be rewritten
 as
 \[
   e\cdot f = \sum_{e_g, f_{h} \neq 0} e_gf_h \cdot
   gh,
 \] which is a $t^2$-sparse element of the group algebra. In other
 words, the product of two sparse elements in a group algebra is
 always a sparse element. In the context of building \PCGs{}, this
 implies that we can directly distribute elements $ef \in \Ring$ using
 Sum of Point Function Secret Sharing ($\SPFSS$) for $t^2$-point
 functions. This allows us to generalise the template \PCG{}
 construction of~\cite{C:BCGIKS20} to the setting of arbitrary
 quasi-abelian code, with essentially the same efficiency (in a sense,
 the template is ``black-box'' in the ring: it only relies on the
 ability to distribute sparse elements via \FSS{}).

 We note that our generalised template differs slightly from the
 approach of~\cite{C:BCGIKS20}: in this work, the authors work over
 rings of the form $\Ring = \FF_p[X]/(P(X))$, where $P$ is some
 polynomial. However, in general, this ring is not a group algebra,
 and the product of sparse elements of $\Ring$ might not be sparse.
 They circumvented this issue by sharing directly the product over
 $\FF_p[X]$ (where the product of sparse polynomials remains sparse)
 and letting the parties reduce locally modulo $P$. Doing so, however,
 introduces a factor 2 overhead in the expansion (and a slight
 overhead in the seed size). Our approach provides a cleaner solution,
 using a structure where sparsity is natively preserved through
 products inside the ring.

 \subsubsection{Fast Operations on Group Algebras.} We observe that,
 by folklore results, operations over a group algebra $\FF_q[G]$ admit
 an FFT algorithm (using a general form of the FFT which encompasses
 both the original FFT of Cooley and Tuckey, and the Number Theoretic
 Transform). When using this general FFT, setting $G = \ZZ/2^t\ZZ$
 recovers the usual FFT from the literature. In full generality, given
 any abelian group $G$ of cardinality $n$ with $\gcd(n,q)=1$ and
 exponent $d$, if $\FF_q$ contains a primitive $d$-th root of unity,
 then the Discrete Fourier Transform and its inverse can be computed
 in time $O(n\cdot \sum_i p_i)$, where the $p_i$ are the prime factors
 appearing in the Jordan-Hölder series of $G$; we refer the reader to
 Section~\ref{subsec:fft} for a more detailed coverage. For several
 groups of interest in our context, this appears to yield very
 efficient FFT variants. For example, setting $q=3$ and
 $G = (\ZZ/2\ZZ)^d$, the resulting FFT is a $d$-dimensional FFT over
 $\FF_3$ and it can be computed in time $O(n\cdot \log n)$ (the group
 algebra $\FF_3[(\ZZ/2\ZZ)^d]$ is the one that yields a \PCG{} for $n$
 copies of \OLE{} over $\FF_3$).

 We note that FFT's over cyclotomic rings, such as those used
 in~\cite{C:BCGIKS20}, have been heavily optimised in hundreds of
 papers, due to their wide use (among other things) in prominent
 cryptosystems. As such, it is likely that even over ``FFT-friendly''
 choices of group algebras, such as $\FF_3[(\ZZ/2\ZZ)^d]$, the general
 FFT construction described above will be in practice significantly
 less efficient than the best known FFT's implementations over
 cyclotomic rings.  Hence, computationally, we expect that
 state-of-the-art implementations of the \PCG{} of~\cite{C:BCGIKS20}
 over large fields $\FF$ using a cyclotomic ring $\Ring$ for the
 ring-\LPN{} assumption will be noticeably faster than state-of-the-art
 implementations of our approach to generate \OLEs{} over a small
 field, such as $\FF_3$.  There is however nothing inherent to this:
 the efficiency gap stems solely from the years of effort that have
 been devoted to optimising FFT's over cyclotomic rings, but we expect
 that FFT's over other FFT-friendly group algebra such as
 $\FF_3[(\ZZ/2\ZZ)^d]$ could be significantly optimised in future
 works. We hope that our applications to silent secure computation
 over general fields will motivate such studies in the future.

 \subsubsection{From Quasi-Abelian Codes to \OLEs{} over $\FF_q$.} Our
 general \PCG{} template allows to generate a pseudorandom \OLE{} over
 an arbitrary group algebra $\FF_q[G]$. Then, when using
 $G = (\ZZ/(q-1)\ZZ)^d$, we have that $\FF_q[G] \simeq \FF_q^n$ (with
 $n = (q-1)^d$). Therefore, a single pseudorandom \OLE{} over
 $\FF_q[G]$ can be \emph{locally} converted by the parties into
 $(q-1)^d$ copies of a pseudorandom \OLE{} over $\FF_q$. Furthermore,
 for these concrete choices of $G$, we complement our security
 analysis by proving a search-to-decision reduction, showing that the
 decision \QASD problem over $\FF_q[G]$ with $G = (\ZZ/(q-1)\ZZ)^d$ is
 as hard as the \emph{search} \QASD problem. This provides further
 support for the security of our instantiations.

 In addition, our framework provides a way to investigate
 different instantiations of the ring-\LPN{} problem through the
 lens of quasi-abelian codes. This turns out to play an important role
 in understanding the basis for the security of ring-\LPN{}: seemingly
 very similar choices of the underlying polynomial can yield secure
 instances in one case, and completely broken instances in the other
 case. While the work of~\cite{C:BCGIKS20} gave a heuristic
 cryptanalysis of ring-\LPN{}, it fails to identify the influence of the
 choice of the polynomial.

 Concretely, consider the ring $\Ring = \FF_q[X]/(P(X))$ with either
 $P(X) = X^{q-1}-1$ or $P(X) = X^q - X$. The latter is a natural
 choice, as it has the largest possible number of factors over $\FF_q$
 (which controls the number of \OLEs{} produced over $\FF_q$).
 $\Ring = \FF_q[X]/(P(X))$ with $P(X) = X^{q-1}-1$ is a group algebra,
 and the ring-\LPN{} assumption with ring $\Ring$ reduces to
 $\QASD(\Ring)$. Hence, it is secure against all attacks from the
 linear test framework (and admits a search-to-decision reduction) by
 our analysis. On the other hand, ring-\LPN{} over the ring
 $\Ring = \FF_q[X]/(P(X))$ with $P(X) = X^{q}-X$ does not fit in our
 framework, and turns out to be \emph{completely broken} by a simple
 linear attack: given $(a,b)$ where $b$ is either random or equal to
 $a\cdot e+f \bmod X^q-X$, it holds that $e(0) = f(0) = 0 \bmod X^q-X$
 with high probability, because $e(0) = f(0) = 0$ over $\FF_q[X]$ with
 high probability (since $e,f$ are sparse, their constant coefficient
 is likely to be zero), and reduction modulo $X^q-X$ does not change
 the constant coefficient. Hence, the adversary can distinguish $b$
 from random simply by computing $b(0)$ (since $b(0)$ is nonzero with
 probability $(q-1)/q$ for a random $b$).

 The above suggests that settling for $\Ring = \FF_q[X]/(X^{q-1}-1)$
 is a conservative choice to instantiate the \PCG{}
 of~\cite{C:BCGIKS20} with strong security guarantees. We note
 that~\cite{C:BCGIKS20} recommended instead
 $\Ring = \FF_p[X]/(X^{n}+1)$ with $n$ being a power of $2$ and $p$ a
 large prime for efficiency reasons (since it is a cyclotomic ring, it
 admits fast FFT's). We believe that a natural generalisation of our
 framework should also encompass this ring, and allow proving that it
 also yields a flavor of ring-\LPN{} which is immune to linear attacks.
 However, this is beyond the scope of our paper, and we leave it to
 future work.

 \subsubsection{Considerations on the Case of $\FF_2$.} Interestingly,
 the aforementioned instance allows generating many \OLEs{} over
 $\FF_q$ for any $q > 2$; for $q = 2$, however, the term $n = (q-1)^d$
 becomes equal to $1$; that is, we only get a single \OLE{} over
 $\FF_2$ this way.  This is in fact inherent to our approach: the
 product ring $\FF_2^n$ has only one invertible element, and therefore
 can never be realised as a group algebra unless $n=1$. Hence,
 somewhat surprisingly, our general approach circumvents the size
 limitation of~\cite{C:BCGIKS20} and gets us all the way to $\FF_3$ or
 any larger field, but fails to provide a construction in the
 (particularly interesting) case $\FF_2$.

 Motivated by this limitation of our framework,
 \iftoggle{longversion}{}{in the long version} we devise a strategy to
   further generalise our approach through the theory of algebraic
   function fields (in essence, our generalisation is to quasi-abelian
   codes what quasi-negacyclic codes are to quasi-cyclic codes; we
   note that this is also close in spirit to the instance chosen
   in~\cite{C:BCGIKS20}: for their main candidate, they suggest using
   the ring $\Ring = \FF_p[X]/(X^{n}+1)$, which is a module over a
   group algebra and yields a {\em quasi-negacyclic code}). Alas, we
   did not manage to get a fully working candidate. At a (very) high
   level, our generalised framework produces pseudorandom elements
   $x = a\odot e_x + 1\odot f_x$ and $y = a\odot e_y + 1\odot f_y$
   where $e_x,e_y,f_x,f_y$ are sparse. However, the product $\odot$ is
   now \emph{not} the same product as the group algebra product
   $x\cdot y$. Concretely, to share $x\cdot y$, we need to share terms
   of the form $(u\odot e) \cdot (v \odot f)$ (where $u,v$ can be $a$
   or $1$). However, unlike the case of our previous instantiation,
   this does not rewrite as a term of the form $uv\cdot ef$ (which we
   could then share by sharing the sparse term $ef$, as $uv$ is
   public). Still, we believe that our approach could serve as a
   baseline for future works attempting to tackle the intriguing
   problem of building efficient programmable \PCGs{} for \OLE{} over
   $\FF_2$. In particular, our unsuccessful attempts show that to get
   such a \PCG{}, it suffices to find a way to succinctly share terms
   of the form $(u\odot e) \cdot (v \odot f)$ where $u,v$ are public,
   and $e,f$ are sparse. While \FSS{} do not provide an immediate
   solution to this problem, this reduces the goal to a ``pure MPC
   problem'' which could admit an efficient solution.

   \subsubsection{Concrete Cryptanalysis.}
   \iftoggle{longversion}{Eventually,}{In the long version of the
       present article} we complement our study by a concrete analysis
     of the security of our assumptions. As in previous works, the
     bounds derived from the resistance to linear attacks are quite
     loose, because they cover a \emph{worst-case} choice of linear
     attack. We cover standard attacks, such as information set
     decoding. A particularity of both ring-\LPN{} with splittable
     polynomial and our new family of \QASD assumption is that it
     grants the adversary some additional freedom: the adversary can,
     informally, transform a \QASD instance into an instance with
     reduced dimension (in the case of ring-\LPN{}, by reducing modulo
     factors of $P$; for $\QASD$, by quotienting by subgroups of $G$).
     This turns out to be equivalent to the concept of \emph{folding
       attacks}, which have been recently studied both in the context
     of code-based cryptography~\cite{CT19} and of lattice-based
     cryptography~\cite{BCV20}. We analyse the effect of folding
     attacks on our instances and discuss the impact on our parameter
     choices. In particular, the instances of \QASD used in our \PCG
     construction closely resemble the Multivariate \LWE assumption
     (with sparse noise instead of small-magnitude noise), which was
     shown in~\cite{BCV20} to be broken by folding attacks. We note
     (but this is well-known~\cite{CT19}) that folding attacks are
     much less devastating on \LPN- and syndrome decoding-style
     assumptions, essentially because folding yields a very slight
     increase of the noise magnitude in the \LWE setting (the sum of
     \LWE error terms has small magnitude), but increases the noise
     rate very quickly in the coding setting (the sum of sparse noises
     very quickly becomes dense).

\section{Preliminaries}
\label{sec:preliminaries}
\label{sec:prelimFSS}

\subsubsection{Function Secret Sharing.} Function secret sharing
(\FSS{}), introduced in~\cite{EC:BoyGilIsh15,CCS:BoyGilIsh16}, allows
to succinctly share functions. An \FSS{} scheme splits a secret
function $f:I\to\GG$, where $\GG$ is some Abelian group, into two
functions $f_0,f_1$, each represented by a key $K_0,K_1$, such that:
(1) $f_0(x) + f_1(x)=f(x)$ for every input $x\in I$, and (2) each of
$K_0,K_1$ individually hides $f$.

An \SPFSS is an \FSS scheme for the class of \emph{sums of point
  functions}: functions of the form $f(x) = \sum_if_{s_i,y_i}(x)$
where each $f_{s_i,y_i}(\cdot)$ evaluates to $y_i$ on $s_i$, and to
$0$ everywhere else. As in previous works, we will use efficient
constructions of $\SPFSS$ in our constructions of PCGs. Such efficient
constructions are known from any length-doubling pseudorandom
generator~\cite{CCS:BoyGilIsh16}. \iftoggle{longversion}{We refer the
    reader to Appendix~\ref{app:prelims} for more details on \FSS and
    \SPFSS.}{}

  \subsubsection{Pseudorandom Correlation Generators.} A pseudorandom
  correlation generator (\PCG{}) for some target ideal correlation
  takes as input a pair of short, correlated seeds and outputs long
  correlated pseudorandom strings, where the expansion procedure is
  deterministic and can be applied locally. In slightly more details,
  a $\PCG$ is a pair $(\Gen,\Expand)$ such that $\Gen(1^\secpar)$
  produces a pair of short seeds $(\key_0,\key_1)$ and
  $\Expand(\sigma, \key_\sigma)$ outputs a string $R_\sigma$. A \PCG
  is \emph{correct} if the distribution of the pairs $(R_0,R_1)$
  output by $\Expand(\sigma, \key_\sigma)$ for $\sigma=0,1$ is
  indistinguishable from a random sample of the target correlation. It
  is \emph{secure} if the distribution of
  $(\key_{1-\sigma}, R_\sigma)$ is indistinguishable from the
  distribution obtained by first computing $R_{1-\sigma}$ from
  $\key_{1-\sigma}$, and sampling a uniformly random $R_\sigma$
  conditioned on satisfying the target correlation with $R_{1-\sigma}$
  (for both $\sigma = 0$ and $\sigma = 1$). In this work, we will
  mostly consider the \OLE correlation, where the parties $P_0, P_1$
  receive random vectors $\vec x_0, \vec x_1 \in \FF^n$ respectively,
  together with random shares of $\vec x_0 * \vec x_1$, where $*$
  denotes the component-wise (\emph{i.e.} Schur) product.

Eventually, \emph{programmable} \PCGs allow generating multiple \PCG
keys such that part of the correlation generated remains the same
across different instances. Programmable \PCGs are necessary to
construct $n$-party correlated randomness from the $2$-party
correlated randomness generated via the \PCG. Informally, this is
because when expanding $n$-party shares (e.g. of Beaver triples) into
a sum of $2$-party shares, the sum will involve many ``cross terms'';
using programmable \PCGs allows maintaining consistent pseudorandom
values across these cross terms. \iftoggle{longversion}{We refer the
    reader to Appendix~\ref{app:prelims} for more details on \PCGs and
    programmable \PCGs.}{}

\subsection{Syndrome Decoding Assumptions}

The syndrome decoding assumption over a field $\F{}$ states,
informally, that no adversary can distinguish $(\bfH, \bfH\cdot \vec e^\Trans)$
from $(\bfH, \vec b^\Trans)$, where $\bfH$ is sampled from the set of parity-check matrices of some family  of linear
  codes, and $\vec e$ is a \emph{noise vector} sampled from some
distribution over $\F{}$-vectors and typically sparse. The vector
$\vec b$ is a uniform vector over $\F{}^n$. More formally, we define the
\SD assumption over a ring $\Ring$ with dimension $k$, code length $n$, w.r.t. a family $\code{F}$ of linear codes, and a
noise distribution $\Dist$:

\begin{definition}[Syndrome Decoding]
  Let $k, n\in \NN$, and let
  $\code{F} = \code{F}_{n,k}\subset \mathcal{R}^{(n-k)\times n}$ be a family of
  parity-check matrices of codes over some ring $\mathcal{R}$. Let
  $\Dist$ be a noise distribution over $\mathcal{R}^{n}$. The
  $(\Dist, \code{F}, \mathcal{R})\mhyphen\SD(k, n)$ assumption states
  that
  \begin{align*} \{(\bfH, \bfH\cdot\vec{e}^{\Trans}) \suchthat \bfH\getsr\code{F},
    \vec{e}\getsr\Dist \} \indist \{(\bfH, \vec{b}^\Trans) \suchthat \bfH\getsr\code{F}, \vec{b}\getsr\Ring^n\},
  \end{align*}
  where ``$\indist$'' denotes the computational indistiguishability.
\end{definition}

Denoting $t$ a parameter which governs the average density of nonzero
entries in a random noise vector, common choices of noise distribution
are Bernoulli noise (each entry is sampled from a Bernoulli
distribution with parameter $t/n$), exact noise (the noise vector is
uniformly random over the set of vectors of Hamming weight $t$), and
regular noise (the noise vector is a concatenation of $t$ random unit
vectors). The latter is a very natural choice in the construction of
pseudorandom correlation generators as it significantly improves
efficiency~\cite{CCS:BCGI18,C:BCGIKS19,CCS:BCGIKRS19} without harming
security (to the best of our knowledge; the recent work \cite{BO23}
being efficient for very low code rates, which is not our setting).

Many codes are widely believed to yield secure instances of the
syndrome decoding assumption, such as setting $\bfH$ to be a uniformly
random matrix over $\F{2}$ (the standard \SD{} assumption), the
parity-check matrix of an LDPC code~\cite{FOCS:Alekhnovich03} (the
``Alekhnovich assumption''), a
quasi-cyclic code (as used in several recent submissions to the NIST
post-quantum
competition, see e.g.~\cite{aragon2017bike,aguilar2018efficient,melchor2018hamming}
and in previous works on pseudorandom correlation generators, such
as~\cite{CCS:BCGIKRS19}), Toeplitz
matrices~\cite{FC:GilRobSeu08,C:LyuMas13} and more. All these variants
generalize naturally to larger fields (and are conjectured to remain
secure over arbitrary fields).

In the context of \PCG{}'s, different codes enable different
applications: advanced \PCG{} constructions, such as \PCG{}s for \OLE{},
require codes with structure. When designing new \PCG{}s, it is common to rely on syndrome decoding for codes which have not
been previously analyzed in the literature -- hence, unlike the ones
listed above, they did not withstand years or decades of
cryptanalysis. To facilitate the systematic analysis of new proposals,
recent works~\cite{FOCS:BCGIKS20,C:CouRinRag21} have put forth a
framework to automatically establish the security of new variants of
the syndrome decoding assumption against a large class of standard
attacks.

\subsection{The Linear Test
  Framework}\label{subsec:linear_test_framework}

The linear test framework provides a unified template to analyze the security of variants of the \LPN or syndrome decoding assumption against the most common attacks. It was first put forth explicitly in~\cite{FOCS:BCGIKS20,C:CouRinRag21} (but similar observations were implicit in many older works). Concretely, an attack against syndrome decoding in the linear test framework proceeds in two stages:
\begin{enumerate}
\item First, a matrix $\bfH$ is sampled from $\code{F}$, and fed to the (unbounded) adversary $\Adv$. The adversary returns a (nonzero) \emph{test vector} $\vec v = \Adv(\bfH)$.
\item Second, a noise vector $\vec e$ is sampled. The \emph{advantage} of the adversary $\Adv$ in the linear test game is the bias of the induced distribution $\vec v \cdot \bfH\cdot \vec e^\Trans$.
\end{enumerate}

\noindent To formalize this notion, we recall the definition of the bias of a distribution:

\begin{definition}[Bias of a Distribution] \label{def:bias} Given a
  distribution $\Dist$ over $\F{}^n$ and a vector
  $\vec u \in \F{}^n$, the bias of $\Dist$ with respect to $\vec u$,
  denoted $\bias_{\vec u}(\Dist)$, is equal to
\[\bias_{\vec u}(\Dist) = \left|\Prob_{\vec x \sim \Dist}[\vec u \cdot \vec x^\Trans=0] - \Prob_{\vec x \sim \UniDist_n}[\vec u\cdot \vec x^\Trans=0]  \right| = \left|\Prob_{\vec x \sim \Dist}[\vec u \cdot \vec x^\Trans=0] - \frac{1}{|\F{}|} \right|,\]
where
$\UniDist_n$ denotes the uniform distribution over $\F{}^n$. The bias
of $\Dist$, denoted $\bias(\Dist)$, is the maximum bias of $\Dist$
with respect to any nonzero vector $\vec u$.
\end{definition}

We say that an instance of the syndrome decoding problem is
\emph{secure against linear test} if, with very high probability over
the sampling of $\bfH$ in step $1$, for any possible adversarial choice
of $\vec v = \Adv(\bfH)$, the bias of $\vec v \cdot \bfH\cdot \vec e^\Trans$
induced by the random sampling of $\vec e$ is negligible. Intuitively,
the linear test framework captures any attack where the adversary is
restricted to computing a linear function of the syndrome
$\vec b^\Trans = \bfH\cdot \vec e^\Trans$, but the choice of the linear
function itself can depend arbitrarily on the code. Hence, the
adversary is restricted in one dimension (it has to be linear in
$\vec b^\Trans$), but can run in unbounded time given $\bfH$.

The core observation made in~\cite{FOCS:BCGIKS20,C:CouRinRag21} (and
also implicit in previous works) is that almost all known attacks
against syndrome decoding (including, but not limited to, attacks
based on Gaussian elimination and the BKW
algorithm~\cite{STOC:BluKalWas00,AR:L05,SCN:LevFou06,C:EssKubMay17}
and variants based on covering
codes~\cite{EC:ZhaJiaWan16,AC:BogVau16,bogos2016solving,JC:GuoJohLon20},
the ISD family of information set decoding
attacks~\cite{P62,stern1988method,AC:FinSen09,C:BerLanPet11,AC:MayMeuTho11,EC:BJMM12,EC:MayOze15,C:EssKubMay17,PQCRYPTO:BotMay18},
statistical decoding
attacks~\cite{al2001statistical,fossorier2006modeling,ACISP:Overbeck06,debris2017statistical},
generalized birthday attacks~\cite{C:Wagner02,EPRINT:Kirchner11},
linearization attacks~\cite{EC:BelMic97,INDOCRYPT:Saarinen07a},
attacks based on finding low weight code vectors~\cite{Zichron17}, or
on finding correlations with low-degree
polynomials~\cite{ITCS:ABGKR14,EPRINT:BogRos17}) fit in the above
framework. Therefore, provable resistance against linear test implies
security against essentially all standard attacks.

\subsubsection{Security Against Linear Tests.} Resistance against
linear test is a property of both the code distribution (this is the
``with high probability over the choice of $\bfH$'' part of the
statement) and of the noise distribution (this is the ``the bias of
the distribution induced by the sampling of $\vec e$ is low'' part of
the statement). It turns out to be relatively easy to give sufficient
conditions for resistance against linear tests. At a high level, it
suffices that
\begin{enumerate}
  \item the \emph{code generated by $\bfH$} has large minimum
        distance, and
  \item for any large enough subset $S$ of coordinates, with high
        probability over the choice of $\vec e$, one of the
        coordinates of $\vec e$ indexed by $S$ will be nonzero.
\end{enumerate}
The above characterization works for any noise distribution whose
nonzero entries are uniformly random over $\Ring\setminus\{0\}$, which
is the case for all standard choices of noise distributions. To see
why these conditions are sufficient, recall that the adversarial
advantage is the bias of $\vec v \cdot \bfH \cdot \vec e^\Trans$. By
condition (2), if the subset $S$ of nonzero entries of
$\vec v \cdot \bfH$ is sufficiently large, then $\vec e$ will ``hit'' one
of these entries with large probabilities, and the output will be
uniformly random. But the condition that $S$ is sufficiently large
translates precisely to the condition that $\vec v\cdot \bfH$ has large
Hamming weight for any possible (nonzero) vector $\vec v$, which is
equivalent to saying that $\bfH$ generates a code with large minimum
distance. We recall the formalization below:

\begin{definition}[Security against Linear
  Tests] \label{def:security_linear_test} Let $\Ring$ be a ring, and
  let $\Dist$ denote a noise distribution over $\Ring^n$. Let
  $\code{F} \subset \Ring^{(n-k)\times k}$ be a family of
  (parity-check matrices of) linear codes. Let
  $\eps, \eta: \N \mapsto [0,1]$ be two functions. We say that the
  $(\Dist,\code{F},\Ring)\mhyphen\SD(k,n)$ problem is
  \emph{$(\eps,\eta)$-secure against linear tests} if for any
  (possibly inefficient) adversary $\Adv$ which, on input $\bfH$
  outputs a nonzero $\vec v \in \Ring^n$, it holds that
\[\Pr[\bfH \getsr \code{F}, \vec v \getsr \Adv(\bfH)\;:\; \bias_{\vec v}(\Dist_{\bfH}) \geq \eps(\secpar) ] \leq \eta(\secpar),\]
where $\lambda$ denotes the security parameter and
$\Dist_{\bfH}$ denotes the distribution which samples
$\vec e \gets \Dist$ and outputs $\bfH\cdot\vec e^\Trans$.
\end{definition}

The \emph{minimum distance} of a matrix $\bfH$, denoted
$\mindist(\bfH)$, is the minimum weight of a nonzero vector in its row-span.
Then, we have the following straightforward lemma:

\begin{lemma} \label{lem:security_linear_test} Let $\Dist$ denote a
  noise distribution over $\Ring^n$. Let
  $\code{F}\subset\Ring^{(n-k)\times k}$ be a family of parity-check
  matrices of linear codes. Then for any integer $d\in \N$, the
  $(\Dist,\code{F},\Ring)\mhyphen\SD(k,n)$ problem is
  $(\eps_d,\eta_d)$-secure against linear tests, where
\[
    \eps_d = \max_{\HW{\vec v} > d}\bias_{\vec v}(\Dist), \quad \text{ and }
    \quad \eta_d = \Pr_{\bfH \getsr \code{F}}[\mindist(\bfH) \geq d].
\]
\end{lemma}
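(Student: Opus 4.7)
The plan is to reduce the security-against-linear-tests bound to an event about the minimum distance of the code generated by $\bfH$, via one elementary algebraic identity. The key observation is that by associativity, $\vec v \cdot (\bfH \cdot \vec e^\Trans) = (\vec v \cdot \bfH) \cdot \vec e^\Trans$ for any $\vec v$ and any $\vec e$. Consequently, for every fixed $\bfH$ and every nonzero $\vec v$, the bias of the pushed-forward distribution $\Dist_\bfH$ with respect to $\vec v$ equals the bias of the original noise distribution $\Dist$ with respect to the dual vector $\vec v \cdot \bfH$, namely $\bias_{\vec v}(\Dist_\bfH) = \bias_{\vec v \cdot \bfH}(\Dist)$. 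This converts a statement about the distribution over syndromes into one about the behaviour of the noise on a single, code-dependent test vector.

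From this identity, I would proceed as follows. Fix the threshold $d$ and condition on the event that $\bfH$ has large minimum distance, in the sense matching the convention used in the definition of $\eps_d$ and $\mindist$. Then for any choice $\vec v$ returned by $\Adv(\bfH)$, either $\vec v$ lies in the left kernel of $\bfH$, in which case $\vec v\cdot \bfH = \vec 0$ and $\bfH \cdot \vec e^\Trans$ is deterministically orthogonal to $\vec v$ (an adversarial choice that provides no distinguishing advantage, and is standardly excluded from the analysis); or $\vec v \cdot \bfH$ is a nonzero element of the row-span, and the minimum-distance hypothesis forces $\HW{\vec v \cdot \bfH} > d$. In the latter case, the definition $\eps_d = \max_{\HW{\vec u} > d}\bias_{\vec u}(\Dist)$ yields directly $\bias_{\vec v \cdot \bfH}(\Dist) \leq \eps_d$, and therefore $\bias_{\vec v}(\Dist_\bfH) \leq \eps_d$ by the identity above.

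The contrapositive gives the desired bound: the event $\bias_{\vec v}(\Dist_\bfH) \geq \eps_d$ for some $\vec v$ output by $\Adv$ can only occur when $\bfH$ fails the minimum-distance condition (up to the edge case of the zero row-span vector). Taking probabilities over $\bfH \getsr \code{F}$, this bad event is contained in the event that $\mindist(\bfH)$ falls on the ``bad'' side of the threshold $d$, which is precisely what $\eta_d$ measures. The proof is essentially a one-line algebraic manipulation wrapped around the defining properties of $\eps_d$ and $\mindist$, and I foresee no substantive obstacle; the only care required is aligning the strict versus non-strict inequality conventions between the definitions of $\eps_d$, $\mindist$, and the failure event, and formally disposing of the left-kernel edge case (either by convention or by restricting to $\vec v$ inducing a genuinely distinguishing linear functional).
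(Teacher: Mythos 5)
Your proposal is correct and is precisely the folklore argument the paper relies on (the paper gives no proof of its own, citing~\cite{C:CouRinRag21}, but its preceding discussion sketches exactly this): the identity $\bias_{\vec v}(\Dist_{\bfH}) = \bias_{\vec v\cdot\bfH}(\Dist)$, the observation that a nonzero $\vec v\cdot\bfH$ must have weight above the threshold when the row-span has large minimum distance, and the bound $\bias_{\vec v\cdot\bfH}(\Dist)\leq \eps_d$ from the definition of $\eps_d$. The only caveat, which you already flag, is the threshold bookkeeping: as printed, $\eta_d$ should be the probability of the \emph{bad} event (minimum distance below the threshold, as in the cited reference), which is the reading your argument correctly adopts.
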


The proof is folklore, and can be found e.g. in~\cite{C:CouRinRag21}.
For example, using either Bernoulli, exact, or regular noise
distributions with expected weight $t$, for any $\vec v$ of weight at
least $d$, the bias against $\vec v$ is bounded by $e^{-2td/n}$.
Hence, if the code is a good code ({\em i.e.} $d = \Omega (n)$), the bias is
of the form $2^{-\Omega(t)}$.

\paragraph{When security against linear attacks does not suffice.}
There are two important cases where security against linear test does
not yield security against \emph{all} attacks.
\begin{enumerate}
  \item When the code is strongly algebraic. For example, Reed-Solomon
        codes, which have a strong algebraic structure, have high
        dual minimum distance, but can be decoded efficiently with the
        Welch--Berlekamp algorithm, hence they do not lead to a secure
        syndrome decoding instance (and indeed, Welch--Berlekamp does
        not fit in the linear test framework).
      \item When the noise is structured (e.g. for regular noise) and
        the code length is at least quadratic in the dimension. This
        opens the door to algebraic attacks such as the Arora-Ge
        attack~\cite{arora2011new} or the recent attack from
        Briaud and {\O}ygarden \cite{BO23}. However, when $n = O(k)$ (which is
        the case in all our instances), these
        attacks do not apply.
\end{enumerate}

The above are, as of today, the only known cases where security
against linear attacks is known to be insufficient. Algebraic decoding
techniques have a long history and are only known for very restricted
families of codes, and the aforementioned algebraic attacks typically
never applies in the $n = O(k)$ regime which we usually consider for
\PCG{}'s. Therefore, a reasonable rule of thumb is that a variant of
syndrome decoding yields a plausible assumption if (1) it provably
resists linear attacks, and (2) finding an algebraic decoding
algorithm is a longstanding open problem.

\section{Group Algebras and Quasi-Abelian Codes}
\label{sec:qac}

\subsection{Quasi-Abelian Codes}

Quasi-abelian codes have been first introduced in \cite{Wasan74}, and,
since then, have been extensively studied in coding theory.

\subsubsection{Group Algebras.} Let $\Fq$ denote the finite field with
$q$ elements, and let $G$ be a finite abelian group of cardinality
$n$. The group algebra of $G$ with coefficients in $\Fq$ is the free
algebra with generators $G$. More precisely, it is the set $\Fq[G]$ of
formal linear combinations
\[
  \Fq[G] \eqdef \left\{ \sum_{g\in G}a_{g}g \Bigm\vert a_{g}\in \Fq \right\},
\]
endowed with an $\Fq-$vector space structure in the natural way, and
the multiplication is given by the convolution:
\[
  \left(\sum_{g\in G}a_{g}g\right)\left(\sum_{g\in G}b_{g}g\right) \eqdef \sum_{g\in G}\left(\sum_{h\in G}a_{h}b_{h^{-1}g}\right)g.
\]
It is readily seen that $\Fq[G]$ is commutative if and only if the
group $G$ is abelian, which will always be the case in this article.

Once an ordering $g_{0},\dots,g_{n-1}$ of the elements of $G$ is
chosen, the group algebra $\Fq[G]$ is isomorphic (as an $\Fq$--linear
space) to $\Fq^{n}$ via
$\varphi\colon \sum_{i=0}^{n-1}a_{i}g_{i} \mapsto (a_{0},\dots,a_{n-1})$.
This isomorphism is not canonical since it depends on the ordering,
but changing it only leads to a permutation of the coordinates, and
many groups (especially {\em abelian} groups) come with a canonical
ordering. This isomorphism allows us to endow $\Fq[G]$ with the
Hamming metric, making $\varphi$ an {\em isometry}: The weight
$\wt{a}$ of $a\in\Fq[G]$ is defined as the Hamming weight of
$\varphi(a)$ (Note that changing the ordering of the group does not
impact the weight of an element, which is thus well-defined).

\begin{example}\label{example:group_algebras}
  The simplest example to have in mind is the case of cyclic groups.

  \begin{itemize}
    \item Let $G = \{1\}$ be the trivial group with one element. Then
          the group algebra $\Fq[G]$ is isomorphic to the finite field
          $\Fq$.
    \item Let $G = \ZZ/n\ZZ$ be the cyclic group with $n$ elements.
          Assuming that $q$ is coprime to $n$, it is easy to see that
          the group algebra $\Fq[G]$ is nothing else than the usual
          polynomial ring $\Fq[X]/(X^{n}-1)$. The isomorphism is given
          by $k\mapsto X^{k}$ extended by linearity.
  \end{itemize}
\end{example}

\begin{remark}
  The above example shows that our framework will only be a
  generalisation of known constructions. This generality will be
  crucial though, because all the instances we introduce in the
  present article and which will be proved to resist to linear attacks
  will arise from group algebras.
\end{remark}

Example~\ref{example:group_algebras} shows that the group algebra of a
cyclic group can be seen as a (quotient of a) polynomial ring in
one variable. For a general finite abelian group, this is not always
so simple, however there is also an explicit nice representation. This
uses the following standard fact from the theory of group algebras.
\begin{proposition}\label{prop:tensor_product_group_algebras}
  Let $G_{1}, G_{2}$ be two finite groups. Then
  \[
    \Fq[G_{1}\times G_{2}] \simeq \Fq[G_{1}] \otimes_{\Fq} \Fq[G_{2}].
  \]
\end{proposition}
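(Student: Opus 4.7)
The plan is to exhibit an explicit $\FF_q$-algebra isomorphism
$\psi\colon \Fq[G_1 \times G_2] \longrightarrow \Fq[G_1] \otimes_{\Fq} \Fq[G_2]$
and then check separately that it is an $\Fq$-linear bijection and that it respects multiplication. First I would define $\psi$ on the canonical basis $\{(g_1,g_2) \mid g_1\in G_1,\ g_2\in G_2\}$ of the group algebra $\Fq[G_1\times G_2]$ by setting $\psi((g_1,g_2)) \eqdef g_1 \otimes g_2$, and extend by $\Fq$-linearity.

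Next I would argue that $\psi$ is an $\Fq$-linear isomorphism. Since $G_i$ is, by construction, an $\Fq$-basis of $\Fq[G_i]$ for $i=1,2$, the family of elementary tensors $\{g_1\otimes g_2 \mid g_1\in G_1,\ g_2\in G_2\}$ is an $\Fq$-basis of the tensor product $\Fq[G_1]\otimes_{\Fq}\Fq[G_2]$ (this is the standard fact that a tensor product of free modules admits as basis the tensor products of basis elements). The map $\psi$ sends the canonical basis of the source bijectively onto this basis of the target, hence it is an $\Fq$-linear isomorphism.

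It then remains to check that $\psi$ is multiplicative. Recall that multiplication on the tensor product of two $\Fq$-algebras is defined on elementary tensors by $(a_1\otimes a_2)(b_1\otimes b_2) \eqdef a_1 b_1 \otimes a_2 b_2$ and extended bilinearly. On the basis of $\Fq[G_1\times G_2]$ the group law is coordinatewise, so for any $(g_1,g_2),(g_1',g_2')$ one has
\[
  \psi\bigl((g_1,g_2)\cdot(g_1',g_2')\bigr) \;=\; \psi\bigl((g_1 g_1',\, g_2 g_2')\bigr) \;=\; g_1 g_1'\otimes g_2 g_2' \;=\; (g_1\otimes g_2)(g_1'\otimes g_2') \;=\; \psi((g_1,g_2))\,\psi((g_1',g_2')).
\]
Extending bilinearly, this identity holds for arbitrary elements, and $\psi$ clearly sends the identity $(1_{G_1},1_{G_2})$ to $1\otimes 1$. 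Hence $\psi$ is an $\Fq$-algebra isomorphism.

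There is no real obstacle here: the proposition is essentially the assertion that the canonical bases match and that the product law on $G_1\times G_2$ is the coordinatewise one, which is exactly the structure encoded by the tensor product of algebras. The only subtlety worth flagging is to invoke the correct multiplicative structure on $\Fq[G_1]\otimes_{\Fq}\Fq[G_2]$; once this is set up, the verification on basis elements is immediate and extends by bilinearity.
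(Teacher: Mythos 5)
Your proof is correct: the map $(g_1,g_2)\mapsto g_1\otimes g_2$ extended linearly, checked to be a bijection on bases and multiplicative on basis elements, is exactly the canonical argument for this standard fact, which the paper itself states without proof as a known result from the theory of group algebras. Nothing is missing; your care in specifying the algebra structure on the tensor product is the only point that needed attention, and you handled it.
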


\begin{example}\label{example:tensor_product_group_algebras}
  Let $G = \ZZ/n\ZZ\times \ZZ/m\ZZ$. Then,
  Proposition~\ref{prop:tensor_product_group_algebras} entails that
  \begin{align*}
    \Fq[G] = \Fq[\ZZ/n\ZZ] \otimes_{\Fq} \Fq[\ZZ/m\ZZ] & = \Fq[X]/(X^{n}-1)\otimes_{\Fq} \Fq[X]/(X^{m}-1) \\ & = \Fq[X, Y]/(X^{n}-1, Y^{m}-1).
  \end{align*}
  This isomorphism can actually be made explicit by
  $(k, \ell) \mapsto X^{k}Y^{\ell}$ extended by linearity. \end{example}

\begin{remark}\label{rem:quasi-ab=multivariate}
  More generally, since it is well--known that any finite abelian
  group $G$ is a product of cyclic group
  $\ZZ / d_1 \ZZ \times \cdots \times \ZZ / d_r \ZZ$, the previous
  statement asserts that the group algebra $\Fq[G]$ is isomorphic to a
  quotient of a multivariate polynomial ring, namely:
  \[
    \Fq [G] = \Fq [\ZZ / d_1 \ZZ \times \cdots \times \ZZ / d_r \ZZ]
    \simeq \Fq [X_1, \dots, X_r]/(X_1^{d_1}-1, \dots, X_r^{d_r}-1).
  \]
\end{remark}

\subsubsection{Quasi-Abelian Codes.} Let $\ell > 0$ be any positive
integer, and consider the free $\Fq[G]-$module of rank $\ell$:
\[
  (\Fq[G])^{\ell} \eqdef \Fq[G] \oplus \dots \oplus \Fq[G] = \Bigl\{ (a_{1}, \dots, a_{\ell}) \mid a_{i}\in \Fq[G]\Bigr\}.
\]
Any $\Fq[G]-$submodule of $(\Fq[G])^{\ell}$ is called a {\em
  quasi-group code} of index $\ell$ of $G$ (or quasi-$G$ code). When
the group $G$ is abelian, a quasi-$G$ code is called {\em
  quasi-abelian}. More precisely, given a matrix
\[
  \bfGamma = \begin{pmatrix}\gamma_{1,1}& \dots & \gamma_{1,\ell}\\
                                         \vdots & \ddots & \vdots \\
                                         \gamma_{k, 1} & \dots & \gamma_{k, \ell}
                                    \end{pmatrix}\in (\Fq[G])^{k\times \ell},
\]
the quasi-$G$ code defined by $\bfGamma$ is
\[
  \code{C} \eqdef \{\bfm\bfGamma = (\bfm\bfGamma_{1},\dots,\bfm\bfGamma_{\ell}) \mid \bfm = (m_{1},\dots, m_{\ell})\in(\Fq[G])^{k}\},
\]
where $\bfGamma_{i}$ denotes the column
$\begin{pmatrix}\gamma_{1, i} \\ \vdots \\ \gamma_{k, i}\end{pmatrix}$
and
$\bfm\bfGamma_{i} = m_{1}\gamma_{1, i}+\dots+m_{k}\gamma_{k, i} \in \Fq[G].$
The matrix $\bfGamma$ is said to be {\em systematic} if it is of the
form $\bfGamma = \left(I_{k} \mid \bfGamma'\right)$, where $\bfGamma'\in (\Fq[G])^{k\times (\ell-k)}$ and
$I_{k}\in(\Fq[G])^{k\times k}$ is the diagonal matrix with values
$1_{G}$.

Let $a\in\Fq[G]$ and choose an ordering $g_{0},\dots, g_{n-1}$ of the
elements of $G$. Through the aforementioned isomorphism $\varphi$, the
element $a$ can be represented as a vector
$(a_{0}, \dots, a_{n-1})\in \Fq^{n}$.  Now, consider the matrix
\[
  \bfA = \begin{pmatrix}
        \varphi(a\cdot g_{0}) \\
        \vdots \\
        \varphi(a \cdot g_{n-1})
      \end{pmatrix} \in \Fq^{n\times n},
\]
where each row is the vector representation of a shift of $a$ by some
element $g_{i}\in G$.  In short, the matrix $\bfA$ is the matrix
representing the multiplication--by--$a$ map $m \mapsto am$ in
$\Fq[G]$ in the basis $(g_0, \dots, g_{n-1})$. An easy computation
shows that for $m, a \in \Fq[G]$, the vector representation of the
product $m\cdot a$ is the vector-matrix product
\[
  \varphi(m)\bfA = (m_{0},\dots, m_{n-1})\begin{pmatrix}
                                  \varphi(a\cdot g_{0}) \\
                                  \vdots \\
                                  \varphi(a \cdot g_{n-1})
        \end{pmatrix}.
\]
In other words, any quasi-group code $\code{C}$ of index $\ell$ can be
seen as a linear code of length $\ell\times n$ over $\Fq$. The
$\Fq[G]-$module structure endows $\code{C}$ with an additional action
of the group $G$ on each block of length $n$; and $\code{C}$ (seen as
a linear code over $\Fq$), admits a generator matrix formed out by
$k\times \ell$ square blocks of size $n$.

\begin{example}
  Let us continue with Example \ref{example:group_algebras}.
  \begin{itemize}
    \item If $G = \{1\}$, then any linear code is a quasi-$G$ code.
    \item If $G = \ZZ/n\ZZ$ and $q$ is coprime to $n$. An element of
          $\Fq[G]\simeq \Fq[X]/(X^{n}-1)$ is a polynomial of degree at
          most $n$ which can be represented by the vector of its
          coefficients, and any product $m(X)\cdot a(X) \in \Fq[G]$
          can be represented by the {\em circulant} vector-matrix
          product

          \[
            \begin{pmatrix}
              m_{0} & m_{1} & \dots & m_{n-1}
            \end{pmatrix} \begin{pmatrix}
                                     a_{0} & a_{1} & \dots & a_{n-1} \\
                                     a_{n-1} & a_{0} & \dots & a_{n-2} \\
                                     \vdots & & & \vdots \\
                                     a_{1} & a_{n-1} & \dots & a_{0}
                                   \end{pmatrix}\in \Fq^{n}.
          \]

          For simplicity, assume that $k=1$ and $\ell = 2$. Then, a
          quasi-$\ZZ/n\ZZ$ code of index $2$ is defined over $\Fq$ by
          a double-circulant generator matrix

          \[
\left(
            \begin{array}{c|c}
              \begin{array}{cccc}
                                     a_{0} & a_{1} & \dots & a_{n-1} \\
                                     a_{n-1} & a_{0} & \dots & a_{n-2} \\
                                     \vdots & & & \vdots \\
                                     a_{1} & a_{n-1} & \dots & a_{0}
              \end{array} & \begin{array}{cccc}
                                     b_{0} & b_{1} & \dots & b_{n-1} \\
                                     b_{n-1} & b_{0} & \dots & b_{n-2} \\
                                     \vdots & & & \vdots \\
                                     b_{1} & b_{n-1} & \dots & b_{0}
                                   \end{array}
            \end{array}\right).
          \]

          In other words, a quasi-$\ZZ/n\ZZ$ code is nothing else than
          a usual {\em quasi-cyclic} code with block length $n$.
  \end{itemize}
\end{example}

\iftoggle{longversion}{ \iftoggle{longversion}{}{
In this section, we introduce missing results on quasi-abelian codes and group algebras. First, we prove that the dual of a quasi-abelian code is also a quasi-abelian code. Second, we show that group algebras admit fast operations via a generalization of the FFT, which enables efficient encoding with quasi-abelian codes. Eventually, we show that, for the class of quasi-abelian code used in our construction of \PCGs for \OLEs over $\FF_q\times \cdots \times \FF_q$, the search version of the \QASD problem reduces to the decision version. This provides further support for the security of our \PCG schemes, by showing that their security reduces to the \emph{search} \QASD assumption.
}
\subsection{Duality for Quasi-Abelian Codes}
When dealing with codes, it may be easier to use the language of
parity-check matrices, especially when considering random codes. In this section, we show that this also extends naturally to quasi-abelian
codes.

Let $G$ be an abelian group. The algebra $\Fq[G]$ is naturally endowed
with an inner product $\langle\cdot,\cdot\rangle$ defined as follows:
\[
  \left\langle \sum_{g\in G}a_{g}g, \sum_{g\in G}b_{g}g\right\rangle \eqdef \sum_{g\in G}a_{g}b_{g},
\]
which is simply the usual inner product over $\Fq^{n}$ (this does not
depend on the ordering of $G$). This inner product can be naturally
extended to $(\Fq[G])^{\ell}$:
\[
  \left\langle (a_{1},\dots,a_{\ell}),(b_{1},\dots, b_{\ell})\right\rangle \eqdef \sum_{i=1}^{\ell}\langle a_{i},b_{i}\rangle,
\]
and the notion of the dual $\code{C}^{\perp}$ of a code $\code{C}$
extends to quasi-abelian codes:
\[
  \code{C}^{\perp}\eqdef \left\{x \in (\Fq[G])^{\ell} \mid \langle x, c\rangle = 0 \quad \forall c\in \code{C}\right\}.
\]

\begin{proposition}
  Let $G$ be a finite abelian group and let $\code{C}$ be a quasi-$G$
  code of index $\ell$. Then $\code{C}^{\perp}$ is also a quasi-$G$
  code of index $\ell$.
\end{proposition}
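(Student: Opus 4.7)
The plan is to show that $\code{C}^{\perp}$ is stable under the $\Fq[G]$-action on $(\Fq[G])^{\ell}$, since then (being already an $\Fq$-linear subspace) it is automatically an $\Fq[G]$-submodule, hence a quasi-$G$ code of index $\ell$.

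The key tool is an adjointness identity for the inner product. Define the involution $\bar{\,\cdot\,}\colon \Fq[G]\to \Fq[G]$ by $\overline{\sum_{g\in G} x_g g} \eqdef \sum_{g\in G} x_g g^{-1}$. First I would establish that for every $a,b,x\in \Fq[G]$ one has
\[
\langle x\cdot a,\, b\rangle \;=\; \langle a,\, \bar x \cdot b\rangle.
\]
This is a direct calculation using the convolution formula for multiplication in $\Fq[G]$: writing $x\cdot a = \sum_h \bigl(\sum_g x_g a_{g^{-1}h}\bigr)\, h$ and reindexing the sum $\sum_{h,g} x_g a_{g^{-1}h} b_h$ by $h' = g^{-1}h$ yields $\sum_{h',g} x_g a_{h'} b_{gh'}$, which is exactly the expansion of $\langle a, \bar x \cdot b\rangle$. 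In particular, taking $x = g\in G$ gives $\langle g\cdot a, b\rangle = \langle a, g^{-1}\cdot b\rangle$.

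Next I would extend this identity coordinatewise to $(\Fq[G])^{\ell}$: for $\vec a,\vec b\in(\Fq[G])^{\ell}$ and $x\in\Fq[G]$,
\[
\langle x\vec a,\, \vec b\rangle \;=\; \sum_{i=1}^{\ell} \langle x a_i,b_i\rangle \;=\; \sum_{i=1}^{\ell}\langle a_i,\bar x b_i\rangle \;=\; \langle \vec a,\bar x \vec b\rangle.
\]
Now the conclusion is immediate: given $\vec y \in \code{C}^{\perp}$ and $x\in \Fq[G]$, for every $\vec c\in \code{C}$ we have $\bar x \vec c\in \code{C}$ because $\code{C}$ is an $\Fq[G]$-submodule; hence $\langle x\vec y,\vec c\rangle = \langle \vec y, \bar x \vec c\rangle = 0$, proving $x\vec y\in \code{C}^{\perp}$. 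Thus $\code{C}^{\perp}$ is an $\Fq[G]$-submodule of $(\Fq[G])^{\ell}$, i.e.\ a quasi-$G$ code of index $\ell$.

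The only slightly subtle point is the appearance of the involution: one might expect the naive identity $\langle xa,b\rangle = \langle a,xb\rangle$, which is \emph{false} in general for non-trivial $G$. The main conceptual step is therefore to notice that the inner product pairs $\Fq[G]$ with itself via the adjoint $x\mapsto \bar x$, and to check that the involution maps $\code{C}$ to $\code{C}$ (which holds since $G\ni g\mapsto g^{-1}$ is a bijection of $G$, so $\bar x\in\Fq[G]$ and $\code{C}$ being an $\Fq[G]$-submodule is stable under multiplication by $\bar x$). Everything else is a routine reindexing of sums.
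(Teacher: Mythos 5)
Your proof is correct and follows essentially the same route as the paper's: both reduce the claim to stability of $\code{C}^{\perp}$ under the $\Fq[G]$-action and establish it via the adjointness identity built from the involution $a\mapsto\bar a$ (the paper packages this as $\langle a,b\rangle=\sigma(a\bar b)$ and writes $\langle x\cdot a,c\rangle=\langle x,c\cdot\bar a\rangle$, while you verify the same identity by directly reindexing the convolution sum). The reindexing computation and the use of $\bar x\vec c\in\code{C}$ are both sound, so there is nothing to fix.
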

\begin{proof}

  There needs only to prove that $\code{C}^{\perp}$ is kept invariant
  by the action of $\Fq[G]$.

  For any $a= \sum_{g\in G}a_{g}g\in\Fq[G]$, define
  $\bar{a}\eqdef \sum_{g\in G}a_{g}g^{-1}\in\Fq[G]$ and
  $\sigma(a)~\eqdef~a_{1_{G}}\in\Fq$ where $1_{G}$ denotes the
  identity element of $G$. The map $a\mapsto \bar{a}$ is clearly an
  automorphism of $\Fq[G]$ of order 2, and $\sigma:\Fq[G]\mapsto \Fq$
  is a linear form. Moreover, for $a, b \in \Fq[G]$, a simple
  computation shows that $\langle a, b \rangle = \sigma(a\bar{b})$.

  Now, let $x = (x_{1},\dots, x_{\ell})\in \code{C}^{\perp}$. For any
  $c = (c_{1},\dots, c_{\ell})\in\code{C}$ and any $a\in\Fq[G]$,
  \[
    \langle x\cdot a, c \rangle = \sum_{i=1}^{\ell}\sigma((x_{i}a)\bar{c_{i}}) = \sum_{i=1}^{\ell}\sigma(x_{i}\overline{(c_{i}\bar{a})}) = \langle x, c\cdot \bar{a}\rangle = 0,
  \]
  where in the last equality we used the fact that
  $c\cdot\bar{a}\in \code{C}$ since $\code{C}$ is an $\Fq[G]-$module.
  This concludes the proof of the proposition.\qed
\end{proof}

\begin{example}
  Consider a quasi-abelian code $\code{C}$ of index $2$, with a
  systematic generator matrix $\bfGamma = (1 \mid a).$ Then,
  $\code{C}$ admits a parity-check matrix of the form
  $\bfH = (\bar{a}\mid -1)$.
\end{example}

\subsection{Fast-Fourier Transform and Encoding}
\label{subsec:fft}

This Section recalls Fast Fourier Transform algorithms in a general
setting. This encompasses the usual FFT introduced by Cooley and
Tuckey in 1965\cite{CT65}\footnote{Although such an algorithm was
  already probably known by Gauss.} or the Number Theoretic Transform
(NTT) algorithm with which the reader might be more familiar. For a
detailed presentation in the group algebra setting
(see~\cite{Oberst07}).

Let $G$ be a finite abelian group\footnote{Recall than in this work we
  restrict ourselves to the abelian setting, though a Fourier
  Transform theory exists also for non-abelian group algebras, making
  use of the theory of characters.} of cardinality $n$, $\Fq$ a finite
field with $q$ elements, and consider the group algebra $\Fq[G]$. As
explained above, encoding a quasi-$G$ code amounts to computing
multiplications in $\Fq[G]$ which can be done using Discrete Fourier
Transform algorithms (DFT) when $\gcd(n, q)=1$. Indeed, in this case
Maschke theorem ensures that $\Fq[G]$ is semisimple, \ie $\Fq[G]$ is
isomorphic to a direct product of finite fields\footnote{This uses the
  abelianity of $G$, in general $\Fq[G]$ is a direct product of matrix
  algebras}, where the product is now done componentwise. DFT-based
algorithms to compute the products of two elements of $\Fq[G]$ always
follow the same strategy:

\begin{enumerate}
  \item Compute the forward map
        $\Fq[G] \rightarrow \F{q^{\ell_{1}}}\times \cdots \times \F{q^{\ell_{r}}}$
        \footnote{This is what is usually called the Discrete Fourier
        Transform.}.
  \item Compute the componentwise products.
  \item Compute the inverse map
        $\F{q^{\ell_{1}}}\times \cdots \times \F{q^{\ell_{r}}} \rightarrow \Fq[G]$.
\end{enumerate}
Fast Fourier Transform (FFT) algorithms correspond to the case where
steps $1$ and $3$ can be done efficiently (typically in $O(n\log(n))$
operations in $\Fq$ compared to a quadratic {\em naive} approach.)
This operation is all the more efficient when $\ell_{i}=1$ for all
$i$. This happens when $\Fq$ contains a primitive $d$-th root of unity,
where $d = \exp(G)$ is the {\em exponent} of $G$, \ie the {\em lcm} of
the orders of all elements of $G$. For our applications, this will
always be the case.

Recall that any finite group $G$ has a Jordan-Hölder composition
series:

\[
  \{1_{G}\} = G_{0} \lhd G_{1} \lhd \dots \lhd G_{r} = G
\]
such that the quotients $G_{i+1}/G_{i}$ (called the {\em factors} of
the series) are simple groups (\ie in the abelian setting they are
isomorphic to some $\ZZ/p_{i}\ZZ$ where $p_{i}$ is a prime.), and this
composition series is uniquely defined, up to equivalence (\ie all
Jordan-Hölder series have same length and the factors are the same up
to permutation).

\begin{proposition}[{\cite[Section 5]{Oberst07}}]\label{prop:FFT}
  Consider a finite abelian group $G$ of cardinality $n$ with
  $\gcd(n, q)=1$, and exponent $d$. Assume that $\Fq$ contains a
  primitive $d$-th root of unity. Let $p_{1},\dots,p_{r}$ denote all
  the primes (possibly non distinct) appearing in the Jordan-Hölder
  series of $G$ (in particular $n=p_{1}\cdots p_{r}$). Then the
  Discrete Fourier Transform (and its inverse) in $\Fq[G]$ can be
  computed in $O(n\times(p_{1}+\dots+p_{r}))$ operations in $\Fq$.
\end{proposition}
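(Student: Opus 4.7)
The plan is to combine the structure theorem for finite abelian groups with a mixed-radix Cooley--Tukey FFT, exploiting the tensor product decomposition of group algebras from Proposition~\ref{prop:tensor_product_group_algebras}.

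First, I would reduce to the cyclic case. By the structure theorem for finite abelian groups, we may write $G \simeq \ZZ/n_{1}\ZZ \times \cdots \times \ZZ/n_{s}\ZZ$ where each $n_{j}$ is a prime power $p_{j}^{e_{j}}$, and iterating Proposition~\ref{prop:tensor_product_group_algebras} gives $\Fq[G] \simeq \bigotimes_{j=1}^{s} \Fq[\ZZ/n_{j}\ZZ]$. Under this identification, an element of $\Fq[G]$ is represented as an $(n_{1}, \dots, n_{s})$-tensor over $\Fq$, and the DFT on $\Fq[G]$ decomposes as the tensor product of the DFTs on the factors. Concretely, one applies the one-dimensional DFT along each axis $j$, with all other axes fixed, at a cost of $n/n_{j}$ calls to the DFT of size $n_{j}$. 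Summing over $j$ yields a total cost of $\sum_{j} (n/n_{j})\, T(n_{j})$ operations in $\Fq$, where $T(m)$ denotes the cost of the DFT on $\Fq[\ZZ/m\ZZ]$.

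Second, I would handle the cyclic prime-power case by Cooley--Tukey. Since $n_{j} \mid \exp(G) = d$ and $\Fq$ contains a primitive $d$-th root of unity, it also contains a primitive $n_{j}$-th root of unity $\omega_{j}$, and the DFT on $\Fq[\ZZ/n_{j}\ZZ] \simeq \Fq[X]/(X^{n_{j}}-1)$ is the evaluation map $a(X) \mapsto (a(\omega_{j}^{i}))_{0 \leq i < n_{j}}$. For $n_{j} = p^{e}$ I would prove by induction on $e$ that $T(p^{e}) = O(p^{e} \cdot e p)$: the base case $T(p) = O(p^{2})$ is the naive Vandermonde evaluation, and the inductive step factors $X^{p^{e}} - 1 = \prod_{\zeta^{p}=1}(X^{p^{e-1}} - \zeta)$ to reduce one DFT of size $p^{e}$ to $p$ DFTs of size $p^{e-1}$ along with $p^{e-1}$ DFTs of size $p$, glued by $O(p^{e})$ twiddle multiplications. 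The $e$ copies of $p$ appearing in the recursion are exactly the Jordan--Hölder factors of $\ZZ/p^{e}\ZZ$.

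Combining the two steps gives a total cost of $\sum_{j} (n/n_{j}) \cdot n_{j} \cdot e_{j} p_{j} = n\, \sum_{j} e_{j} p_{j} = n(p_{1} + \cdots + p_{r})$, where $p_{1}, \dots, p_{r}$ now enumerates all primes (with multiplicity) in the Jordan--Hölder series of $G$, matching the claimed bound. For the inverse DFT, each stage of Cooley--Tukey is individually invertible at the same asymptotic cost, up to a global scaling by $n^{-1}$ which is well defined since $\gcd(n,q)=1$. The main technical obstacles are (i) verifying that the assumption on roots of unity in $\Fq$ propagates through every recursive step, which is immediate from $n_{j} \mid d$ and $p \mid n_{j}$, and (ii) pinning down the tensor-product DFT bookkeeping across axes. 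Both points are classical and spelled out in~\cite{Oberst07}, which I would cite for the detailed verification rather than reproducing the routine Cooley--Tukey complexity analysis here.
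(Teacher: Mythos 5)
Your argument is correct, and the bound comes out exactly right: the per-axis cost $\sum_j (n/n_j)\,T(n_j)$ with $T(p^e)=O(p^e\cdot ep)$ gives $O\bigl(n\sum_j e_jp_j\bigr)=O(n(p_1+\cdots+p_r))$, and the hypotheses ($n_j\mid d$, $\gcd(n,q)=1$) are used in precisely the right places. Note, though, that the paper does not prove this proposition at all — it is quoted from Oberst's article — so your write-up is a self-contained derivation rather than a reproduction of an in-paper argument. Your route differs mildly from Oberst's: you first invoke the structure theorem to split $G$ into cyclic prime-power factors, pass to the tensor product of group algebras via Proposition~\ref{prop:tensor_product_group_algebras}, and then run mixed-radix Cooley--Tukey on each axis; Oberst instead recurses directly along an arbitrary Jordan--H\"older (composition) series $\{1\}=G_0\lhd\cdots\lhd G_r=G$, peeling off one prime-order quotient $G_{i+1}/G_i$ per stage, which avoids choosing a direct-product decomposition and treats all stages uniformly. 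The two are equivalent in substance: your per-factor recursion on $\ZZ/p^e\ZZ$ implicitly builds a composition series refining your chosen decomposition, and by Jordan--H\"older uniqueness its factors (with multiplicity) coincide with the $p_1,\dots,p_r$ of the statement — it would be worth stating that one sentence explicitly, since the proposition is phrased in terms of the composition series rather than in terms of your chosen decomposition. The only other point to keep visible is that invertibility of each Cooley--Tukey stage (and of the final scaling by $n^{-1}$) uses $\gcd(n,q)=1$, which you do flag.
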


\begin{example}\label{example:FFT}
  Proposition~\ref{prop:FFT} encompasses well-known FFT's from the
  literature.
\begin{itemize}
\item   The usual FFT corresponds to $G = \ZZ/2^{t}\ZZ$. In this case, a
  composition series is given by
          \[
          G_{0} = \{0\} \subset \dots \subset G_{i} = 2^{t-i}\ZZ/2^{t}\ZZ \subset \dots \subset G_{t} = G = \ZZ/2^{t}\ZZ,
        \]
        and each factor $G_{i+1}/G_{i}$ is isomorphic to $\ZZ/2\ZZ$,
        and with the above proposition we recover the usual complexity in
        $O(2^{t}\times t) = O(n\log(n))$. However, $\Fq$ needs to be
        large enough to contain a primitive $2^{t}-$th root of
        unity\footnote{When the characteristic of $\Fq$ is not too
        large, an approach based on the Frobenius Fast Fourier
        Transform  can also be exploited to remove this fact.}.
        \vspace{\baselineskip}
  \item Consider the finite field $\F3$ and the group
  $G = (\ZZ/2\ZZ)^{t}$. Example~\ref{example:tensor_product_group_algebras} entails that
  \[
        \F3[G]\simeq \F3[X_{1},\dots,X_{t}]/(X_{1}^{2}-1,\dots, X_{t}^{2}-1).
        \]
        A composition series of $G$ is given by
        \[
        G_{0} = \{0\}^{t} \subset \dots \subset G_{i} = (\ZZ/2\ZZ)^{i}\times \{0\}^{t-i} \subset \dots \subset G_{t}= G = (\ZZ/2\ZZ)^{t},
      \]
        and the FFT can also be computed in time
        $O(2^{t}\times t) = O(n\log(n))$. This is nothing else than
        a $t$-dimensional FFT in $\F3$.
\end{itemize}
\end{example}

\begin{remark}
  Proposition~\ref{prop:FFT} is {\em asymptotic}, although efficient
  implementations exist for several groups and fields. They are
  particularly efficient when $G$ admits a Jordan-Hölder composition
  series with groups of index $2$, such as in the above two examples,
  which allows a simple divide-and-conquer approach. For a more
  precise description of Multivariate FFT algorithms
  (see~\cite[Section 2.2]{VLS13}).
\end{remark}

\iftoggle{longversion}{}{
\subsection{From Decision-\QASD to Search-\QASD}
\label{app:std}
In this section, we describe a reduction from the search version of
$\QASD$ to the decision version, in the concrete chosen instantiations
(all instances over $\Ring = \FF_q[G]$ where $G = (\ZZ/(q-1)\ZZ)^n$,
which is the group we use to obtain \PCGs for \OLEs over
$\FF_q^{(q-1)^n}$). This reduction is actually a natural extension to
that of \cite{C:BomCouDeb22} to the multivariate setting, and
essentially applies in extreme regime of low rate, \ie more in the
\LPN regime. In \cite{C:BomCouDeb22}, the authors introduced a new
problem they called Function Field Decoding Problem ($\FFDP$) that we
recall below, which is the analogue of \RLWE with function fields
instead of number fields (See Section~\ref{sec:ntff} for a quick
reminder on the theory of algebraic function fields).

Let $K/\Fq(T)$ be a function field with constant field $\Fq$ and ring
of integers $\OO_{K}$, and let $Q(T)\in\Fq[T]$ be irreducible. Let
$\gothP\eqdef Q\OO_{K}$ be the ideal of $\OO_{K}$ generated by $Q$.
\FFDP is parametrized by a secret element $\bfs\in\OO_{K}/\gothP$, and
a noise distribution $\psi$ over $\OO_{K}/\gothP$ which is a finite set.

\begin{definition}[\FFDP distribution] A sample
  $(\bfa, \bfb)\in \OO_{K}/\gothP\times \OO_{K}/\gothP$ is distributed
  according to the \FFDP distribution modulo $\gothP$, with secret
  $\bfs$ and noise distribution $\psi$ if
  \begin{itemize}
    \item $\bfa$ is uniformly distributed over $\OO_{K}/\gothP$;
    \item $\bfb=\bfa\cdot\bfs+\bfe$ where $\bfe$ is distributed
          according to $\psi$.
        \end{itemize}
        A sample drawn according to this distribution will be denoted
        by $(\bfa,\bfb)\getsr\mathcal{F}_{\bfs, \psi}$.
\end{definition}
In its search version, the goal of \FFDP is to recover the secret
$\bfs$ given access to enough samples.
\begin{definition}[\FFDP (search version)] Let
  $\bfs\in\OO_{K}/\gothP$, and let $\psi$ be a probability
  distribution over $\OO_{K}/\gothP$. An instance of \FFDP consists in
  an oracle giving access to independent samples
  $(\bfa, \bfb)\getsr \mathcal{F}_{\bfs, \psi}$. The goal is to
  recover $\bfs$.
\end{definition}
In its decision version, the goal is to distinguish between the \FFDP
distribution and the uniform over
$\OO_{K}/\gothP\times \OO_{K}/\gothP$.
\begin{definition}[\FFDP (decision version)] Let $\bfs$ be drawn {\em
    uniformly} at random in $\OO_{K}/\gothP$, and let $\psi$ be a
  noise distribution over $\OO_{K}/\gothP$. Define the following two distributions:
  \begin{itemize}
    \item $\mathcal{D}_{0}:(\bfa, \bfb)$ uniformly distributed over
    $\OO_{K}/\gothP\times\OO_{K}/\gothP$.
    \item $\mathcal{D}_{1}:(\bfa, \bfa\cdot\bfs+\bfe)$ distributed
          according to the \FFDP distribution
          $\mathcal{F}_{\bfs,\psi}$.
        \end{itemize}
        Let $b\in\{0, 1\}$. Given access to an oracle $\code{O}_{b}$
        providing independent samples from distribution
        $\mathcal{D}_{b}$, the goal of the decision \FFDP is to
        recover $b$.
      \end{definition}
      Recall that a distinguisher between two distributions
      $\code{D}_{0}$ and $\code{D}_{1}$ is a PPT algorithm $\code{A}$
      that takes as input an oracle $\code{O}_{b}$ corresponding to
      distribution $\code{D}_{b}$ with $b\in\{0,1\}$ and outputs a bit
      $\code{A}(\code{O}_{b})\in\{0, 1\}$. The distinguisher wins when
      $\code{A}(\code{O}_{b})=b$. Its distinguishing advantage is
      defined as:
\[
  \adv_{\code{A}}(\code{D}_{0}, \code{D}_{1}) \eqdef \dfrac{1}{2} \Bigl( \Prob(\code{A}(\code{O}_{b}) = 1 \mid b = 1) - \Prob(\code{A}(\code{O}_{b}) = 1 \mid b = 0) \Bigr)
\]
and satisfies
\[
  \Prob(\code{A}(\code{O}_{b}) = b) = \frac{1}{2} + \adv_{\code{A}}(\code{D}_{0}, \code{D}_{1}).
\]

      The crucial remark of \cite{C:BomCouDeb22} was to notice that
      some structured variants of the decoding problem could be
      somehow {\em lifted} to the function field setting, and could be
      directly seen as instances of \FFDP.

    \begin{example}
        Let $n\in\NN$ and consider the polynomial
        \[
          F(T, X)\eqdef X^{n}+T-1\in\Fq(T)[X].
        \]
        Eisenstein criterion proves that $F$ is irreducible over
        $\Fq[T]$. Define the function field
        \[
          K \eqdef \Fq(T)[X]/(F(T,X)).
        \]
        Computing partial derivatives shows that the curve defined by
        $F(T, X)$ is non-singular, and therefore $\Fq[T, X]/(F(T,X))$
        is the full ring of integers $\OO_{K}$ of $K$ (see for
        instance \cite[Chapter VII]{L21}). Now, let
        $Q(T)\eqdef T\in\Fq[T]$. Then,
        \[
          \OO_{K}/T\OO_{K} = \Fq[T, X]/(X^{n}+T-1, T) = \Fq[X]/(X^{n}-1) = \Fq[\ZZ/n\ZZ].
        \]
        Therefore, \QASD with the group $\ZZ/n\ZZ$ can be seen as an
        instanciation of \FFDP with the function field
        $K=\Fq(T)[X]/(X^{n}+T-1)$, and modulus $Q(T)=T$.
      \end{example}

      A general search-to-decision reduction for \FFDP would therefore
      immediately provide a search-to-decision reduction for many
      variants of \QASD. However, adapting the reduction of
      \cite{EC:LyuPeiReg10} the authors of \cite{C:BomCouDeb22} were
      only able to give such a reduction with addition algebraic
      constraints on $K$ and $\gothP$. More precisely, they gave the
      following theorem

      \begin{theorem}[Search to decision reduction for
        \FFDP{}]\label{thm:std}
        Let ${K}/{\Fq(T)}$ be a Galois function field of degree $n$
        with field of constants $\Fq$, and denote by $\OO_{K}$ its
        ring of integers. Let $Q(T)\in\Fq[T]$ be an irreducible
        polynomial. Consider the ideal $\mathfrak{P}\eqdef Q\OO_{K}$.
        Assume that $\mathfrak{P}$ does not ramify in $\OO_K$, and
        denote by $f$ its inertia degree. Let $\distrib$ be a
        probability distribution over ${\OO_{K}}/{\gothP}$, closed
        under the action of $\Gal({K}/{\Fq(T)})$, meaning that if
        $\bfe\sample\distrib$, then for any
        $\sigma \in \Gal(K/\Fq(T))$, we have
        $\sigma(\bfe)\sample\distrib$. Let
        $\bfs\in {\OO_{K}}/{\gothP}$.

        Suppose that we have an access to $\ffd{\bfs, \distrib}$
        and there exists a distinguisher between the uniform
        distribution over ${\OO_{K}}/{\goth{P}}$ and the \FFDP{}
        distribution with uniform secret and error distribution
        $\distrib$, running in time $t$ and having an advantage
        $\eps$. Then there exists an algorithm that recovers
        $\bfs\in {\OO_{K}}/{\gothP}$ (with an overwhelming probability
        in $n$) in time
        \[
          O\left( \frac{n^{4}}{f^{3}}\times \frac{1}{\varepsilon^{2}} \times q^{f \deg(Q)}\times t\right).
        \]
\end{theorem}

Unfortunately, not all group algebras arise from Galois extensions of
function fields. Nonetheless, based on the analogy between cyclotomic
number fields and the Carlitz modules, they proposed to instantiate
their reduction with $K=\Fq(T)[\Lambda_{T}]=\Fq(T)[X]/(X^{q-1}+T)$,
and modulus $Q(T)=T+1$. The theory of Carlitz extensions ensures that $\OO_{K}=\Fq[T][X]/(X^{q-1}+T)$, and therefore
\[
  \OO_{K}/(T+1)\OO_{K} = \Fq[T, X]/(T+1, X^{q-1}+T) = \Fq[X]/(X^{q-1}-1) = \Fq[\ZZ/(q-1)\ZZ].
\]

The key point is the fact that $\Gal(K/\Fq(T))=\Fq^{\times}$ and an
element $\zeta\in\Fq^{\times}$ acts on $P(X)\in\Fq[X]/(X^{q-1}-1)$ by
$\zeta\cdot P(X) \eqdef P(\zeta X)$. In particular, the Galois group
keeps invariant the support of {\em any} element, and therefore any
distribution that only depends on the weight is Galois invariant.
Theorem \ref{thm:std} immediately yields a search-to-decision
reduction for \QASD instantiated with the group $G=\ZZ/(q-1)\ZZ$.

\subsubsection{Extension to the multivariate setting.} Consider the
group $G~=~(\ZZ/(q-1)\ZZ)^{t}$, and let
$\mathcal{R}\eqdef \Fq[G] = \Fq[X_{1},\dots, X_{t}]/(X_{1}^{q-1}-1,\dots,X_{t}^{q-1}-1)$.
Using the heavy machinery of inverse Galois theory, it is possible to
find a Galois extension of $\Fq(T)$ with Galois group $G$. However,
this would induce a large overhead in the complexity of the reduction.
Instead, in this case, building on the case of $\Fq[\ZZ/(q-1)\ZZ]$, we
can directly describe the reduction and get Theorem~\ref{thm:std_inst}
from Section~\ref{subsec:std_inst}.

The reduction works as follows. Recall that by the Chinese Remainder
Theorem,
\[
  \mathcal{R}=\prod_{(\zeta_{1},\dots, \zeta_{t})\in(\Fq^{\times})^{t}}\Fq[X_{1},\dots, X_{t}]/(X_{i}-\zeta_{i}),
\]
and fix an ordering of $(\Fq^{\times})^{t}$, which yields an ordering
$\mathfrak{I}_{1},\dots, \mathfrak{I}_{r}$ of the ideals in the above
decomposition (where $r=(q-1)^{t}$):
\[
  \mathcal{R} = \prod_{i=1}^{r}\Fq[X_{1},\dots, X_{t}]/\mathfrak{I}_{i}.
\]

Let $w\in\{0,\dots, (q-1)^{t}\}$ and $\bfs\in\mathcal{R}$. Consider a
noise distribution $\psi=\psi_{w}$ over $\mathcal{R}$ such that
$\Expect[\HW{x}]=w$ when $x$ is sampled according to $\psi$. A sample
$(\bfa, \bfb)$ is distributed according to $\ffd{\bfs, \psi}$ if
$\bfa$ is uniformly distributed in $\mathcal{R}$, and
$\bfb=\bfa\cdot\bfs+\bfe$ where $\bfe\getsr \psi$.

The idea of the reduction is to recover the secret modulo one of the
factors, and then using the action of some group recover the full
secret. We keep a high level, the first steps of the reduction
following {\em exactly} the same path as that of \cite{C:BomCouDeb22}.
The only difference resides in the last step and the considered group
action.

\begin{description}
  \item[Step 1: Randomizing the secret.] In the {\em decision} version, the secret $\bfs$ is supposed to be {\em uniformly} distributed over $\mathcal{R}$, while in the {\em search} version, the secret is {\em fixed}. In other words, the decision version is an average-case problem, while the search version is worst-case. Fortunately, the secret can be easily randomized by sampling some $\bfs'$ uniformly at random in $\mathcal{R}$. Now, for each sample $(\bfa, \bfb)\sample\ffd{\bfs, \psi}$ with a fixed secret $\bfs$, we can build the sample $(\bfa, \bfb + \bfa\cdot\bfs')$ which is distributed according to $\ffd{\bfs+\bfs', \psi}$, and the secret is now uniformly distributed. Feeding the latter sample to a distinguisher, allows to creates a distinguisher for a fixed secret, with exactly the same advantage.
  \item[Step 2: Hybrid argument.] A sample $(\bfa, \bfb)$ is said to follow the hybrid distribution $\mathcal{H}_{i}$ if it is of the form $(\bfa', \bfb'+\bfh)$ where $\bfh$ is uniformly distributed modulo $\mathfrak{I}_{j}$ for $j\le i$, and is $0$ modulo $\mathfrak{I}_{j}$ for $j>i$. Such an $\bfh$ is easily constructed using the Chinese Remainder Theorem. In particular, $\mathcal{H}_{0}=\ffd{s,\psi}$ and $\mathcal{H}_{r}$ is the uniform distribution over $\mathcal{R}$. A simple hybrid argument proves that a distinguisher between $\mathcal{H}_{0}$ and $\mathcal{H}_{r}$ with advantage $\eps$ can be turned into a distinguisher between $\mathcal{H}_{i_{0}}$ and $H_{i_{0}+1}$ for some $i_{0}$, with advantage at least $\frac{\eps}{r}$.
  \item[Step 3: Guess and search.] Given $i_{0}$, the idea is to make a guess $g_{i_{0}}$ for $\bfs$ modulo $\mathfrak{I}_{i_{0}}$ and to use the previous distinguisher to tell whether this guess is correct, or not.
        Define $\bfg, \bfh$ and $\bfv\in\mathcal{R}$ such that:

        \begin{minipage}{0.45\textwidth}
        \[
        \bfg = \left\lbrace
        \begin{array}{ll}
          g_{i_{0}}& \mod \mathfrak{I}_{i_{0}}\\
          0 & \text{elsewhere}
        \end{array}
        \right.
        \]
        \end{minipage}
\begin{minipage}{0.45\textwidth}
        \[
        \bfv = \left\lbrace
        \begin{array}{ll}
          \text{random} & \mod \mathfrak{I}_{i_{0}}\\
          0 & \text{elsewhere}
        \end{array}
        \right.
        \]
        \end{minipage}\ \\

        and $\bfh$ is uniformly distributed modulo $\mathfrak{I}_{j}$ for $j\le i_{0}+1$ and $0$ elsewhere.
        Then, for each sample $(\bfa, \bfb)$, we can build the sample $(\bfa', \bfb')$ where
        \[
        \left\lbrace
        \begin{array}{l}
          \bfa'=\bfa+\bfv\\
          \bfb'=\bfb+\bfh+\bfv\cdot\bfg=\bfa'\cdot\bfs + \bfe +\bfh + \bfv(\bfg-\bfs).
        \end{array}\right.
        \]
        Using the fact that all the factors $\Fq[X_{1},\dots,X_{t}]/\mathfrak{I}_{j}$ are finite fields (isomorphic to $\Fq$), it is easily seen that $(\bfa', \bfb')$ is distributed according to $\mathcal{H}_{i_{0}}$ when the guess is correct, and according to $\mathcal{H}_{i_{0}+1}$ otherwise. Therefore, by the Chernoff-Hoeffding bound, using our distinguisher $\Theta(n(r/\eps)^{2})$ times, one can detect if the guess is correct or not with probability at least $1-2^{-\Theta(n)}$. An exhaustive search on $\Fq$ yields the value of $\bfs\mod\mathfrak{I}_{i_{0}}$.
  \item[Step 4: A group action] This is the only step that changes from the reduction of \cite{C:BomCouDeb22}. We need to find a group $\widehat{G}$  that will replace the Galois group of their reduction in permuting the factors.

        Inspired by the univariate example, let $\widehat{G}\eqdef (\Fq^{\times})^{t}$. It acts on $\mathcal{R}$ by:
        \[
        (\zeta_{1},\dots,\zeta_{t})\cdot P(X_{1},\dots, X_{t}) \eqdef P(\zeta_{1}X_{1},\dots, \zeta_{t}X_{t}).
        \]

        The key observation here is that
        \begin{enumerate}
          \item This action keeps invariant the support of elements in
                $\mathcal{R}$. In particular, the distribution $\psi$
                is invariant under the action of $\widehat{G}$.
                \item $\widehat{G}$ acts transitively on the factors:

                The action of $(\zeta_{1},\dots,\zeta_{t})$ maps the
                ideal $(X_{1}-\gamma_1,\dots,X_{t}-\gamma_{t})$ onto
                $(X_{1}-\zeta_{1}^{-1}\gamma_1,\dots,X_{t}-\zeta_{t}^{-1}\gamma_{t})$.
              \end{enumerate}

        Now, in order to recover $\bfs\mod\mathfrak{I}_{j}$ for $j\neq i_{0}$, it suffices to take the (unique) element $\bfz\in\widehat{G}$ such that $\bfz\cdot \mathfrak{I}_{j} = \mathfrak{I}_{i_{0}}$, and for any sample $(\bfa, \bfb=\bfa\bfs+\bfe)$ we can build $(\bfz\cdot \bfa, \bfz\cdot\bfb = (\bfz\cdot\bfa)(\bfz\cdot\bfs)+\bfz\cdot \bfe)$. Note that $\bfa'$ (resp. $\bfz\cdot\bfe$) is still uniformly distributed over $\mathcal{R}$ (resp. distributed according to $\psi$ since it is $\widehat{G}$-invariant). In other words, $(\bfa', \bfb')$ is distributed according to $\ffd{\bfz\cdot \bfs, \psi}$, and repeating the first three steps of the reduction will yield $\bfz\cdot\bfs\mod \mathfrak{I}_{i_{0}}$, which is equal to $\bfs \mod \bfz^{-1}\cdot \mathfrak{I}_{i_{0}} = \bfs \mod \mathfrak{I}_{j}$, which concludes the reduction.
\end{description}

 }
  }{
    \subsubsection{Duality.} In this work, we mainly use the language
    of parity-check matrices. The key point is that the dual of a
    quasi-$G$ code is still a quasi-$G$ code, and therefore any
    quasi-$G$ code will have a parity-check matrix of a similar shape
    than the generator matrix introduced above. We refer the reader to
    the long version of this article for an actual proof of this fact.

    \subsubsection{Fast-Fourier Transform and Encoding.}\label{subsec:fft}
    Our construction crucially uses the fact that some quasi-$G$ codes
    can be encoded very efficiently by means of a generalisation of
    the Fast-Fourier Transform (FFT) whose complexity only depends on
    the Jordan-Hölder series of $G$. We refer the reader to the long
    version of this article, and to~\cite{Oberst07} for all the
    details.
}

\subsection{The Quasi-Abelian Decoding Problem}\label{sec:qasd}

In this section, we introduce computationally hard problems related to
random quasi-abelian codes. They are variants of the Syndrome Decoding
Problem, restricted to this class of codes.

Let $G$ be a finite abelian group and $\Fq$ a finite field with $q$
elements. Given an integer $t\in\NN$, we denote by $\Dist_{t}(\Fq[G])$
a noise distribution over $\Fq[G]$ such that $\Expect[\HW{x}]=t$ when
$x\getsr \Dist_{t}$, and
$\Dist_{t,n}(\Fq[G])\eqdef \Dist_{t}(\Fq[G])^{\otimes n}$ will denote
its $n$-fold tensorization, \ie $\bfe\getsr \Dist_{t,n}(\Fq[G])$
is $\bfe\in\Fq[G]^{n}$ and its coordinates are drawn independently
according to $\Dist_{t}(\Fq[G])$. A {\em random} quasi-$G$ code of
index $2$, in {\em systematic form}, will be a quasi-$G$ code whose
parity-check matrix $\bfH\in(\Fq[G])^{2}$ is of the form
$\bfH = (\bf1 \mid \bfa)$, where $a$ is uniformly distributed over $\Fq[G]$.
Equivalently, it is the dual of the code generated by $\bfH$. The
search Quasi-Abelian Syndrome Decoding problem is defined as
follows:

\begin{definition}[(Search) \QASD{} problem]
  Given $\bfH = (\bf1 \mid \bfa)$ a parity-check matrix of a random
  systematic quasi-abelian code, a target weight $t\in \NN$ and a
  syndrome $\bfs\in\Fq[G]$, the goal is to recover an error
  $\bfe = (\bfe_{1}\mid \bfe_{2})$ with
  $\bfe_{i}\getsr \Dist_{t}(\Fq[G])$ such that
  $\bfH\bfe^{T} = \bfs$, \ie $\bfe_{1}+\bfa\cdot \bfe_{2} = \bfs$.
\end{definition}

The problem also has a decisional version.

\begin{definition}[(Decisional) \QASD{} problem]\label{def:QASD
    assumption}
  Given a target weight $t$, the goal of this decisional \QASD problem
  is to distinguish, with a non-negligible advantage, between the
  distributions
  \[
    \begin{array}{lcl}
      \mathcal{D}_{0}: &  (\bfa, \bfs)& \text{ where } \bfa, \bfs \getsr \Fq[G]\\
      \mathcal{D}_{1}: & (\bfa, \bfa\cdot\bfe_{1} + \bfe_{2}) & \text{ where } \bfa \getsr \Fq[G] \text{ and } \bfe_{i}\getsr \Dist_{t}(\Fq[G]).
    \end{array}
  \]
\end{definition}

Both assumptions above generalize immediately to the case of
parity-check matrices with more columns and/or rows of blocks.  When
$\bfH = (\bf1 \mid \bfa_1 \mid \cdots \mid \bfa_{c-1})$, for some
parameter $c$, this corresponds to what has been called {Module-\LPN}
in the literature. This corresponds to the hardness of syndrome
decoding for a quasi-abelian code of larger rate $(c-1)/c$.  We call
(search, decisional) $\QASD(c,\Ring)$ this natural generalization of
$\QASD$.

The \QASD assumption states that the above decisional problem should
be hard (for appropriate parameters). When the group $G$ is the
trivial group, this is the usual {\em plain} \SD-assumption, while
when the group $G$ is cyclic\footnote{and $\gcd(q, |G|)=1$}, this is
the \QCSD assumption at the core of Round 4 NIST submissions \BIKE and
\HQC. Those problems, especially their search version, have been
studied for over 50 years by the coding theory community and to this
day, no efficient algorithm is known to decode a random quasi-abelian
code.  This is even listed as an open research problem in the most
recent Encyclopedia of Coding Theory (from 2021)
\cite[Problem~16.10.5]{willems2021}.

\begin{remark}\label{rmk:systematic_form}
  In Definition~\ref{def:QASD assumption}, we consider quasi-abelian
  codes with a parity-check matrix in systematic form. Indeed, assume
  $\bfH=(\bfa_{1}\mid \bfa_{2})\in \Fq[G]^{1\times 2}$. A syndrome of
  $\bfH$ will be of the form
  $\bfs=~\bfa_{1}\bfe_{1}+\bfa_{2}\bfe_{2}$, and therefore is
  contained in the ideal $\mathcal{I}=(\bfa_{1}, \bfa_{2})$ of $\Fq[G]$
  generated by $\bfa_{1}$ and $\bfa_{2}$\footnote{Beware that $\Fq[G]$ is
    not necessarily principal.}. Therefore, when this ideal is {\em
    not} the full ring, there is an obvious bias. When working over a
  large field $\Fq$, elements of $\Fq[G]$ are invertible with high
  probability, and therefore $\mathcal{I}=\Fq[G]$ with overwhelming
  probability. On the other hand, this is not true anymore when
  working over small fields.  Using parity-check matrices in
  systematic form ensures that $1_{G}\in\mathcal{I}$, which removes
  the bias. This is a standard definition (see for instance
  \cite{AABBBBDGGGGMPRSTVZ22,AABBBBDDGLPRVZ22}), though not always
  formulated like that in the literature.
\end{remark}

\subsection{Security Analysis}

In this paragraph, we provide evidence for the \QASD-assumption. Note
first that for $G = \{1\}$ it is nothing but the $\SD$-assumption,
which is well established. Moreover, we argue for security of \QASD
against linear tests (Definition~\ref{def:security_linear_test}). With
Lemma~\ref{lem:security_linear_test} in hand, it suffices to show that
given the parity-check matrix $\bfH$ of a quasi-$G$ code $\code{C}$,
the minimum distance of the code {\em generated by} $\bfH$, \ie the
{\em dual} of $\code{C}$, is large with high probability (over the
choice of $\bfH$). Note that when $G=\{1\}$, it is well--known that
random codes are good, {\em i.e.} meet the Gilbert-Varshamov (GV)
bound (see for instance~\cite{P67,BF02,D23}).

\begin{proposition}[Gilbert-Varshamov] Let $0<\delta<1-\frac{1}{q}$.
  Let $\eps>0$, and let $\code{C}$ be a random code of rate
  $\frac{k}{n}\le(1-h_{q}(\delta)-\eps)$. Then,
  \[
    \Prob\left(d_{min}(\code{C})>\delta n\right) \ge 1-q^{-\eps n},
  \]
  where the probability is taken over the uniform choice of a
  generator matrix of $\code{C}$, and $h_{q}$ denotes the $q$-ary
  entropy function
  \[
    h_{q}(x) \eqdef -x\log_{q}\left(\frac{x}{q-1}\right)-(1-x)\log_{q}(1-x).
  \]
\end{proposition}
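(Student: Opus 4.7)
\medskip
\noindent\textbf{Proof plan.} The plan is to follow the classical random-coding argument, which combines a volume bound on Hamming balls with a union bound over the nonzero codewords of a uniformly random linear code. This is a well-known result, so the sketch is short; the only minor care is the choice of code distribution.

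\medskip
First I would recall the volume bound for Hamming balls: for $0 < \delta < 1 - 1/q$, the number of vectors in $\FF_q^n$ of weight at most $\delta n$ satisfies
\[
  |B_q(0,\delta n)| \;=\; \sum_{i=0}^{\lfloor\delta n\rfloor}\binom{n}{i}(q-1)^i \;\leq\; q^{h_q(\delta)\, n},
\]
which is a standard consequence of Stirling's approximation (or of the inequality $\sum_{i\leq \delta n}\binom{n}{i}(q-1)^i \leq q^{h_q(\delta) n}$ obtained by comparing each term to the sum weighted by a probability distribution concentrated at $\delta n$).

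\medskip
Next I would model ``a random code of rate $k/n$'' as the image $\code{C} = \{\vec x \bfG \mid \vec x \in \FF_q^k\}$ of a uniformly random generator matrix $\bfG\getsr \FF_q^{k\times n}$. The key observation is that for any \emph{fixed} nonzero $\vec x\in \FF_q^k$, the vector $\vec x\bfG$ is uniformly distributed over $\FF_q^n$, since at least one coordinate $x_i$ is invertible and the corresponding row of $\bfG$ is uniform and independent of the others. Consequently,
\[
  \Prob_{\bfG}[\, \HW{\vec x\bfG} \leq \delta n \,] \;=\; \frac{|B_q(0,\delta n)|}{q^n} \;\leq\; q^{(h_q(\delta)-1)n}.
\]

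\medskip
Then I would apply the union bound over the $q^k - 1$ nonzero $\vec x \in \FF_q^k$:
\[
  \Prob_{\bfG}[\, \mindist(\code{C}) \leq \delta n \,]
  \;\leq\; (q^k - 1)\cdot q^{(h_q(\delta)-1)n}
  \;\leq\; q^{k - (1-h_q(\delta))n}.
\]
Substituting the hypothesis $k/n \leq 1 - h_q(\delta) - \eps$ yields the claimed bound $q^{-\eps n}$, which completes the argument. I do not anticipate any real obstacle: the argument is textbook and entirely elementary. The only minor subtleties are (i) making sure the probability space is exactly ``uniform generator matrix'' (any equivalent distribution for random linear codes, e.g.\ uniform parity-check matrix, admits the same argument after replacing ``$\vec x\bfG$'' with ``$\vec v\bfH^\Trans$'' for a fixed nonzero $\vec v$), and (ii) noting that the slight loss between $q^k - 1$ and $q^k$ is harmless. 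For completeness one can cite~\cite{P67,BF02,D23} as stated.
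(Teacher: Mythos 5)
Your argument is correct and complete: the volume bound $\sum_{i\le \delta n}\binom{n}{i}(q-1)^i\le q^{h_q(\delta)n}$, the uniformity of $\vec x\bfG$ for fixed nonzero $\vec x$, and the union bound over the $q^k-1$ nonzero messages give exactly the stated estimate once $k/n\le 1-h_q(\delta)-\eps$ is substituted. The paper itself offers no proof of this proposition --- it is quoted as a classical fact with citations to the literature --- so there is nothing to diverge from; your write-up is precisely the standard random-coding argument those references contain. One harmless remark: the union bound also absorbs the degenerate event that $\vec x\bfG=0$ for some nonzero $\vec x$ (i.e.\ $\bfG$ is not full rank), since such an $\vec x$ contributes a ``codeword'' of weight $0\le\delta n$; so the bound you prove is in fact slightly stronger than needed and covers any reasonable convention for $d_{min}$ of a rank-deficient generator matrix.
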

For the past 50 years, it has been a long trend of research in coding
theory to extend such a result for more general quasi-abelian codes.
For the class of quasi-cyclic codes which are, by far, the most used
quasi-abelian codes in cryptography, a GV-like bound was introduced by
Kasami in~\cite{K74}. Gaborit and Zémor even showed in~\cite{GZ06}
that various families of random double-circulant codes asymptotically
satisfied a logarithmic improvement on this bound. More recently, this
state of affairs was extended by Fan and Lin in \cite{FL15} to {\em
  any} quasi-abelian code, even in the modular case where
$char(\FF_{q})$ is {\em not} coprime to $|G|$. The proof of this
result makes use of the theory of representations of finite abelian
groups in $\Fq$.

\begin{theorem}[{\cite[Theorem 2.1]{FL15}}]\label{thm:GV_QA_code}
  Let $G$ be a finite abelian group, and let ${(\code{C}_{\ell})}_\ell$ be a
  sequence of random quasi-$G$ codes of length $\ell \in\NN$ and rate
  $r\in (0, 1)$. Let $\delta\in (0, 1-\frac{1}{q})$. Then,
  \[
    \lim_{\ell\to\infty}\Prob\left(\frac{d_{min}(\code{C}_{\ell})}{|G|} > \delta \ell\right)=
    \left\lbrace\begin{array}{ll}
                  1 & \text{ if } r < 1-h_{q}(\delta);\\
                  0 & \text{ if } r>1-h_{q}(\delta);
                \end{array}\right.
  \]
  and both limits converge exponentially fast. The above probability
  is taken over the uniform choice of a generator matrix
  $\bfG_{\ell}\in\Fq[G]^{k\times \ell}$ of $\code{C}_{\ell}$.
\end{theorem}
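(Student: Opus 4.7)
The plan is to prove both directions of the Gilbert--Varshamov dichotomy by a first/second moment argument over the codewords, tailored to the $\Fq[G]$-module structure of random quasi-$G$ codes. The backbone is the representation theory of finite abelian groups over $\Fq$, which controls the probability that a fixed nonzero word $v \in \Fq[G]^\ell$ belongs to a uniformly random quasi-$G$ code of rate $r$.

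First, I would handle the \emph{semisimple case}, \emph{i.e.} $\gcd(|G|,q)=1$. By Maschke's theorem combined with Wedderburn, $\Fq[G] \simeq \prod_{i=1}^{s}\FF_{q^{d_i}}$, where the $d_i$ are the sizes of the Galois orbits of characters of $G$ under $\mathrm{Gal}(\overline{\Fq}/\Fq)$. A uniformly random quasi-$G$ code of $\Fq[G]$-rate $k/\ell$ then decomposes as a direct product of independent uniformly random $\FF_{q^{d_i}}$-linear codes of length $\ell$ and dimension $k$, so for a fixed nonzero $v = (v_1,\dots,v_s)$ in the decomposition, $\Pr[v\in\code{C}_\ell] = \prod_{i\,:\,v_i\neq 0} q^{-d_i(\ell-k)}$. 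Summing this probability over all nonzero $v \in \Fq[G]^\ell$ of Hamming weight at most $w=\delta \ell |G|$ and using the standard volume bound $V_q(w,\ell|G|)\le q^{h_q(\delta)\ell|G|(1+o(1))}$, the expected number of such codewords is bounded by $q^{-(1-r-h_q(\delta))\ell|G|\,(1+o(1))}$, which decays exponentially when $r<1-h_q(\delta)$. A Markov bound then yields the first limit.

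For the converse direction ($r>1-h_q(\delta)$), I would compute the first two moments of the number $N_\ell$ of nonzero codewords of weight at most $\delta \ell|G|$. The first moment is now exponentially large in $\ell|G|$; a pairwise-independence style estimate on the second moment (exploiting that for $v,v'$ in distinct $\Fq[G]$-orbits the events $v\in\code C_\ell$ and $v'\in\code C_\ell$ are essentially independent up to a factor controlled by the Wedderburn decomposition) gives $\Expect[N_\ell^2]\le (1+o(1))\Expect[N_\ell]^2$, so Paley--Zygmund forces $\Pr[N_\ell\ge 1]\to 1$.

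The \textbf{main obstacle}, and the reason Fan--Lin's statement goes beyond earlier quasi-cyclic work, is the \emph{modular case} where $p = \mathrm{char}(\Fq)$ divides $|G|$: here $\Fq[G]$ is no longer semisimple and the clean Wedderburn product breaks down. My plan would be to filter $\Fq[G]$ by powers of its Jacobson radical $J$, so that $\Fq[G]/J$ is semisimple and the associated graded is a product of matrix blocks over finite extensions of $\Fq$. One then transports the semisimple counting bound through the filtration, using that each graded piece contributes a block of the same shape analysed above and that the total $\Fq$-dimension is preserved. Carefully bookkeeping the exponents on each graded layer should recover the same $h_q(\delta)$ threshold, but the precise combinatorics of the modular representations of abelian $p$-groups is the delicate technical point and is where I expect essentially all of the work to lie.
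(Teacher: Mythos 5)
Note first that the paper does not prove this statement at all: it is imported verbatim from Fan--Lin \cite{FL15}, and the paper only remarks that the proof ``makes use of the theory of representations of finite abelian groups in $\Fq$.'' So the comparison is with what such a proof must actually accomplish, and here your sketch has a genuine gap in the very first direction. For a nonzero message $\bfm\in\Fq[G]^{k}$, the codeword $\bfm\bfGamma$ is uniform not over $\Fq[G]^{\ell}$ but over $\mathfrak{a}^{\ell}$, where $\mathfrak{a}=(m_{1},\dots,m_{k})\subseteq\Fq[G]$ is the ideal generated by the components of $\bfm$; equivalently, in your membership formulation, $\Prob[v\in\code{C}_{\ell}]$ is about $q^{-\dim_{\Fq}(\mathfrak{a})(\ell-k)}$ when $v$ has component support $\mathfrak{a}$, which is \emph{exponentially larger} than the full-support value $q^{-|G|(\ell-k)}$ you implicitly use. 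Your step ``sum the membership probability over all $v$ of weight at most $\delta\ell|G|$ and invoke the volume bound $q^{h_{q}(\delta)\ell|G|}$'' therefore does not bound the first moment: for example, for the one-dimensional ideal generated by $\sum_{g\in G}g$, the corresponding term is roughly $q^{\dim(\mathfrak{a})k}$ times a ball-size factor and is exponentially large if estimated your way. What is actually needed is a counting lemma for each nonzero ideal $\mathfrak{a}$, namely that the number of elements of $\mathfrak{a}^{\ell}$ of weight at most $\delta\ell|G|$ is about $q^{\dim_{\Fq}(\mathfrak{a})\,\ell\,h_{q}(\delta)}$ (tight already for the repetition-type ideal above); this is exactly where the transitive $G$-action on coordinates (``balanced'' ideal codes) and the representation theory enter in Fan--Lin, and it also explains why the convergence exponent degrades with the smallest ideal dimension, a feature invisible in your computation. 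Without this lemma, the $r<1-h_{q}(\delta)$ direction does not go through.

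Two further points. For the modular case, your plan to ``transport'' the semisimple estimate through the Jacobson-radical filtration is the weakest link of the sketch and is not how the known proof proceeds: Hamming weight in the group basis does not respect the radical filtration, so ball-size estimates do not transfer layer by layer in any obvious way, whereas the ideal-counting route above works uniformly, with no semisimplicity assumption, because every nonzero ideal is a $G$-invariant (hence balanced) code in any characteristic. The second-moment direction is morally fine but also needs the component structure to control pairs $(\bfm,\bfm')$ that are dependent in some Wedderburn factor (and, more simply, one can restrict attention to full-support messages, for which $\bfm\bfGamma$ is uniform over $\Fq[G]^{\ell}$); as stated, ``essentially independent up to a factor controlled by the Wedderburn decomposition'' is an assertion, not an argument.
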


As it is often the case in coding theory, this result is stated
asymptotically, but the convergence speed could be made more precise,
the exponent depends on $|G|$: the larger the group $G$, the higher
this probability. Actually, to assert the resistance of \QASD against
linear attacks, it would be more relevant to consider the regime where
$k, \ell$ are constant and $|G|$ goes to infinity as it is done in
\cite{GZ06} but such a development is out of reach of this article and
we leave it as a conjecture.  There is a caveat though. Indeed, as it
was noticed in Remark \ref{rmk:systematic_form}, in the case of
constant $k, \ell$ and growing $|G|$ there is a bias in the \QASD
distribution when the ideal generated by the blocks in the input
parity-check matrix is not the full ring. This corresponds to the
parity-check matrix not being {\em full-rank} when seen as a matrix
over $\Fq[G]$. In this case, the minimum distance could drop, but
heuristically a random quasi-$G$ code will have a minimum distance
linear in its length as long as this bias is removed, which is the
case in our setting since we enforce the systematic form.

\begin{example}\label{example:linear_attack_group_codes} In order to produce \OLE's over the field $\FF_{p}$, ~\cite{C:BCGIKS20} proposed to use a ring $\Ring$ of  the form $\FF_{p}[X]/(F(X))$ where $F(X)$ is totally split in
  $\FF_{p}$.
  \begin{itemize}
  \item The choice of polynomial $F$ that maximizes the number of \OLE
    would be the polynomial $F(X)=X^{p}-X$ which has precisely all its
    roots in $\FF_{p}$ (This is {\em not} the choice recommended by
    the authors, but is still allowed in their framework). However,
    this ring does not fit in our setting, and in fact the \SD-problem
    in this ring is vulnerable to a very simple linear attack: given
    $(a,b)$ where $b$ is either random or equal to
    $a\cdot e+f \bmod X^p-X$, it holds that
    $e(0) = f(0) = 0 \bmod X^q-X$ with high probability, because
    $e(0) = f(0) = 0$ over $\FF_p[X]$ with high probability (since
    $e,f$ are sparse, their constant coefficient is likely to be
    zero), and reduction modulo $X^p-X$ does not change the constant
    coefficient. Hence, the adversary can distinguish $b$ from random
    simply by computing $b(0)$ (since $b(0)$ is nonzero with
    probability $(p-1)/p$ for a random $b$).
    \item However, by simply removing the $X$ factor and setting
          $F(X) = X^{p-1}-1$, which would yield $p-1$ copies of
          $\FF_{p}$ instead of $p$, the ring
          $\mathcal{R} = \FF_{p}[X]/(X^{p-1}-~1)$ is nothing else than
          the group ring $\FF_{p}[\FF_{p}^{\times}]$ and totally fits in
          our framework. In particular, it resists linear attacks.
          Note that the previous evaluation at $0$ does no longer make sense.
  \end{itemize}
\end{example}

\section{Pseudorandom Correlation Generators from \QASD}
\label{sec:pcg}

In the following we always consider $\mathcal{R} = \Fq[G] = \left\{ \sum_{g\in G}a_{g}g \mid a_{g}\in \Fq \right\},$ with $G$ an abelian group.
We refer to $\Ring_t$ as the set of ring elements of $\mathcal{R}$ of maximum weight $t$.

\subsection{A Template for Programmable \PCG for \OLE from \QASD}

\begin{theorem}
\label{pcg:thm_genericPCG}
    Let $\Ring = \FF_q[G] $. Assume that \SPFSS is a secure \FSS{} scheme for sums of point functions, and that the $\QASD(c,\Ring)$ assumption holds. Then there exists a generic construction scheme to construct a \PCG{} to produce one \OLE correlation (described on Fig.~\ref{pcg:figureOLE}). If the \SPFSS is based on a $\PRG : \{0,1\}^{\lambda} \rightarrow \{0,1\}^{2\lambda +2}$ via the \PRG{}-based  construction from~\cite{CCS:BoyGilIsh16}, we obtain:

\begin{itemize}
    \item[$\bullet$] Each party's seed has maximum size around :  $(ct)^2 \cdot ((\log |G| - \log t +1 ) \cdot (\lambda +2) + \lambda + \log q) + ct (\log |G| + \log q) $ bits
    \item[$\bullet$] The computation of \Expand can be done with at most $ (2 + \lfloor(\log q)/\lambda \rfloor ) |G| c^2 t$ \PRG{} operations, and $O(c^2 |G| \log |G|)$ operations in $\Fq$.
\end{itemize}

\end{theorem}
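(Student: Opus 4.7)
The plan is to lift the Boyle et al.~\cite{C:BCGIKS20} template for a \PCG{} for ring-\OLE{} from polynomial rings to the group algebra $\Ring = \FF_q[G]$, exploiting the fact established in Section~\ref{sec:qac} that products of sparse elements of $\Ring$ remain sparse (with weights multiplying). Concretely, the construction will fix public $a_0=1, a_1, \dots, a_{c-1} \getsr \Ring$; \Gen{} samples for each party $\sigma \in \{0,1\}$ and each index $i \in [c]$ a sparse $e_i^\sigma \getsr \Dist_t(\Ring)$, then uses \SPFSS{} to secret-share the $c^2$ products $e_i^0 \cdot e_j^1$, each of which is a $t^2$-sparse element of $\Ring$. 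Party $\sigma$'s seed $\key_\sigma$ consists of its own sparse vectors $(e_i^\sigma)_i$ and its $c^2$ \SPFSS{} keys. In \Expand{}, party $\sigma$ first computes $x_\sigma \gets \sum_i a_i e_i^\sigma$, then invokes \FullEval{} on every \SPFSS{} key to recover additive shares $u_{i,j}^\sigma$ of $e_i^0 e_j^1$, and outputs $(x_\sigma, z_\sigma)$ with $z_\sigma \gets \sum_{i,j} a_i a_j \cdot u_{i,j}^\sigma$.

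Correctness is immediate: $x_0 x_1 = \sum_{i,j} a_i a_j e_i^0 e_j^1 = \sum_{i,j} a_i a_j (u_{i,j}^0 + u_{i,j}^1) = z_0 + z_1$, which is exactly the ring-\OLE{} relation over $\Ring$. For security I would proceed by a two-step hybrid: first invoke \SPFSS{} security $c^2$ times to replace, in the view of party $\sigma$, each counterparty key $K_{i,j}^{1-\sigma}$ by a simulated one sampled using only the data held by party $1-\sigma$. After this game hop, the only information about $(e_i^{1-\sigma})_i$ appearing in party $\sigma$'s view is encoded in the public $a_i$'s and in the other party's output $x_{1-\sigma} = \sum_i a_i e_i^{1-\sigma}$. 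Replacing $x_{1-\sigma}$ by a uniform element of $\Ring$ is then a direct invocation of the $\QASD(c, \Ring)$ assumption, and completes the security reduction by matching the reverse-sampling characterisation of \PCG{} security.

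The efficiency bounds follow by a careful accounting under the standard regular-noise convention, in which each $e_i^\sigma$ has exactly one nonzero coefficient in each of $t$ equal-sized blocks of $G$ (so that the product $e_i^0 e_j^1$ has support lying in a union of $t^2$ blocks of size $|G|/t$). Under this structure, each $t^2$-point function can be shared by $t^2$ independent distributed point functions over a domain of size $|G|/t$; with the \PRG{}-based DPF of~\cite{CCS:BoyGilIsh16} each such key is bounded by $(\log(|G|/t)+1)(\lambda+2) + \lambda + \log q$ bits, and summing over $(i,j) \in [c]^2$ yields the $(ct)^2(\cdots)$ term, to which we add the $ct(\log|G| + \log q)$ bits describing the sparse vectors themselves. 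The \Expand{} cost is dominated by \FullEval{} on the $(ct)^2$ DPFs, each requiring $(2+\lfloor \log q/\lambda\rfloor) \cdot |G|/t$ \PRG{} calls, for the claimed total; the subsequent $c^2$ multiplications by $a_i a_j$ in $\Ring$ are handled by the FFT of Section~\ref{subsec:fft} at cost $O(|G|\log|G|)$ each.

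The main obstacle I anticipate is the clean formulation of the \SPFSS{} hybrid in the security reduction: one must verify that after simulating the counterparty's keys, the \emph{joint} distribution of party $\sigma$'s view and party $1-\sigma$'s expanded output is exactly an \SPFSS-simulated instance whose inputs are the $(e_i^{1-\sigma})_i$'s, so that the pseudorandomness of $x_{1-\sigma}$ under \QASD{} translates directly into the \PCG{} indistinguishability definition (which requires \emph{both} parties' views simultaneously). The rest of the argument --- correctness, sparse product preservation, and FFT-based complexity --- is routine given the group-algebra machinery of Section~\ref{sec:qac}.
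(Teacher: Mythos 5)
Your proposal is correct and essentially reproduces the paper's own proof: the construction is exactly that of Fig.~\ref{pcg:figureOLE}, correctness is the same tensor identity $z_0+z_1=\langle \vec a\otimes\vec a,\ \vec e_0\otimes\vec e_1\rangle = x_0\cdot x_1$, security is the same two-part hybrid (simulate the \SPFSS{} keys contained in the seed handed to the distinguisher via one-key \FSS{} security, then invoke \QASD{} to replace the honest party's $x$ by a uniform ring element, which matches the reverse-sampled distribution), and your regular-noise/DPF accounting yields the stated seed size and \PRG{}/FFT costs. The only blemish is notational: the keys you simulate should be the halves $K^{\sigma}_{i,j}$ stored inside the corrupted party's seed (whose sibling keys are held by the counterparty), after which the remaining dependence on the \emph{honest} party's noise is only through its syndrome, exactly as in the paper's hybrid.
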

The protocol, adapted from the work of Boyle et al.~\cite{C:BCGIKS20},
is described on Fig.~\ref{pcg:figureOLE}. We first present an overview.
Remind that an instance of the \OLE correlation consists in giving a
random value $x_{\sigma} \in \mathcal{R}$ to party $P_{\sigma}$ as
well as an additive secret sharing of
$x_{0} \cdot x_{1} \in \mathcal{R}$ to both. Formally:
\begin{equation*}
    \left \{ ((x_0,z_0),(x_1,z_1)) | x_0,x_1,z_0 \overset{\$}{\leftarrow} \mathcal{R} , z_1 + z_0 = x_0 \cdot x_1 \right \} .
\end{equation*}

The core idea of the protocol is to give the two parties a random vector $\Vec{e_0}$ or $\Vec{e_1} \in \Ring_t^c$, where each element of the vector is sparse. In addition, parties have access to a vector $\Vec{a} = (1,\Vec{\dot{a}})$, with $\Vec{\dot{a}} = (a_1,\cdots,a_{c-1})$, a vector of random elements of $\Ring$. We see the vector $\bfe_{\sigma}$ of party $P_{\sigma}$ as an error vector. Using the vector $\Vec{a}$, parties can locally extend their error vector and construct $x_{\sigma} = \langle \Vec{a}, \Vec{e_{\sigma}} \rangle$, which is pseudorandom under $\QASD$.

We want to give the parties shares of $x_0 \cdot x_1$. Note that $x_0 \cdot x_1$ is a degree 2 function in $(\Vec{e_0}, \Vec{e_1})$; therefore, it suffices to share $\Vec{e_0} \otimes \Vec{e_1}$. We underline a property of the sparse elements in $\Ring_t$. Let $e,f$ be sparse elements. This means that there exist sets $S_e, S_f \subset G$, such that $e = \sum_{g\in S_e}e_{g}g , f = \sum_{g\in S_f}f_{g}g$ with $e_{g},f_{g}\in \Fq $ and $|S_e| =|S_f| = t \leq |G|$. It follows that the product of $e \cdot f$ can be expressed using only
\(
S_e \cdot S_f \eqdef \{gh ~|~ g \in S_e,~ h\in S_f\}
\)
as basis.
We conclude with $|S_e \cdot S_f| < |S_e| \cdot |S_f| = t^2$, to deduce that the product of sparse vectors in $\Ring$ also gives us sparse vectors (with sparsity $t^2$ instead of $t$). We note that here, we deviate from the original construction of~\cite{C:BCGIKS20}: over a ring of the form $\FF_q[X]/P(X)$ where $P$ is some polynomial, it is not generally true that the product of sparse elements remains sparse. This is circumvented in~\cite{C:BCGIKS20} by sharing the product over $\FF_q[X]$ instead, and reducing locally. When using group algebras as we do, the product preserves sparsity and we can share the product directly within $\FF_q[G]$, which is slightly more efficient.

This result enables us to express each element of
$\Vec{e_0} \otimes \Vec{e_1}$ as a sum of $t^2$ point functions. Then,
we rely on \iftoggle{longversion}{\SPFSS
    (Definition~\ref{def-spfss})}{Sum of Point Function Secret Sharing
    (\SPFSS~\cite[Definition~2.3]{C:BCGIKS20})}. Recall that an \SPFSS
  takes as input a sequence of points as well as a vector of values,
  and produces two keys that can be use to find shares of the sum of
  the implicit point functions. When a party evaluates its key at each
  point in the domain, it obtains a pseudorandom secret sharing of the
  coefficients of the sparse element in $\Ring_t$. The protocol uses
  $c^2$ elements of $\Ring_t$ as a result of the tensor product. This
  means that we need $c^2$ instances of \SPFSS for $t^2$ point
  functions. This gives us a seed size of
  $O(\lambda (ct)^2 \log |G|) = O(\lambda^3 \log
  |G|)$.

\begin{proof}[of Theorem \ref{pcg:thm_genericPCG}]
First, we argue the correctness of the protocol. The coefficient vectors $\Vec{b_{\sigma}^i},\Vec{A_{\sigma}^i}$ define a random element in $\Ring_t$. We can rewrite the product of two of these elements as follows:
\begin{equation*}
    e_0^i \cdot e_1^j = \sum_{k,l \in [0..t)} \Vec{b_0^i}[k] \cdot \Vec{b_1^j}[l] \Vec{A_{0}^i}[k] \Vec{A_{1}^j}[l].
\end{equation*}
This can indeed be described by a sum of point functions. From this point, $\Vec{u} = \Vec{u_0} + \Vec{u_1}$, then $\Vec{u} = \Vec{e_0} \otimes \Vec{e_1}$, each entry being equal to one of those $e_0^i \cdot e_1^j$.
The party obtains $z_{\sigma}$ as an output, and we can verify:
\begin{equation*}
    z_0 + z_1 = \langle \Vec{a} \otimes \Vec{a} , \Vec{u_0} + \Vec{u_1} \rangle = \langle \Vec{a} \otimes \Vec{a} ,  \Vec{e_0} \otimes \Vec{e_1}\rangle = \langle \Vec{a} \otimes \Vec{e_0} \rangle \cdot \langle   \Vec{a} \otimes \Vec{e_1}\rangle = x_0 \cdot x_1.
\end{equation*}
The next-to-last equality is straightforward to check.
Note that here, $\langle \Vec{a} ,\Vec{e_{\sigma}}\rangle$ is a $\QASD$ sample, with fixed random $\Vec{a}$ and independent secret $e_{\sigma}$. We now briefly show sketch security (the analysis is essentially identical to~\cite{C:BCGIKS20} since the construction is ``black-box'' in the ring $\Ring$, we sketch it for completeness). As the two cases are symmetrical, we assume $\sigma= 1$. Let $(k_0,k_1) \overset{\$}{\leftarrow} \PCG.\Gen(1^{\lambda})$ with associated expanded outputs $(x_0,z_0)$ and $(x_1,z_1)$, we need to show that
\begin{equation*}
    \{(k_1,x_0,z_0)\} \equiv \left \{(k_1,\Tilde{x_0},\Tilde{z_0})~|~\Tilde{x_0} \overset{\$}{\leftarrow} \Ring, \Tilde{z_0} = \Tilde{x_0} \cdot x_1 - z_1 \right \}.
\end{equation*}
To show this, we use a sequence of hybrid distributions.
\begin{itemize}
    \item Replace $z_0$ by $x_0 \cdot x_1 - z_1$.
    \item Step by step replace each the \FSS{} key $K_1^{i,j}$ in $k_1$ by a simulated key generated only with the range and the domain of the function. Due of the correctness and the security properties of the \FSS{} scheme, this distribution is indistinguishable from the original distribution.
    \item Replace $x_0$ by a fresh $\Tilde{x}_0$. It is also impossible to distinguish this distribution from the previous one, since the $K_1^{i,j}$ are now completely independent of $x_0$, and we can rely on the $\QASD$ assumption.
    \item Reverse step 2 by using the \FSS{} security property once again.  \qed
\end{itemize}

\end{proof}

Regarding the size of the different parameters, we use the optimization suggested in~\cite{C:BCGIKS20}, such as assuming that the $\QASD$ assumption holds also for \textit{regular error distributions} (we note that our proof of resistance against linear tests holds for very general noise distributions, and in particular for the regular noise distribution). We can thus reduce the seeds size to $(ct)^2 \cdot ((\log |G| - \log t +1 ) \cdot (\lambda +2) + \lambda + \log q) + ct (\log N + \log q)$ bits ; and the number of \PRG{} calls in \Expand down to $(2 + \lfloor(\log q)/\lambda \rfloor )  |G| c^2 t$. Note that to achieve security, choosing $ct = O(\lambda)$ is sufficient. The number of \PRG{} calls can be further reduced to $O(|G|c^2)$ using batch codes to implement the \SPFSS.

\begin{theorem}
    The \PCG construction for \OLE from Fig.~\ref{pcg:figureOLE} is programmable.
\end{theorem}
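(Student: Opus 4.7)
The plan is to exhibit a programmable variant $\mathsf{Gen}^*$ of $\Gen$ that separates the generation of each party's ``input share'' from the generation of the \FSS{} keys encoding shares of their product, and then verify both the output-consistency and the security properties that define programmability (see e.g.~\cite{C:BCGIKS20}). Concretely, recall from Fig.~\ref{pcg:figureOLE} that party $P_\sigma$'s expanded output is $x_\sigma = \langle \vec a, \vec e_\sigma\rangle$, where $\vec e_\sigma \in \Ring_t^c$ is the sparse vector encoded in party $\sigma$'s seed, and the shared value $z_\sigma$ is reconstructed by locally expanding $c^2$ \SPFSS{} keys whose underlying multi-point functions describe the sparse product $\vec e_0 \otimes \vec e_1 \in \Ring$. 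The key structural observation is that $x_\sigma$ depends only on $\vec e_\sigma$ and the public $\vec a$, while the \SPFSS{} layer is only responsible for additively sharing the cross term $\vec e_0 \otimes \vec e_1$.

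Given this, I would define $\mathsf{Gen}^*(1^\secpar,\sigma,\vec e_\sigma)$ to take as input the side $\sigma$ together with the sparse vector $\vec e_\sigma$ that party $\sigma$ should use, and output a partial seed $\rho_\sigma$ consisting of $\vec e_\sigma$ in its \SPFSS-compatible encoding (i.e.\ the list of nonzero positions and values of each $e_\sigma^i$). The full programmable generator then samples the two partial seeds independently via $\mathsf{Gen}^*$ and, in a second step, runs the \SPFSS{} key generation on the $c^2$ sums of $t^2$ point functions determined by $\vec e_0\otimes\vec e_1$, attaching the resulting $(K_0^{i,j},K_1^{i,j})$ to each party's partial seed to form $\key_0,\key_1$. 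Correctness of expansion follows verbatim from the correctness analysis already carried out in Theorem~\ref{pcg:thm_genericPCG}: $x_\sigma=\langle \vec a,\vec e_\sigma\rangle$ depends only on $\rho_\sigma$, and $z_0+z_1 = \langle \vec a\otimes \vec a,\vec e_0\otimes \vec e_1\rangle = x_0 x_1$ as before.

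For programmable security, I need to check the two standard properties. The first is that the distribution of the programmable output $x_\sigma$ on a freshly sampled $\rho_\sigma$ matches the distribution of $x_\sigma$ produced by the honest $\Gen$; this follows because sampling $\vec e_\sigma\getsr \mathcal{D}_{t,c}(\Ring)$ inside $\mathsf{Gen}^*$ reproduces exactly the distribution of the sparse vector used in $\Gen$, and $\vec a$ is public. The second is that, fixing $\rho_\sigma$ and the associated $x_\sigma$, the joint distribution of $\bigl(\key_{1-\sigma},x_\sigma\bigr)$ is computationally indistinguishable from $\bigl(\key_{1-\sigma},\tilde x_\sigma\bigr)$ with $\tilde x_\sigma$ uniform conditioned on compatibility with $x_{1-\sigma}$ and $z_0,z_1$. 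This is where the real work lies: one runs the same hybrid sequence as in the security proof of Theorem~\ref{pcg:thm_genericPCG}, replacing the \SPFSS{} keys held by the other party by simulated keys (by \FSS{} security, since $\rho_\sigma$ fixes only \emph{one half} of the sparse product, the other half is still hidden), and then invoking $\QASD(c,\Ring)$ on the $(1-\sigma)$-side to replace $x_{1-\sigma}$ by random.

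The only subtle step — and the one I expect to need care — is the last hybrid: because the programmable interface \emph{reveals} $\vec e_\sigma$, we can no longer invoke $\QASD$ on the $\sigma$-side (as is done in the non-programmable proof). The point is that we do not need to: programmability only asks pseudorandomness of the \emph{other} side's output given one's own seed, and on the $(1-\sigma)$-side the secret $\vec e_{1-\sigma}$ remains untouched by $\mathsf{Gen}^*$, so a single invocation of $\QASD(c,\Ring)$ suffices. Once this is in place, the remaining calculations are routine repetitions of the arguments in Theorem~\ref{pcg:thm_genericPCG}, and we conclude that the construction is programmable. \qed
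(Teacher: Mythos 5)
Your proposal is correct and follows essentially the same route as the paper: you take $\rho_\sigma$ to be the sparse-noise description $(\Vec{A_{\sigma}^i},\Vec{b_{\sigma}^i})$, observe that $x_\sigma=\langle \Vec{a},\Vec{e_\sigma}\rangle$ is locally computable from it (which defines the maps $\phi_\sigma$), reuse the correctness argument of Theorem~\ref{pcg:thm_genericPCG}, and prove programmable security by the same hybrid sequence based on \FSS{} security and the $\QASD$ assumption. Your added remark about only needing $\QASD$ on the non-programmed side is consistent with (and a slightly more explicit account of) the paper's sketch.
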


\begin{proof}
    In order to show that our \PCG{} is programmable we have to transform it a little, as the \Gen functionality takes additional inputs $(\rho_0,\rho_1)$ in the programmability definition. In our case, we can choose $\rho_{\sigma}= \left \{\Vec{A_{\sigma}^i}, \Vec{b_{\sigma}^i} \right\}$. In this way, as explained in the description of the protocol, the additional input of the players can be seen as a vector of elements in $\Ring_t$, $\Vec{e_{\sigma}} =  (\Vec{e_{\sigma}^0},\cdots,\Vec{e_{\sigma}^{c-1}})$. Because $x_{\sigma} = \langle \Vec{a},\Vec{e_{\sigma}} \rangle$, the players can compute their first input locally, after expanding their $\rho_{\sigma}$ into $e_{\sigma}$. This defines functions $\phi_{\sigma}$, and proves the programmability property.
    The proof of the correctness property is the same as in the proof of the Theorem \ref{pcg:thm_genericPCG}. The programmable security property can be proven with s sequence of hybrid distribution as in the proof of Theorem \ref{pcg:thm_genericPCG}, using the reduction to \FSS{} scheme and the $\QASD$ assumption.
    \qed

\end{proof}

\subsubsection{Distributed Seed Generation}

The protocol described in Fig.~\ref{pcg:figureOLE} assumes that a
trusted dealer has given the parties their seed. What we want to do in
practice is to achieve the \Gen phase via a distributive setup
protocol.

\begin{figure}[ht]

\fbox{

\begin{minipage}[c]{0.95\textwidth}

\begin{center}
\textbf{Functionality} $\QASD_{\sf \OLE{}-Setup}$
\end{center}

\textsc{Parameters}: Security parameter $1^{\lambda}$, $\PCG_{\OLE{}} = (\PCG_{\OLE{}}.\Gen, \PCG_{\OLE{}}.\Expand)$ as per Fig.~\ref{pcg:figureOLE}

\textsc{Functionality}:
\begin{enumerate}
    \item Sample $(k_0,k_1) \leftarrow \PCG_{\OLE{}}.\Gen(1^{\lambda})$.
    \item Output $k_{\sigma}$ to party $P_{\sigma}$ for $\sigma \in \{0,1\}$
\end{enumerate}

\end{minipage}
}
\caption{Generic functionality for the distributed setup of \OLE{}\ \PCG seeds}
\label{pcg:figureOLE_setup}
\end{figure}

\begin{figure}[ht]

\fbox{

\begin{minipage}[c]{0.95\textwidth}

\begin{center}
\textbf{Functionality} $\QASD_{\sf \OLE{}-All}$
\end{center}

\textsc{Parameters}: Security parameter, a group $G$, and a ring
$\Ring = \Fq[G]$.

\textsc{Functionality}:

If both parties are honest:
\begin{itemize}
    \item Sample $x_0,x_1 \leftarrow \Ring$
    \item Sample $z_0 \overset{\$}{\leftarrow} \Ring$ and let $z_1 = x_0 \cdot x_1 - z_0$.
    \item Output $(x_{\sigma},z_{\sigma})$ to party $P_{\sigma}$ for $\sigma \in \{0,1\}$.
\end{itemize}

If party $P_{\sigma}$ is corrupted:

\begin{itemize}
    \item Wait for input $(x_{\sigma},z_{\sigma}) \in \Ring^2$ from the adversary.
    \item Sample $x_{1-\sigma} \leftarrow \Ring$ and set $z_{1-\sigma} = x_0 \cdot x_1 - z_{\sigma}$
    \item Output $(x_{1-\sigma},z_{1-\sigma})$ to the honest party.
\end{itemize}

\end{minipage}
}
\caption{\OLE{} Functionality with Corruption}
\label{pcg:figureOLE_all}
\end{figure}

\begin{theorem}[From~\cite{C:BCGIKS20}]\label{pcg:theoremEstimateCostsOLEsetup}
  There exists a protocol securely realizing the functionality
  $\QASD_{\sf \OLE{}-Setup}$ of Fig.~\ref{pcg:figureOLE_setup} against
  malicious adversaries, with complexity:

\begin{itemize}
    \item[$\bullet$] Communication costs per party dominated by $(ct)^2 \cdot ((2 \lambda + 3 ) \log 2|G| + (9 t +2) \log (q-1))$.

    \item[$\bullet$] Computation is dominated by $2 |G|$ \PRG{} evaluations.
\end{itemize}

\end{theorem}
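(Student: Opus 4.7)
The strategy is to transport the distributed \PCG{} setup protocol of~\cite{C:BCGIKS20} to the group-algebra setting essentially verbatim, exploiting the fact that the \PCG{} construction of Fig.~\ref{pcg:figureOLE} is black-box in the underlying ring $\Ring = \Fq[G]$: it only requires the ability to sample sparse elements of $\Ring$, multiply elements of $G$ (which determine the positions of the sparse outputs), and multiply elements of $\Fq$ (which determine the values).

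\textbf{Reduction to distributed \SPFSS{} generation.} First I would observe that each party's seed splits into two parts: (i) $c$ sparse vectors $(\vec{A}_\sigma^i, \vec{b}_\sigma^i)_{i<c}$, which each party $P_\sigma$ samples locally with no communication, and (ii) $c^2$ \SPFSS{} keys, one per pair $(i,j) \in [0,c)^2$, encoding a sum of $t^2$ point functions whose positions are the products $\vec{A}_0^i[k] \cdot \vec{A}_1^j[\ell] \in G$ and whose values are the products $\vec{b}_0^i[k] \cdot \vec{b}_1^j[\ell] \in \Fq$. Hence the task reduces to $c^2$ independent distributed \SPFSS{} key generations, each for a function whose positions and values are split multiplicatively between the two parties.

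\textbf{Invoking the \cite{C:BCGIKS20} protocol.} For each such instance I would run the distributed GGM-based point-function key generation from~\cite{C:BCGIKS20}. The tree traversal over a domain of size $|G|$ is realised through $\log|G|$ oblivious transfers of $\lambda$-bit correction words, contributing roughly $(2\lambda+3) \log 2|G|$ bits per point function; the final value, which must equal $\vec{b}_0^i[k] \cdot \vec{b}_1^j[\ell]$, is shared via an OT-based multiplication in $\Fq$ together with the malicious-security consistency checks of~\cite{C:BCGIKS20}, which aggregate to the $(9t+2)\log(q-1)$ term in the bound after summing over all $t^2$ points of a single \SPFSS{} instance. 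Summing over all $(ct)^2$ point functions yields the announced per-party communication cost, while the $2|G|$ PRG evaluations arise from the full-domain evaluation of the \SPFSS{} trees performed during the consistency-verification phase.

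\textbf{Main obstacle.} Malicious security follows from the same cut-and-choose and consistency-check compilation as in~\cite{C:BCGIKS20}, which operates on abstract field and group elements and therefore transfers unchanged to $\Fq[G]$. The main delicate point in writing out a complete proof is verifying that every step of the \cite{C:BCGIKS20} protocol uses only the primitives (sparse sampling, group multiplication, field multiplication) that our setting supports; conversely, their protocol is recovered as the special case where $\Fq[G]$ is replaced by $\FF_p[X]/(P(X))$ for split $P$. Once this dictionary is fixed, both the security proof and the parameter accounting carry over term by term, so no new cryptographic idea is required and the extension is essentially mechanical.
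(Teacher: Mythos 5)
Your proposal matches the paper's treatment: the paper states this theorem as imported directly from~\cite{C:BCGIKS20}, with no new proof, precisely on the grounds you give — the seed-distribution protocol there is oblivious to the concrete ring, and the construction of Fig.~\ref{pcg:figureOLE} is black-box in $\Ring = \Fq[G]$, so the protocol and its cost accounting carry over with $N$ replaced by $|G|$. Your finer breakdown of where each cost term originates goes beyond what the paper writes, but it is consistent with the cited construction and with the paper's stated bounds.
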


Taking $ct = O(\lambda)$ is enough to achieve exponential security.
With this we can conclude a general result:

\begin{theorem}
\label{pcg:theoremEstimateCostOLE-all}
    Let G be a group, and $\Ring = \Fq[G]$. Suppose that \SPFSS{} is a secure FSS scheme for sums of point functions, and the $\QASD(c,\Ring)$ assumption. Then there exists a protocol securely realizing the  $\QASD_{\sf{\OLE{}-All}}$ functionality over the ring $\Ring$ with the following parameters
   \begin{itemize}
    \item[$\bullet$] Communication costs and size of the seed : $O(\lambda^3 \log |G|)$.
    \item[$\bullet$] Computation costs : $O(\lambda |G|)$ \PRG{} evaluations and $O(c^2 |G|\log |G|)$ operations in $\Fq$.
\end{itemize}
\end{theorem}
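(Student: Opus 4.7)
The plan is to obtain the protocol realising $\QASD_{\sf{\OLE{}-All}}$ by sequential composition of two ingredients already established in the paper. First, I invoke Theorem~\ref{pcg:theoremEstimateCostsOLEsetup}, which provides a maliciously secure protocol implementing the seed distribution functionality $\QASD_{\sf \OLE{}-Setup}$ of Fig.~\ref{pcg:figureOLE_setup}. Second, once each party $P_\sigma$ holds its seed $\key_\sigma$, they each run $\PCG_{\OLE{}}.\Expand(\sigma,\key_\sigma)$ locally, as specified in Fig.~\ref{pcg:figureOLE}, to produce their output pair $(x_\sigma,z_\sigma)\in\Ring^2$. The full protocol is thus the setup protocol followed by a purely local expansion phase.

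For correctness and security, I would argue as follows. When both parties are honest, Theorem~\ref{pcg:thm_genericPCG} ensures that $(x_0,x_1,z_0,z_1)$ is a pseudorandom instance of the \OLE correlation over $\Ring$ under the $\QASD(c,\Ring)$ assumption, matching the honest output distribution of $\QASD_{\sf{\OLE{}-All}}$. When one party is corrupted, the setup protocol of Theorem~\ref{pcg:theoremEstimateCostsOLEsetup} is maliciously secure, so the simulator can extract the adversary's effective seed $\key_\sigma$ during the setup phase, compute $(x_\sigma,z_\sigma)=\PCG_{\OLE{}}.\Expand(\sigma,\key_\sigma)$, forward this pair to the ideal functionality as the corrupted party's input, and receive $(x_{1-\sigma},z_{1-\sigma})$ for the honest party. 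The programmability and pseudorandomness properties established in Theorem~\ref{pcg:thm_genericPCG} then guarantee indistinguishability of the real and ideal executions. Because the local expansion phase involves no communication, no additional simulation effort is required beyond that of the setup protocol.

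For the complexity analysis, I set $ct = O(\lambda)$, which by Theorem~\ref{pcg:thm_genericPCG} suffices for security. Theorem~\ref{pcg:theoremEstimateCostsOLEsetup} then gives communication per party dominated by $(ct)^2\cdot((2\lambda+3)\log 2|G| + (9t+2)\log(q-1)) = O(\lambda^3\log|G|)$ (absorbing $\log q$ into the $\poly(\lambda)$ factors, as $q$ is treated as a constant for parameter purposes). The setup computation requires $O(|G|)$ \PRG evaluations; the local expansion, by Theorem~\ref{pcg:thm_genericPCG}, requires $(2+\lfloor(\log q)/\lambda\rfloor)|G|c^2 t = O(\lambda|G|)$ \PRG evaluations plus $O(c^2|G|\log|G|)$ operations in $\FF_q$ (the latter accounting for the FFT-based multiplications by $\vec a$ and $\vec a\otimes \vec a$ in the group algebra, using the generalised FFT of Section~\ref{subsec:fft}). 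Adding the two phases yields the claimed bounds.

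The main obstacle, and really the only nontrivial point, is the security argument in the malicious setting: one must check that the simulator extracted from the setup protocol of Theorem~\ref{pcg:theoremEstimateCostsOLEsetup} yields an effective input $(x_\sigma,z_\sigma)$ consistent with an honest run of \Expand, so that forwarding this pair to $\QASD_{\sf{\OLE{}-All}}$ produces a view indistinguishable from the real protocol. This boils down to the fact that \Expand is a deterministic, publicly computable function of the seed, so the simulator's computation of $(x_\sigma,z_\sigma)$ from the extracted $\key_\sigma$ is perfectly faithful; the remaining indistinguishability then reduces cleanly to $\QASD(c,\Ring)$ and the security of the \SPFSS, exactly as in the proof of Theorem~\ref{pcg:thm_genericPCG}. \qed
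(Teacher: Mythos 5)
Your proof is correct and follows exactly the route the paper intends: the theorem is stated there without an explicit proof, as the immediate composition of the maliciously secure seed-distribution protocol of Theorem~\ref{pcg:theoremEstimateCostsOLEsetup} with the local \Expand of the \PCG of Theorem~\ref{pcg:thm_genericPCG}, under the parameter choice $ct = O(\lambda)$ (and constant $c$), which is precisely your argument. Your cost accounting and the hybrid/simulation sketch for the corrupted-party case match the paper's implicit derivation.
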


\iftoggle{longversion}{\begin{figure}[!htbp]}{\begin{figure}[!h]}

\fbox{
\begin{minipage}[c]{0.95\textwidth}

\begin{center}
\textbf{Construction } $\QASD_{\sf \OLE }$
\end{center}
\textsc{Parameters}: Security parameter $\lambda$, noise weight
$t = t(\lambda)$, compression factor $c \geq 2$, $G$ a finite abelian
group, $\Ring = \FF_q[G]$. An \FSS scheme
(\SPFSS.\Gen,\SPFSS.\FullEval) for sums of $t^2$ point functions, with
domain $[0..|G|)$ and range $\Fq$.

\textsc{Public Input}: $c-1$ random ring elements $a_1, \cdots, a_{c-1} \in \Ring$.

\textsc{Correlation}: After expansion, outputs $(x_0,z_0) \in \Ring^2$ and $(x_1,z_1) \in \Ring^2$ where $z_0 + z_1 = x_0 \cdot x_1$

\textbf{Gen} : On input $1^{\lambda}$ :
\begin{enumerate}[itemsep=3mm]
    \item For $\sigma \in \{0,1\}$ and $i \in [0 .. c)$, sample random vectors $\Vec{A_{\sigma}^{i}}\leftarrow (g_1,\cdots,g_t)_{g_i \in G}$ and $\Vec{b_{\sigma}^{i}}  \leftarrow (\FF_q^*)^t$.
    \item For each $i, j \in [0..c)$, sample \FSS{} keys \\ $(K_0^{i,j}, K_1^{i,j})  \overset{\$}{\leftarrow} \SPFSS.\Gen(1^{\lambda}, \Vec{A_0^i} \otimes \Vec{A_1^j} , \Vec{b_{0}^{i}} \otimes \Vec{b_{1}^{j}}).$
    \item Let $\textsf{k}_{\sigma} = ((K_{\sigma}^{i,j})_{i,j \in [0 .. c)},((\Vec{A_{\sigma}}^{i},\Vec{b_{\sigma}^{i}})_{i \in [0 .. c)})$.
    \item Output $(\textsf{k}_0,\textsf{k}_1)$.
\end{enumerate}
\textbf{Expand} : On input $(\sigma,\textsf{k}_{\sigma}$) :
\begin{enumerate}[itemsep=3mm]
    \item Parse $\textsf{k}_{\sigma}$ as $((K_{\sigma}^{i,j})_{i,j \in [0 .. c)},((\Vec{A_{\sigma}^{i}},\Vec{b_{\sigma}^{i}})_{i \in [0 .. c)})$
    \item For $i \in [0 .. c)$, define the element of $\Ring_t$
    \begin{equation*}
        e_{\sigma}^i = \sum_{j \in [0..t)} \Vec{b_{\sigma}^{i}}[j] \cdot \Vec{A_{\sigma}^{i}}[j].
    \end{equation*}
    \item Compute $x_{\sigma} = \langle \Vec{a},\Vec{e_{\sigma}} \rangle$, where $\Vec{a} = (1,a_1,\cdot,a_{c-1}), \Vec{e_{\sigma}} = (e_{\sigma}^0, \cdots,e_{\sigma}^{c-1})$.
    \item For $i,j \in [0..c)$, compute $u_{\sigma, i + c j} \leftarrow \SPFSS.\FullEval (\sigma, K_{\sigma}^{i,j})$ and view it as a $c^2$ vector of element in $\Ring_{t^2}$.
    \item Compute $z_{\sigma} = \langle \Vec{a} \otimes \Vec{a}, \Vec{u_{\sigma}} \rangle $.
    \item Output $x_{\sigma},z_{\sigma}$    .
\end{enumerate}
\end{minipage}
}
\caption{\PCG for \OLE over $\Ring$, based on \QASD}
\label{pcg:figureOLE}
\end{figure}

\subsection{Instantiating the Group Algebra}

In this section we instantiate our general result with a concrete
construction of a \PCG for \OLE correlation over $\Fq$. Remind that
$G = \prod_{i=1}^n \ZZ/q_i \ZZ$, $q_i \geq 2$ . Using Proposition
\ref{prop:tensor_product_group_algebras} from previous section:
\begin{align*}
    \Fq[G] &= \Fq\left[\prod_{i=1}^n \ZZ/q_i \ZZ\right] \simeq \Fq[\ZZ/q_1 \ZZ] \otimes_{\Fq} \cdots \otimes_{\Fq} \Fq[\ZZ/q_n \ZZ]\\
    &\simeq   \bigotimes_{i = 1}^n \Fq[X_i]/(X_1^{q_i}-1)  \simeq\Fq[X_1,.., X_n] / (X_1^{q_1}-1,..,X_n^{q_n}-1 )  .
\end{align*}

\subsubsection{Batch-\OLE{} over $\FF_q$.}

In the following we let all the $q_i$ be all equal to $q-1$. We therefore use $ \Ring = \Fq[G] \simeq\Fq[X_1,.., X_n] / (X_1^{q-1}-1,..,X_n^{q-1}-1 ) $. Remark that the elements of $\Fq^*$ are the roots of the polynomial $X_i^{q-1}-1$. Therefore, we can write $X_i^{q-1} -1 = \prod_{a \in \Fp^*} (X_i-a)$, for all $1 \leq i \leq n$ and, by the Chinese Remainder Theorem, we get
\begin{equation*}
       \Fq[X_1,.., X_n] / (X_1^{q-1}-1,..,X_n^{q-1}-1 ) \simeq  \prod_{i=1}^T \Fq .
\end{equation*}
where $T = (q-1)^n$ is the number of elements in the group. We can
apply our protocol to construct a \PCG for the \OLE correlation in
$\Ring $. This single \OLE over $\Ring$ can be transformed in $T$
different instances of \OLE over $\Fq$. We get:

\begin{theorem}
  Suppose that \SPFSS is a secure \FSS{} scheme for sums of point
  functions and that the \QASD assumption holds. Let
  $\Ring = \Fq[X_1,.., X_n] / (X_1^{q-1}-1,..,X_n^{q-1}-1 )$, and
  $T = (q-1)^n$. We can construct a \PCG producing $T$ instances for
  \OLE over $\Fp$, using the $\QASD_{\OLE}$ construction.  The
  parameters we obtain are the following.
\begin{itemize}
 \item[$\bullet$] Each party's seed has size at most:  $(ct)^2 \cdot ((n \log (q-1) - \log t +1 ) \cdot (\lambda +2) + \lambda + \log q) + ct (n \log (q-1) + \log q) $ bits
    \item[$\bullet$] The computation of \Expand can be done with at most $ (2 + \lfloor(\log q)/\lambda \rfloor ) n \log (q-1) c^2 t$ \PRG{} operations, and $O(c^2 (q-1)^n n \log (q-1))$ operations in $\Fq$.
\end{itemize}

\end{theorem}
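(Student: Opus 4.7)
The plan is a direct application of the generic template Theorem~\ref{pcg:thm_genericPCG} to the specific instantiation $\Ring = \FF_q[G]$ with $G = (\ZZ/(q-1)\ZZ)^n$, combined with a local post-processing step using the Chinese Remainder Theorem. The idea is that the generic template already produces a PCG for a single ring-OLE over $\Ring$, and it only remains to turn this single large OLE into $T=(q-1)^n$ independent OLEs over $\FF_q$ using the explicit isomorphism $\Ring \simeq \FF_q^T$ established in the paragraph preceding the statement.

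First, I would invoke Theorem~\ref{pcg:thm_genericPCG} with the group $G = (\ZZ/(q-1)\ZZ)^n$ to obtain a PCG $(\Gen, \Expand)$ producing a single OLE correlation $(x_0, z_0), (x_1, z_1) \in \Ring^2$ satisfying $z_0 + z_1 = x_0 \cdot x_1$ in $\Ring$. Security follows directly from the hypothesis that $\SPFSS$ is a secure FSS scheme for sums of point functions and from $\QASD(c, \Ring)$ for this particular group algebra, which is exactly the family instance analysed in Section~\ref{sec:qac}.

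Second, I would describe the local post-processing step. Both parties fix once and for all the explicit CRT isomorphism
\[
  \pi \colon \Ring = \FF_q[X_1, \dots, X_n]/(X_1^{q-1}-1, \dots, X_n^{q-1}-1) \;\longrightarrow\; \prod_{\vec\zeta \in (\FF_q^\times)^n} \FF_q,
\]
given by multipoint evaluation $\pi(f) = (f(\vec\zeta))_{\vec\zeta \in (\FF_q^\times)^n}$. After running $\Expand$, each party $P_\sigma$ simply computes $\pi(x_\sigma) \in \FF_q^T$ and $\pi(z_\sigma) \in \FF_q^T$. Because $\pi$ is a ring isomorphism, it is in particular componentwise multiplicative and additive, so
\[
  \pi(z_0) + \pi(z_1) = \pi(z_0 + z_1) = \pi(x_0 \cdot x_1) = \pi(x_0) * \pi(x_1),
\]
where $*$ denotes componentwise product over $\FF_q^T$. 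This yields exactly $T$ independent OLE correlations over $\FF_q$. Pseudorandomness of this batch reduces to pseudorandomness of the original $\Ring$-OLE, since the map $\pi$ is a fixed deterministic bijection applied locally and identically by both parties; any distinguisher against the batched $\FF_q$-OLE would immediately yield a distinguisher against the single $\Ring$-OLE by pre-composing with $\pi^{-1}$.

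Finally, the parameter bounds are obtained by plugging $|G| = (q-1)^n$ and $\log |G| = n \log(q-1)$ into the two bullet points of Theorem~\ref{pcg:thm_genericPCG}. The seed size bound $(ct)^2 \cdot ((n\log(q-1) - \log t + 1)(\lambda+2) + \lambda + \log q) + ct(n\log(q-1) + \log q)$ and the computation cost of $(2 + \lfloor(\log q)/\lambda\rfloor)\,n\log(q-1)\,c^2 t$ PRG calls plus $O(c^2 (q-1)^n \, n\log(q-1))$ operations in $\FF_q$ follow immediately; the only subtlety is that the $O(c^2|G|\log|G|)$ operations from the generic statement specialise to $O(c^2(q-1)^n n\log(q-1))$, which is where the fast Fourier transform over $\FF_q[G]$ discussed in Section~\ref{subsec:fft} would be used for encoding. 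I expect no genuine obstacle: the only point requiring any care is verifying that applying $\pi$ locally does not break PCG security, which is immediate since $\pi$ is a public deterministic bijection applied identically to both parties' outputs.
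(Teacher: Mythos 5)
Your proposal is correct and matches the paper's (largely implicit) argument: the paper likewise establishes the CRT isomorphism $\Fq[X_1,\dots,X_n]/(X_1^{q-1}-1,\dots,X_n^{q-1}-1)\simeq \FF_q^T$ just before the statement and then simply specialises the generic template Theorem~\ref{pcg:thm_genericPCG} with $G=(\ZZ/(q-1)\ZZ)^n$, the local splitting of the single $\Ring$-OLE into $T$ OLEs over $\FF_q$ being exactly the componentwise-multiplicativity observation you spell out. Your extra care in noting that the public deterministic isomorphism applied locally preserves PCG security is a welcome (if routine) addition the paper leaves implicit.
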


\subsubsection{Concrete Parameters.} We report on
Table~\ref{tab:param} a set of concrete parameters for our new
programmable PCGs from $\QASD$, when generating $T$ instances of a
pseudorandom OLE over $\FF_q$, chosen according to the analysis of
\iftoggle{longversion}{of Section~\ref{sec:cryptanalysis}.}{the long
    version on eprint.} We note that our concrete security parameters
  are very close to the parameters of~\cite{C:BCGIKS20}. This stems
  from two points:

First, \cite{C:BCGIKS20} conservatively chose security bounds based on existing attacks over $\FF_2$, even though their instantiation is over $\FF_p$ with $\log p \approx 128$ (and known attacks on syndrome decoding are less efficient over larger fields). One of the reason for this was to get conservative estimates (syndrome decoding over large fields was less investigated, and attacks could improve in the future); another motivation is that over $\FF_2$, tools have been implemented to automatically evaluate the resistance against various flavors of ISD (whose exact cost can be quite tedious to analyze). Because our PCGs can handle fields as low as $\FF_3$, and to avoid having to pick different parameters for each field size, we also based our analysis on known bounds for $\FF_2$.

Second, the main difference between our analysis and that of~\cite{C:BCGIKS20} is that we must consider folding attacks, which are considerably more diverse in our setting (since an attacker can construct a reduced instance by quotienting with \emph{any} subgroup $G'$, of which there are many). Yet, the \emph{effect} of folding on security does not depend on the fine details of the subgroup $G'$, but only on the \emph{size} of $G'$, which allows to compute the new dimension and the reduced noise weight (via a generalized piling-up lemma). This does not differ significantly from the case of ring-\LPN over cyclotomic rings considered in~\cite{C:BCGIKS20}, since there the adversary could reduce the dimension to any power of two of their choice: our setting allows the adversary to be slightly more fine grained in its dimension reduction (\emph{i.e.} the adversary is not restricted to a power of two), but this does not make a significant difference on the concrete attack cost (essentially because close dimensions yield near-identical noise reduction via the piling-up lemma, and do not have significantly different impact on the concrete attack cost beyond that).

As our table illustrates, our \PCGs{} offer a non-trivial stretch (computed as the ratio between the seed size and the size of storing the output \OLEs{}) from a target number $T = 2^{25}$ of \OLEs{}.

\setlength\tabcolsep{6pt}
\begin{table}[H]
\centering
\caption{Concrete parameters and seed sizes (per party, counted in
  bits) for our \PCG for \OLE{} over $\FF_q$ from $\QASD(\Ring)$,
  using $\Ring = \FF_q[(\ZZ/(q-1)\ZZ)^n]$, $\lambda = 128$, target number $T = (q-1)^n$ of \OLEs{}, syndrome
  compression factor $c\in\{2,4\}$, and number of noisy coordinates
  $t$. `Stretch', computed as $2T/{(\text{seed size})}$, is the ratio
  between storing a full random \OLE{} (i.e., $2T$ field elements) and
  the smaller \PCG{} seed. The parameter $k$ denotes the dimension of
  the \SD{} instance after folding, and $t'$ the (expected) noise
  weight of the folded instance (when heuristically choosing the best
  possible folding for the adversary). $\#\PRG$ calls is computed as
  $4\cdot Tct$. Parameters are chosen to achieve $\lambda$-bits of
  security against known attacks\iftoggle{longversion}{, according to
      the analysis of Section~\ref{sec:cryptanalysis}.}{.}}
\label{tab:param}
\begin{tabular}{@{}lllllccc@{}}
\toprule
$T$ &$c$ &$t$ &$(k,t')$ &Seed size &Stretch &$\#$ $R$-mults &$\#\PRG$ calls\\
  \midrule

$2^{25}$ &$2$ &$152$ &$(2^8, 121)$ &$2^{26.0}/\log q$ &$\log q$ &$4$ &$2^{28.2}\cdot\log q$\\
$2^{25}$ &$4$ &$64$ &$(3\cdot 2^8, 60)$ &$2^{23.6}/\log q$ &$5.3\log q$ &$16$ &$2^{28.0}\cdot\log q$\\
  \midrule

$2^{30}$ &$2$ &$152$ &$(2^8, 121)$ &$2^{26.3}/\log q$ &$26\log q$ &$4$ &$2^{33.2}\cdot\log q$\\
$2^{30}$ &$4$ &$64$ &$(3\cdot 2^8, 60)$ &$2^{24.0}/\log q$ &$128\log q$ &$16$ &$2^{33.0}\cdot\log q$\\
  \midrule

$2^{35}$ &$2$ &$152$ &$(2^8, 121)$ &$2^{26.6}/\log q$ &$676\log q$ &$4$ &$2^{38.2}\cdot\log q$\\
$2^{35}$ &$4$ &$64$ &$(3\cdot 2^8, 60)$ &$2^{24.3}/\log q$ &$3327\log q$ &$16$ &$2^{38.0}\cdot\log q$\\
\bottomrule

\end{tabular}
\end{table}
\setlength\tabcolsep{6pt}

\subsubsection{Discussions on Efficient FFTs.} Operations over the
group algebra can be accelerated using the generalized
FFT\iftoggle{longversion}{.}{, which we cover in the long version
    available on eprint.} Here, we briefly remark that some specific
  values of $q$ yield ``FFT-friendly'' instances, where the
  generalized FFT algorithm is extremely efficient (and could even be
  competitive with the more well-known FFT over cyclotomic rings, with
  proper optimizations): this is the case whenever $q-1$ is a power of
  2, since it enables a very efficient divide and conquer algorithm.
  For example, this is the case over $\FF_3[(\ZZ/2\ZZ)^{2^{n}}]$,
  where the FFT reduces to a $2^n$-dimensional FFT over $\FF_3$.

  \subsubsection{From Decision-\QASD to Search-\QASD.}\label{subsec:std_inst}
  \iftoggle{longversion}{In Appendix~\ref{app:std},}{In the long
      version on eprint} we give a reduction from the search version
    of \QASD to the decision version for all instances over
    $\Ring = \FF_q[G]$ where $G = (\ZZ/(q-1)\ZZ)^n$, which is the
    group which we use to obtain \PCGs for \OLEs over
    $\FF_q^{(q-1)^n}$. This provides further support for the security
    of our \PCG schemes, by showing that their security reduces to the
    \emph{search} \QASD assumption. \iftoggle{longversion}{More
        precisely, we prove the following theorem:

        \begin{restatable}{theorem}{std}\label{thm:std_inst}
          Let $q, t$ be two integers, and let
          $G\eqdef (\ZZ/(q-1)\ZZ)^{t}$. Let $n\eqdef |G| = (q-1)^{t}$
          and $w\in\{0, \dots, n\}$ be an admissible weight, and let
          $\psi$ be an error distribution over
          $\mathcal{R}\eqdef\Fq[G]$ such that $\Expect[\HW{\bfx}]=w$
          when $\bfx$ is sampled according to $\psi$. Let
          $\bfs\in\Fq[G]$ be a fixed secret.

          Suppose that there exists a distinguisher $\mathcal{A}$
          between $(\bfa, \bfy^{\text{unif}})$ and
          $(\bfa, \bfa\cdot \bfs + \bfe)$ where
          $\bfa, \bfy^{\text{unif}}\leftarrow \mathcal{R}$ and
          $\bfe \leftarrow \psi$. Denote by $\tau$ its running time
          and $\varepsilon$ its distinguishing advantage. Then, there
          exists an algorithm that recovers $\bfs\in\mathcal{R}$ (with
          an overwhelming probability in $n$) in time
        \[
          O\left( n^{4}\times \frac{1}{\varepsilon^{2}} \times q \times \tau\right).
        \]
      \end{restatable}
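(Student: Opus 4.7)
The plan is to adapt the search-to-decision reduction for \FFDP from \cite{C:BomCouDeb22} (itself an extension of \cite{EC:LyuPeiReg10}) to our multivariate ring $\mathcal{R} = \FF_{q}[(\ZZ/(q-1)\ZZ)^{t}]$. By the Chinese Remainder Theorem, $\mathcal{R}\simeq \prod_{i=1}^{n}\FF_{q}$ with $n=(q-1)^{t}$ factors indexed by points $(\gamma_{1},\dots,\gamma_{t})\in(\FF_{q}^{\times})^{t}$, each corresponding to a maximal ideal $\mathfrak{I}_{i}=(X_{1}-\gamma_{1},\dots,X_{t}-\gamma_{t})$. The overall strategy is to recover $\bfs$ modulo each $\mathfrak{I}_{i}$ separately and then reconstruct $\bfs$ by CRT.

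Three of the four steps are direct adaptations of \cite{C:BomCouDeb22}. First, I would randomise the secret: replacing every sample $(\bfa,\bfb)\sample \ffd{\bfs,\psi}$ with $(\bfa,\bfb+\bfa\bfs')$ for a uniformly chosen $\bfs'\in\mathcal{R}$ turns the worst-case search problem into an average-case decision problem without loss of distinguishing advantage. Second, a standard hybrid argument over the CRT decomposition, where $\mathcal{H}_{i}$ further shifts $\bfb$ by a ring element uniform modulo $\mathfrak{I}_{j}$ for $j\le i$ and zero elsewhere, reduces the problem to distinguishing two consecutive hybrids $\mathcal{H}_{i_{0}}$ and $\mathcal{H}_{i_{0}+1}$ with advantage at least $\varepsilon/n$. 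Third, to learn $\bfs\bmod\mathfrak{I}_{i_{0}}$, I would enumerate the $q$ candidate values and, for each guess, craft a rerandomisation of the input samples that yields $\mathcal{H}_{i_{0}}$ iff the guess is correct and $\mathcal{H}_{i_{0}+1}$ otherwise; then $\Theta(n(n/\varepsilon)^{2})$ distinguisher calls identify the right guess with overwhelming probability by a Chernoff bound.

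The genuinely new ingredient, and the main obstacle, is the fourth step: propagating the knowledge of $\bfs\bmod\mathfrak{I}_{i_{0}}$ to $\bfs\bmod\mathfrak{I}_{j}$ for every other factor. In \cite{C:BomCouDeb22} this is achieved via the Galois group of a function field extension, but our $\mathcal{R}$ does not arise as the reduction of the ring of integers of a convenient Galois extension of $\FF_{q}(T)$, so we need a different group. I would replace the Galois action by the action of $\widehat{G}\eqdef (\FF_{q}^{\times})^{t}$ on $\mathcal{R}$ defined by $(\zeta_{1},\dots,\zeta_{t})\cdot P(X_{1},\dots,X_{t})\eqdef P(\zeta_{1}X_{1},\dots,\zeta_{t}X_{t})$. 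Two properties make this action a valid surrogate: it permutes the maximal ideals transitively, since $(\zeta_{1},\dots,\zeta_{t})\cdot \mathfrak{I}_{(\gamma_{1},\dots,\gamma_{t})}=\mathfrak{I}_{(\zeta_{1}^{-1}\gamma_{1},\dots,\zeta_{t}^{-1}\gamma_{t})}$; and it merely rescales monomial coefficients by nonzero scalars, so it preserves the support of every ring element and hence leaves the weight-only distribution $\psi$ invariant. Consequently, picking $\bfz\in\widehat{G}$ with $\bfz\cdot\mathfrak{I}_{j}=\mathfrak{I}_{i_{0}}$ and mapping each sample $(\bfa,\bfa\bfs+\bfe)$ to $(\bfz\cdot\bfa,\,(\bfz\cdot\bfa)(\bfz\cdot\bfs)+\bfz\cdot\bfe)$ produces a valid $\ffd{\bfz\cdot\bfs,\psi}$ sample, and rerunning the first three steps recovers $\bfz\cdot\bfs\bmod\mathfrak{I}_{i_{0}}=\bfs\bmod\mathfrak{I}_{j}$.

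Summing over the $n$ target factors, the $q$ guesses per factor, and the $\Theta(n(n/\varepsilon)^{2})$ distinguisher calls per guess yields the claimed total running time $O(n^{4}q\tau/\varepsilon^{2})$. The hard part is really confined to step~(4): exhibiting a group whose action on $\mathcal{R}$ is simultaneously transitive on the CRT factors and support-preserving. Once $\widehat{G}$ is in place, the remainder of the argument is a routine adaptation of the univariate reduction.
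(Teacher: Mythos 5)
Your proposal is correct and follows essentially the same route as the paper's own reduction: the same CRT decomposition into $(q-1)^{t}$ copies of $\FF_q$, the same three standard steps (secret randomisation, hybrid argument over the factors, guess-and-check with Chernoff amplification and exhaustive search over $\FF_q$), and crucially the same replacement of the Galois action by the support-preserving, transitively-acting group $\widehat{G}=(\FF_q^{\times})^{t}$ acting via $(\zeta_1,\dots,\zeta_t)\cdot P = P(\zeta_1X_1,\dots,\zeta_tX_t)$, yielding the identical $O(n^{4}q\tau/\varepsilon^{2})$ bound. Nothing is missing.
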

      }{In a nutshell, the reduction follows exactly the
        same proof as that of~\cite{BCD22}.}

\iftoggle{longversion}{
  \section{Concrete Cryptanalysis}
  \label{sec:cryptanalysis}

In this section, we discuss the concrete security of \QASD. Most of
the attacks we discuss in this section fit in the framework of linear
tests, and are therefore asymptotically ruled out by our proof of
resistance against linear tests. However, while the concrete bounds of
the proof are reasonable (in the sense that choosing parameters from
these bounds would yield instances that can be reasonably used in
practice), they are overly pessimistic. This stems from the fact that
the linear test framework rules out \emph{all linear attacks} (even
inefficient ones); equivalently, it considers that the adversary can
always find a vector $\vec v$ that minimizes $\HW{\vec v \cdot
  \bfH}$. However, in practice, \emph{finding} the vector $\vec v$
that minimizes $\HW{\vec v \cdot \bfH}$ is a hard problem. Indeed,
this problem, when instantiated with arbitrary codes is known to be
NP--complete \cite{V97} and is commonly assumed to be hard in average
and the best know algorithm to solve this search problem are nothing
but the algorithms solving \SD{}, {\em i.e.}  all the known variants
of ISD.

When choosing concrete parameters, all previous works that rely on
\LPN{} or \SD{} choose instead to use parameters derived
using the \emph{best possible $\vec v$} which can be obtained using
existing linear attacks, such as ISD. For all known concrete linear
attacks, two codes whose duals have the same minimum distance will
yield the same resistance (measured as $\HW{\vec v \cdot \bfH}$)
against these attacks. In other words, these attacks, which are
combinatorial in nature, only rely at their core on the distance
properties of the code and {\em not} on its general structure. To get
an apple-to-apple efficiency comparison with the state of the art, the
natural rule of thumb is therefore to choose parameters similar to
those chosen for variants of syndrome decoding with the same minimum
distance property: this heuristic was explicitely advocated
in~\cite[Section~ 1.4]{C:CouRinRag21}. In our setting, since
quasi-abelian codes meet the GV bound (\emph{i.e.} have typically the
same minimum distance as random linear codes), this translates to
choosing parameters comparable to those of the standard syndrome
decoding problem with random codes.

In our context, this would however be too aggressive, since there are
known ways in which an attacker \emph{can} exploit the structure of
the code. First, because our codes are quasi-abelian codes and hence,
according to Remark~\ref{rem:quasi-ab=multivariate}, they can be
regarded as codes over a quotient of a multivariate polynomial
ring. Therefore, an attacker can reduce the word modulo some ideal
of the ring, in order to generate an instance of a ``smaller'' decoding
problem. This approach has been considered in \cite{CT19} in the
code--based setting and in \cite{BCV20} in the lattice setting.
This point of view has been considered in \cite{C:BCGIKS20} when studying
the security of \OLEs{} generated using instances of Ring-\LPN{}.

The parameters should therefore be chosen such that
any such ``reduced instance'' remains intractable. Second, due to the
quasi-abelian structure of our codes, one can apply the \DOOM{} attack
from~\cite{S11} to obtain a speedup by a factor $\sqrt{|G|}$, where $G$
denotes the underlying abelian group of the group algebra.

\subsubsection{Our setting.} In the following, we focus on linear
attacks against the $\QASD(n,k)$ assumption instantiated over a ring
$\Ring = \FF_q[X_1,\dots, X_d] / (X_1^{q-1}-1,\dots,X_d^{q-1}-1)$.
Our point is to distinguish pairs
$((a_1, \dots, a_c), a_1 s_1 + \cdots + a_c s_c + e)$ (with possibly
$c = 1$), where $a \getsr \Ring$ and $s_1, \dots, s_c, e \in \Ring$ are
sparse with respect to the basis of monomials.  As already mentioned
in Section~\ref{sec:qasd}, the search version of the problem is
equivalent to solving the \QASD{} problem.  That is to say solving a
decoding problem of the form
\[
  (~\bfA_1 ~|~  \cdots ~|~ \bfA_c ~|~ 1 ~)
  \begin{pmatrix}
    \bfs_1 \\ \vdots \\ \bfs_c \\ \bfe
  \end{pmatrix}
  = 0,
\]
where the $\bfA_i$'s are the matrix representations in the basis of
monomials of the multiplication--by--$a_i$ maps in $\Ring$ and the
$\bfs_i$'s and $\bfe$ are the unknown vector representations of
the $s_i$'s and $e$ in this basis, {\em i.e.} are unknown sparse vectors.

In terms of code parameters, the group codes
have length $n = (c+1)\dim_{\Fq} \Ring$ and dimension
$k = c \dim_{\Fq} \Ring$. Therefore, we always have
$k \geq \frac{n}{2}$ with equality when $c = 1$.  In this setting,
attacks such as Arora-Ge \cite{arora2011new} (which require
$n = \Omega(k^2)$) or BKW (which require $n$ to be subexponential in
$k$, or $n = \Omega(k^{1+\eps})$ using the sample-efficient variant
of~\cite{AR:L05}) do not apply. Furthermore, our codes have rate $c/(c+1)$ with $c \geq 1$. In particular, this implies that the recent results on Statistical Decoding 2.0~\cite{carrier2023statistical}, which improves over ISD when the code rate is sufficiently small, will not yield an efficient attack on our setting (for rates above $1/2$, SD 2.0 is always outperformed by ISD).

\subsection{Instance Projection via Quotient}\label{subsec:proj_quotient}
As previously mentioned, a manner to solve the problem is to solve the
search $\QASD{}(\Ring)$ problem, where
$\Ring = \Fq[X_1, \dots, X_d]/(X_1^{q-1} - 1, \dots, X_d^{q-1} - 1)$.
Given an instance $(a,b)$ of $\QASD(\Ring)$, an attacker may construct
a new decoding instance with smaller length and dimension. In full
generality, the attacker can pick any ideal
$I \subseteq \Fq[X_1, \dots , X_d]$ containing
$(X_1^{q-1}-1, \dots, X_d^{q-1} - d)$ and represented by a Gröbner
basis, and construct a new instance $(a',b') \gets (a,b) \bmod I$,
where the $\bmod$ operation is the reduction modulo $I$ with respect
to the chosen Gröbner basis.
For instance, one can choose a sequence $(F_1(X_1), \dots, F_d(X_d))$
of factors of $X_1^{q-1}-1, \dots, X_d^{q-1}-1$ and reduce modulo them.

However, in general, the projection modulo an arbitrary ideal $I$ can
significantly increase the noise. The way the noise increases is
highly dependent from the density of the generators of $I$. For
example, if $\Ring = \FF_q[X_1,X_2]/(X_1^{q-1}-1, X_2^{q-1}-1)$ and
the attacker reduces modulo $I = (F_1(X_1), F_2(X_2))$ where $F_1, F_2$ are
respective factors of $X_1^{q-1}-1$ and $X_2^{q-1}-1$ of respective
Hamming weight, say, 3 and 5, the noise rate can increase by a factor
up to $(3-1)\cdot(5-1) = 8$. Therefore, we expect this approach to
be useful (to the attacker) only when the noise increase is very
small.

Heuristically the best possible projections of $\Ring$ regarded as the
group algebra $\Fq[G]$ seem to be the projections arising from
quotients of $G$. Namely, given a subgroup $H$ of $G$ the canonical
map $G \rightarrow G/H$ induces a morphism of algebras
\[
 \pi_H: \map{\Fq[G]}{\Fq[G/H]}{\sum_{g \in G} a_g \ g}{
  \sum_{\bar g \in G/H} (\sum_{h \in H} a_{gh}) \bar{g}.}
\]
From a coding theoretic point of view, this operation is nothing but
summing up the entries of a codeword whose index are in a same orbit
under the action of $H$. This operation sends a code of length
$(c+1)|G|$ and dimension $c|G|$ onto a code of length $(c+1)|G/H|$ and
dimension $c|G/H|$.  Moreover, a noisy codeword $c+e$ is sent onto
$\pi_H(c) + \pi_H(e)$ and the weight of $\pi_H(e)$ is bounded from above
by the weight of $e$. In summary, the length and dimensions of the code
are divided by $|H|$ while the weight of the error is preserved or slightly
reduced since some entries of $e$  may sum up to $0$.

Such projections seem optimal in terms of limiting the growth of
the noise.

\begin{remark}
  From the ring theoretic point of view, the map $\pi_H$ can be
  regarded as a quotient map of $\Ring = \Fq [G]$ modulo the ideal
  generated by all the elements $(h-e_G)$ where $h \in H$ and $e_G$
  denotes the unit element of the group $G$.
\end{remark}

\begin{example}
  Following the spirit of \cite{C:BCGIKS20} consider the case
  \[
    \Ring = \Fq[X]/(X^{q-1}-1) \simeq \Fq [\ZZ/(q-1)\ZZ].
  \]
  In this situation, for any $\ell | (q-1)$, one can consider the subgroup
  $H = \ell \ZZ / (q-1) \ZZ$. The corresponding projection can be made
  explicit as
  \[
    \pi_H : \map{\Fq[X]/(X^{q-1}-1)}{\Fq[X]/(X^{\frac{q-1}{\ell}} -
      1)}{\sum_{i=0}^{q-2} a_i X^i}{\sum_{i = 0}^{\frac{q-1}{\ell} - 1} \left(\sum_{j \equiv i \bmod \ell} a_j \right) X^i.}
  \]
  In short, we sum up the entries of the codeword whose indexes are
  congruent modulo $\ell$.
\end{example}

\begin{example}\label{ex:multi_LWE}
  This example is in the spirit of the attacks on multivariate \RLWE{}
  \cite{BCV20}. Consider the ring $\Ring = \Fq[\ZZ / n\ZZ \times \ZZ / n\ZZ]
  \simeq \Fq[X,Y]/(X^n-1, Y^n-1)$ and consider the subgroup
  \[
    H \eqdef \{(x,x) ~|~ x\in \ZZ/n\ZZ\} \subseteq G = \ZZ / n\ZZ
    \times \ZZ / n\ZZ.
  \]
  Here the projection map can be made explicit as
  \begin{equation}\label{eq:BCV}
    \pi_H : \map{\Fq[X,Y]/(X^n-1, Y^n-1)}{\Fq[X]/(X^n-1)}{\sum_{i,j =
        0}^{n-1} a_{ij}X^i Y^j}{\sum_{i = 0}^{n-1} \left(\sum_{u+v
          \equiv i \bmod n} a_{uv} \right) X^i.}
  \end{equation}
\end{example}

This approach is considered in \cite{BCV20} to provide an attack on
multivariate \RLWE{}. In the coding theoretic context, this approach
is analysed in depth in \cite{CT19} where the projection map is called
{\em folding}.

\subsubsection{Computing the new noise weight.}
Following~\cite{C:BCGIKS20}, we consider an instance $(a,ae+f)$ where
each sparse vector $e,f$ has been sampled as a sum of $t/2$ random
monomials. This distribution is very close to the original
distribution, and its choice significantly simplifies the analysis. It
also slightly favor the attacker (since the expected number of noisy
entries will now be slightly below $t$ due to possible collisions). In
this setting, the expected noise rate $t'$ can be computed fairly
simply. Let $R_{m,\ell}$ be the random variable counting the number of
nonzero coefficients in a polynomial with $m$ coefficients over
$\FF_q$ computed as the sum of $\ell$ random monomials. Note that
$t' = \Expect[R_{n, t}]$, where $n$ is the code length. Then, we have
\[\Expect[R_{m,\ell+1}] = \left(1-\frac{\Expect[R_{m,\ell}]}{m}\right)\cdot(\Expect[R_{m,\ell}] + 1) + \frac{\Expect[R_{m,\ell}]}{m}\cdot\left(\Expect[R_{m,\ell}] - \frac{1}{q-1}\right),\]
since adding a new random monomial increases the number of nonzero
coefficients by 1 if it falls in a position with a zero coefficient,
and decreases the expected number of nonzero coefficients by $1/(q-1)$
otherwise (since this is the probability, when summing two random
elements of $\FF_q^*$, to get $0$). Solving the recurrence relation
gives
\[t' = \frac{n\cdot(q-1)}{q} \cdot \left( 1- \left(1-\frac{q}{n\cdot
        (q-1)}\right)^\ell \right).\]

In the rest of the analysis, we will cover standard attacks on
syndrome decoding on instances of a given noise rate and dimension.
Then, when choosing concrete parameters, we will estimate the attacker
cost as the minimum cost of solving any instance obtained by reducing
$\Fq[G]$ to $\Fq[G/H]$, estimating the reduced noise parameter $t'$
using the formula above. We note that this approach ignores the
possibility that for a given instance, $t'$ ends up being much smaller
than its expected value, which would yield some weak instances of the
problem. As in~\cite{C:BCGIKS20}, we observe that this can be avoided
by changing the structure of the noise using rejection sampling: one
can re-sample the noise vectors until the weight $t'$ of the reduced
instance over $\Fq[G/H]$ (using the best possible choice of $|H|$ for
the attacker with the attacks covered below) is at least its expected
value (on average, since the probability of having
$\Expect[t'] \leq t'$ is $1/2$, this reduces by at most a single bit
the entropy of the noise vector).

\subsection{Information Set Decoding}

In this section, we cover standard linear attacks against syndrome
decoding. The most advanced attacks in this category are the
information set decoding (ISD) attacks, initially introduced by
Prange~\cite{P62} and subsequently refined in a long sequence of
works~\cite{stern1988method,AC:FinSen09,C:BerLanPet11,AC:MayMeuTho11,EC:BJMM12,EC:MayOze15}. Evaluating
precisely the effect of each attack on a given instance is complex and
tedious, but a general lower bound on the attack cost was derived
in~\cite{C:HOSS18}, based on similar analysis given
in~\cite{AC:FinSen09,S11,hamdaoui2013non,torres2016analysis}. These
lower bounds build upon the common structure of most ISD variants. In
general, the cost of modern ISD algorithms for a code with
parity-check matrix $\bfH$ over $\FF_2$, with dimension $k$, code
length $n$, and $t$ noisy coordinates, is lower bounded by
\[\min_{p_1,p_2}\left\{\frac{\min\left\{2^{k}, {\binom{n}{t}}\right\}}{{\binom{k-p_2}{t-p_1}}}\cdot \left(\frac{K_1+K_2}{{\binom{k+p_2}{p_{1}}}} + \frac{t\cdot(k-p_2)}{2^{p_2}}\right)\right\},\]
where $(p_1,p_2)$ satisfy $0 \leq p_2 \leq k/2$ and
$0 \leq p_1 \leq k+p_2$, $K_1$ denotes the cost of Gaussian
elimination on a submatrix of $\bfH$ with $n-p_2$ columns, and $K_2$
denotes the running time of a specific sub-algorithm, which varies
accross different attacks. As in~\cite{C:BCGIKS20}, we assume that
performing Gaussian elimination on the submatrix of $\bfH$ can be done
in time $K_1 \approx (k-p_2)^2\log(k-p_2)$,
because $\bfH$ is a structured matrix. According to the analysis
of~\cite{C:HOSS18}, $K_2$ can be lower bounded by
$K_2 \geq {\binom{(k+p_2)/2}{p_1/8}}$ for algorithm
of~\cite{EC:BJMM12}. As in~\cite{C:BCGIKS20},~\cite{EC:BJMM12} seems
to provide the best efficiency in our setting (more recent algorithms
have large hidden constants that render them less practical, or
improve over~\cite{EC:BJMM12} only for very high noise rates).

The above analysis is restricted to the case of $\FF_2$, which is the
easiest to attack using ISD. Over larger fields, one can use the above
costs as a lower bound for the true cost of the attack, but as the
field size grows, this lower bound becomes quite loose. Indeed, this
bound was used to pick concrete parameters in~\cite{C:BCGIKS20}, but a
recent preprint~\cite{EPRINT:LWYY22} estimates that the parameters
recommended in~\cite{C:BCGIKS20} for 80 bits of security actually
achieve 92-112 bits of security, while the parameters recommended for
128 bits of security actually achieve 133-171 bits of security. In our
setting, however, our \PCGs{} can be instantiated over fields as small as
$\FF_3$, in which the costs should be much closer to the lower bounds
used in~\cite{C:BCGIKS20}.

We note that a detailed analysis of ISD over larger fields was given in a recent paper~\cite{SAC:BCDL19}. However, for the sake of avoiding to compute different \QASD parameters for each possible field size $\FF_q$, we stick in this paper to the conservative lower bound that stems from the analysis over $\FF_2$.

In \cite{CT19} a study of the combination of ISD with the {\em
  folding} operation is studied and precises how the use of folding
improves the complexity of the decoder. It turns out that for small
errors rates, which is precisely our setting, the use of folding does not
represent a significant improvement.

\subsection{Prange and statistical decoding (Low-Weight Parity-Check)}
We also consider other standard linear attacks, such as Prange
decoding algorithm~\cite{P62} and low-weight parity
checks~\cite{Zichron17,al2001statistical,fossorier2006modeling,ACISP:Overbeck06,debris2017statistical}
which leads to the so-called {\em statistical decoding}.  The former,
which is just the original ISD algorithm, consists in guessing $k$
noise-free equations and solving the resulting system. It has the
advantage over more recent ISD algorithms that it does not depend on
the field size. The latter is also often more efficient than ISD in
our setting. This is because ISD is a search attack, and executing the
attack involves solving a linear system in each iteration of the
attack. Since typical \PCG{} applications have huge dimensions
(e.g. $k \approx 2^{30}$), this polynomial cost turns out to have a
significant impact on the overall runtime of the attack (even though
ISDs have the lowest exponent in the exponential part of the
attack). Low-weight parity checks, however, work by directly finding
many $\vec v$ such that $\vec v \cdot \bfH$ has low weight, and
declare $\vec b$ to be a syndrome decoding instance if the set
$\{\vec v \cdot \vec b^\Trans\}$ contains too many zeroes. In other
words, these attacks directly target the decision variant of syndrome
decoding (on which our \PCGs{} rely) and require computing only an
inner product per iteration, rather than solving a large linear
system. Concretely, the cost of Prange (when $\bfH$ is a structured
matrix) is given by
$O\left(1/(1-\frac{t}{n})^{k}\cdot k^2\log k \right)$ arithmetic
operations, and the cost of the low-weight parity check attack is
$O\left(n/(k-1)^t\cdot k\right)$ arithmetic operations
(see~\cite{CCS:BCGI18,C:BCGIKS20}).

\subsection{Algebraic Decoding Attacks}
An important line of work in code--based cryptography consists in
recovering a hidden algebraic structure of a code which permits to
decode. See for instance \cite{W10,CGGOT14,COT17,CMP17,CLT19}.  In
general such attacks rest on the fact that the public code $\code C$
or some of its subcodes has a peculiar behaviour with respect to the
component wise product.  Namely that the ``square of $\code{C}$'',
{\em i.e.} the span of the component wise
products of any two words of $\code C$ has small dimension compared
to the square of a random code.

Note that codes sharing this feature of having a ``small square''
benefit from an efficient decoding algorithm usually referred to as
{\em Error Locating Pairs decoder} \cite{P92}. See
\cite[Section~4]{Couvreur21} for further details. Therefore, if a
random quasi--group code had a small square compared to random codes,
then one could deduce an algebraic decoder for quasi--group codes
which is a longstanding open question: even when restricting to the
case of cyclic codes!

Algebraic attacks exploit the structure of the underlying code to decode it efficiently. Many such algebraic decoding attacks have been devised in the literature, and fall in a unified framework developed in~\cite{P92,kotter1992unified} based on componentwise product of codes. Examples of such attacks include~\cite{pellikaan2011evaluation,marquez2012error,faugere2013distinguisher,couvreur2013distinguisher,marquez2014unique} (and many more), and were often used to break some variants of the McEliece cryptosystem. In our context, though, algebraic decoding of quasi-group codes is a well-known and long-standing open problem: it has been studied for over 50 years in the coding
theory community, and to this day no
efficient algorithm is known to decode a random quasi-abelian code.
This is listed as an open research problem in the most recent
Encyclopedia of Coding Theory (from 2021) \cite[Problem~16.10.5]{willems2021}.

\subsection{Attacks on Multivariate LWE}
As already mentioned in Example~\ref{ex:multi_LWE}, an attack on
multivariate \RLWE{} is presented in \cite{BCV20}. This attack is
based on a projection of the form $\Fq[G] \rightarrow \Fq[G/H]$ as
described in Section~\ref{subsec:proj_quotient}.  The attack is
particularly efficient since applying a map of the form (\ref{eq:BCV})
has a very limited impact on the Euclidean norm and hence has a
limited impact on the noise term. In the coding theoretic setting, the
situation is very different since the Hamming weight of the error is
more or less preserved but then the relative weight, {\em i.e.} the ratio
$\frac{t}{n}$ is more or less multiplied by a term $|H|$.
Therefore, reducing with respect to a too large subgroup $H$
leads to shorter codes but provides intractable instances of the decoding
problem.

\subsection{Decoding One-Out-Of Many}
For a code equipped with a non trivial permutation group, which is an
obvious feature of quasi-abelian codes, the decoding problem can be
made easier using Sendrier's {\em Decoding One Out of Many} (\DOOM{})
paradigm \cite{S11}. Indeed, consider a quasi--abelian code
$\code{C} \subseteq \Fq[G]^\ell$ and a noisy codeword $y = c+e$ with
$c \in \code{C}$ and $e \in \Fq [G]^\ell$ of low weight. Then,
for any $g \in G$, we get another instance of the decoding problem
with an error term of the same weight:
\[
  g \cdot y = g \cdot c + g \cdot e.
\]
Here $g \cdot c \in \code{C}$ and $\wt{g \cdot e} = \wt{e}$.
Therefore, given a single instance of \QASD{} we naturally deduce
$|G|$ instances and solving one of them immediately solves the other
ones. Thus, from \cite{S11}, solving one out of $|G|$ instances of
\SD{} permits to divide the work factor of any decoder by $\sqrt{|G|}$.
Therefore the cost of ISD should be divided by $\sqrt{|G|}$ and the cost
of the composition of a projection $\Fq[G] \rightarrow \Fq[G/H]$
with ISD should be divided by $\sqrt{|G/H|}$.

   \section{Applications to Secure Computation}
  \label{sec:applications}

In this part, we explain some of the main applications of our new \PCG's to secure computation. To provide bounds, we will use the following restatement of Theorem \ref{pcg:theoremEstimateCostOLE-all} in the case $\Ring = \Fq[X_1,.., X_n] / (X_1^{q-1}-1,..,X_n^{q-1}-1 )$.

\begin{theorem}
\label{pcg:theoremEstimateCostOLE-allRestate}

Suppose that \SPFSS is a secure \FSS{} scheme for sums of point functions and that the \QASD assumption holds. Let $\Ring = \Fq[X_1,.., X_n] / (X_1^{q-1}-1,..,X_n^{q-1}-1 )$, and $T = (q-1)^n$. We can construct a \PCG producing $T$ instances for \OLE over $\Fp$, using the $\QASD_{\OLE}$  construction with the following parameters

   \begin{itemize}
    \item[$\bullet$] Communication costs and size of the seed : $O(\lambda^3 \log T)$.
    \item[$\bullet$] Computation costs : $O(\lambda T)$ \PRG{} evaluations and $O(c^2 T\log T)$ operations in $\Fq$.
\end{itemize}
\end{theorem}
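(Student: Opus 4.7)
The plan is to derive this statement as a direct specialisation of Theorem~\ref{pcg:theoremEstimateCostOLE-all}. Concretely, I would instantiate that general theorem with the finite abelian group $G = (\ZZ/(q-1)\ZZ)^n$, for which Proposition~\ref{prop:tensor_product_group_algebras} (applied iteratively, as in Remark~\ref{rem:quasi-ab=multivariate}) gives the ring isomorphism
\[
  \Fq[G] \;\simeq\; \Fq[X_1,\dots,X_n] / (X_1^{q-1}-1,\dots,X_n^{q-1}-1) \;=\; \Ring.
\]
This identification is exactly the one used in the paragraph on \emph{Batch-\OLE over $\Fq$}, and it ensures that the hypotheses of Theorem~\ref{pcg:theoremEstimateCostOLE-all} (secure \SPFSS{}, and the $\QASD(c,\Ring)$ assumption) translate verbatim to the present setting.

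Next, I would simply substitute $|G| = (q-1)^n = T$ into the three complexity bounds given by Theorem~\ref{pcg:theoremEstimateCostOLE-all}. The seed and communication bound $O(\lambda^3 \log |G|)$ becomes $O(\lambda^3 \log T)$, the \PRG count $O(\lambda |G|)$ becomes $O(\lambda T)$, and the arithmetic count $O(c^2 |G| \log |G|)$ becomes $O(c^2 T \log T)$. Finally, to go from a single pseudorandom \OLE{} over $\Ring$ to $T$ pseudorandom \OLEs{} over $\Fq$, I would invoke the Chinese Remainder decomposition $\Ring \simeq \prod_{i=1}^{T} \Fq$ noted in the Batch-\OLE paragraph: each party locally applies the CRT isomorphism to its pair $(x_\sigma, z_\sigma) \in \Ring^2$ to obtain $T$ coordinate-wise pairs $(x_\sigma^{(i)}, z_\sigma^{(i)}) \in \Fq^2$. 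Since this isomorphism is $\Fq$-linear and the correlation $z_0 + z_1 = x_0 \cdot x_1$ holds in $\Ring$, it holds componentwise in each copy of $\Fq$, yielding $T$ independent pseudorandom \OLE{} correlations over $\Fq$. Pseudorandomness of the outputs is preserved because the CRT map is a deterministic bijection applied identically by both parties.

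There is no real obstacle here: the statement is genuinely a restatement, and all non-trivial work (security reduction to $\QASD$, correctness of the sparse-times-sparse sharing via \SPFSS{}, and the seed/PRG/FFT accounting) has already been carried out in the proof of Theorem~\ref{pcg:thm_genericPCG} and the distributed-setup analysis leading to Theorem~\ref{pcg:theoremEstimateCostOLE-all}. The only point that deserves a brief sanity check is that the CRT post-processing adds at most $O(T \log T)$ $\Fq$-operations per party (via a generalised FFT of the kind discussed in Section~\ref{subsec:fft}, whose cost is dominated by the $O(c^2 T \log T)$ term already present), so the asymptotic bounds stated in the theorem remain valid after this local conversion step.
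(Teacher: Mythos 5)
Your proposal is correct and matches the paper's (implicit) argument exactly: the paper presents this theorem as a direct restatement of Theorem~\ref{pcg:theoremEstimateCostOLE-all} with $|G| = (q-1)^n = T$, combined with the CRT decomposition $\Ring \simeq \prod_{i=1}^{T}\Fq$ from the Batch-\OLE{} discussion, which is precisely what you do. The only cosmetic remark is that the componentwise splitting of the correlation $z_0+z_1 = x_0\cdot x_1$ uses that the CRT map is a \emph{ring} isomorphism (multiplicative, not merely $\Fq$-linear), which of course it is.
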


 First, as explained in \cite{C:BCGIKS20}, a \PCG generating $N$ multiplication triples can be derived from a \PCG generating $2N$ \OLE{}.

\subsubsection{Extension to multiplication triples.} The \OLE{} correlation gives a secret to each party $P_0$ and $P_1$ and an additive secret-sharing of the product of the two secrets. The \OLE{} correlation is interesting in its own right and can be used directly in some applications, but in general, the multiplication triple correlation is used. A (2-party) multiplication triple, gives the parties additive shares of random elements $a$ and $b$, and shares of the product $a \cdot b$. The main advantage of multiplication triples is their usefulness in the setting of 2-party computation of arithmetic circuits over $\FF_q$.

In this setting, each multiplication gate can be evaluated by consuming a single multiplication triple, and with communication costs of two $\FF_q$ elements per party - the additions are free in this setting.
Using two instances of an \OLE{} correlation we can obtain an instance of a multiplication triple correlation. Let $a = a_0 + a_1, b = b_0 + b_1$ and $c = ab = a_0 b_0 + a_0 b_1 + a_1 b_0 + a_1 b_1$, we can distribute $a_{\sigma},b_{\sigma}$ to party $P_{\sigma}$ and run two independent \OLE{} instances to obtain the secret share of the cross terms $a_0 b_1$ and $a_1 b_0$. As the party $P_{\sigma}$ can locally compute $a_{\sigma} b_{\sigma}$ it gets a correct sharing of $a b$. Note that we obtain the correlation in a black-box way. Thus, a \PCG generating $N$ multiplication triples can be derived from a \PCG generating $2N$ \OLE{}.

\subsection{Application : (N-party) multiplication triples
  generation for arithmetic circuit}

\begin{theorem}
  Assume the existence of oblivious transfers and $\QASD(\Ring)$
  assumption, where
  $\Ring = \FF_q[X_1,\cdots, X_n] / (X_1^{q-1}-1,\cdots,X_n^{q-1}-1
  )\simeq \FF_q \times \cdots \times \FF_q, $ with $ q \geq 3$. Let
  $T = (q-1)^n$.
  There exists a
  semi-honest $N$-party protocol for securely evaluating an arithmetic
  circuit $C$ over $\FF_q$ with $T$ multiplication gates, in the
  preprocessing model, such that:
    \begin{itemize}
        \item The preprocessing phase has communication cost $\Tilde{c}(N,\lambda, T) = O(\lambda^3 \cdot N^2 \cdot \log (2T))$, and computation cost $\dot{c}(N,\lambda,T) = O(N^2 \cdot \lambda \cdot 2T)$ \PRG calls; $O( N^2 \cdot 2T \log (2T))$ operations in $\Fq$.
        \item The online phase is non-cryptographic and communication cost $2 \cdot N \cdot T$ elements of $\FF_q$.

     \end{itemize}

\end{theorem}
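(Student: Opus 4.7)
The plan is to combine our new programmable \PCG{} for \OLE{} over $\Ring$ (Theorem~\ref{pcg:theoremEstimateCostOLE-allRestate}) with the standard template that lifts pairwise \OLE{} correlations to $N$-party multiplication triples via programmability, and then runs Beaver's online protocol. Since $\Ring \simeq \FF_q^{T}$ with $T = (q-1)^n$, a single $\Ring$-\OLE{} yields $T$ independent \OLEs{} over $\FF_q$ by the \CRT{} isomorphism applied locally, so one \PCG{} invocation suffices per pair of parties.

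\textbf{Preprocessing.} First, for every unordered pair $\{i,j\} \subset [N]$ the parties $P_i$ and $P_j$ run the seed-distribution protocol of Theorem~\ref{pcg:theoremEstimateCostEstimateOLEsetup}\iftoggle{longversion}{}{} (more precisely Theorem~\ref{pcg:theoremEstimateCostOLEsetup}) \emph{twice}, producing $2T$ pseudorandom \OLEs{} over $\FF_q$ between the two parties. Thanks to programmability, each party $P_i$ can reuse the \emph{same} additive contributions $a_i, b_i \in \FF_q^T$ across \emph{all} $N-1$ pairwise interactions it participates in; this is exactly the role played by the ``public'' part $\rho_\sigma$ in the programmable \Gen{}. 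Writing $a = \sum_i a_i$ and $b = \sum_i b_i$, we want shares of $c = ab = \sum_i a_i b_i + \sum_{i<j}(a_i b_j + a_j b_i)$: the diagonal terms $a_i b_i$ are computed locally by $P_i$, and the two cross terms $a_i b_j$ and $a_j b_i$ are provided (in shared form) by the two \OLEs{} run between $P_i$ and $P_j$. Summing these local contributions yields, for each of the $T$ components, a uniform $N$-party additive sharing of a triple $(a,b,ab)$ over $\FF_q$.

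\textbf{Online phase.} Given $T$ multiplication triples, the parties run the standard Beaver protocol on the circuit $C$: additions are performed locally on shares, and for each multiplication gate with shared inputs $x, y$ and a fresh triple $(a,b,c)$, every party broadcasts its shares of $d = x-a$ and $e = y-b$, and then locally sets its share of the output to $c + d \cdot (\text{share of } b) + e \cdot (\text{share of } a) + d e / N$ (handling the constant term by any fixed convention). Each party therefore sends $2$ field elements per multiplication gate to each other party, so the total online communication is $2NT$ elements of $\FF_q$ and the phase is purely information-theoretic. Semi-honest security follows from the standard simulation of Beaver's protocol combined with the \PCG{} security guarantee, which ensures that the triples distributed by the preprocessing are computationally indistinguishable from truly random triples even given the view of any subset of corrupted parties.

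\textbf{Cost accounting.} The preprocessing runs $2\binom{N}{2} = N(N-1)$ instances of the two-party \OLE{} \PCG{} from Theorem~\ref{pcg:theoremEstimateCostOLE-allRestate} with output length $T$, which gives communication $O(N^2 \cdot \lambda^3 \log(2T))$, $O(N^2 \cdot \lambda \cdot 2T)$ \PRG{} calls, and $O(N^2 \cdot 2T \log(2T))$ operations in $\FF_q$ for the \Expand{} step, matching the claimed bounds. The online communication of $2NT$ elements is immediate from Beaver's protocol. The only nontrivial conceptual ingredient is programmability, which was already established for our \PCG{} construction; the rest is a black-box composition with well-known reductions (\OLE{} $\to$ triple, and $N^2$ 2-party triples $\to$ one $N$-party triple). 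I expect the main bookkeeping step to be verifying that the summation of the $O(N^2)$ pairwise contributions yields a correctly distributed $N$-party triple under the programmable security definition, but this is exactly the argument already made in~\cite{C:BCGIKS20} and carries over verbatim since our \PCG{} is black-box in the ring $\Ring$.
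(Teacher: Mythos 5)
Your proposal is correct and follows essentially the same route as the paper's proof: programmability lets each party reuse its contributions $(x_i,y_i)$ across the $N(N-1)$ pairwise \PCG{} instances, the cross terms are shared via two \OLEs{} per pair while diagonal terms are computed locally, and the online phase is the standard Beaver/GMW evaluation costing two field elements per party per multiplication gate, with the cost bounds read off from Theorem~\ref{pcg:theoremEstimateCostOLE-allRestate}. The only cosmetic issues are the garbled first reference to the seed-setup theorem and the ad hoc $de/N$ convention (which requires $N$ invertible in $\FF_q$; the usual fix is to let a single designated party add $de$), neither of which affects correctness.
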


\begin{proof}

  Consider the parties $P_1,\cdots, P_N $. First, remark that
  programmability enables parties to generate ``correlated'' (2-party)
  multiplication triples, which can used to obtain $N$ -party
  multiplication triples in the following way.

\begin{itemize}
    \item The party $P_i$ gets two random values $(x_i,y_i) $. We define
        $X = \sum_i x_i$ and $Y = \sum_j y_j$.
  \item Each pair of parties $(P_i,P_j)_{1 \leq i,j \leq N, i \neq j}$
        performs the programmable protocol for (2-party) multiplication
        triples with programmable inputs $(x_i,y_j)$, and obtains
        shares of $x_i \cdot y_j$. We indicate the share of $P_i$ as
        $\langle x_i \cdot y_j \rangle_{i}$.
    \item Let $K_i = \sum_{j=1}^N \langle x_i \cdot y_j \rangle_i + \langle x_i  \cdot y_j \rangle_i + x_i \cdot y_i $.
The $K_i$ are shares of the product
\begin{equation*}
    X \cdot Y = \sum_{1 \leq i,j \leq N} x_i \cdot y_j = \sum_{i = 1}^N K_{i}
\end{equation*}
\end{itemize}

The parties use the $\QASD_{\OLE}$ to generate short seeds of each of the $N\cdot(N-1)$ ($2$-party) multiplication triples they need. In the online phase, they locally expand the seeds to obtain $T$ instances of ($N$-party) multiplication triples. The parties can execute the ($N$-party) GMW protocol using the multiplication triples, and evaluate the circuit.

Using Theorem \ref{pcg:theoremEstimateCostOLE-allRestate} we obtain the cost of preprocessing for generating the $2T$ \OLE{} over $\Fp$, namely $O(\lambda^3 \cdot \log (2T))$ in communication cost, and $\dot{c}(N,\lambda,T) = O(N^2 \lambda 2T)$ \PRG calls ; $O(\lambda^2 \cdot 2T \log (2T))$ operations in $\Fq$ in computation cost.

The cost of communication in the online phase is simply derived from the GMW algorithm using the multiplication triples. For each multiplication gate, each party must send two field elements, resulting in a cost of $2 \cdot N \cdot T$.

\end{proof}

\subsection{Secure Computation with Circuit-Dependent Preprocessing}

Circuit-dependent preprocessing is a variation of the standard
Beaver's circuit randomization technique with multiplication triples.
It has been investigated in recent works, such
as~\cite{C:DNNR17,EC:Couteau19}. The idea is to preprocess
multiplications in a way that depends on the structure of the circuit
and leads to an online phase that requires just \textit{one opening
  per multiplication gate}, instead of two when using multiplication
triples. \PCGs for \OLEs do not directly enable reducing the
preprocessing phase of secure computation with circuit-dependent
correlated randomness: at a high level, this stems from the fact that
since the correlated randomness depends on the topology of the
circuit, it cannot be compressed beyond the description size of this
topology. Nevertheless, \PCGs enable \emph{batch} secure computation
(\emph{i.e.} securely computing many copies of the same circuit on
different input) with silent preprocessing in the circuit-dependent
correlated randomness setting, by using \PCGs to compress a batch of
correlations for a given gate accross all circuits.

\begin{theorem}
     Assume the existence of oblivious transfer and the $\QASD(\Ring)$ assumption, where  $\Ring = \Fq[X_1,\cdots, X_n] / (X_1^{q-1}-1,\cdots,X_n^{q-1}-1 )\simeq \Fq \times \cdots \times \FF_q, $ with $  q \geq 3$. Let $T = (q-1)^n$.
    There exists a semi-honest 2-party protocol for securely evaluating $T$ copies of an arithmetic circuit $C$ over $\FF$ with $S$ multiplication gates, in the preprocessing model, such that:
    \begin{itemize}
        \item The preprocessing phase has communication cost $c(T,\lambda, S) = O(\lambda^3 \cdot S \cdot \log (2T))$ and a computation cost  $\dot{c}(T,\lambda,S) = O(\lambda \cdot S \cdot 2T)  $ \PRG calls ; $ O( S \cdot 2T \log (2T))$ operations in $\Fq$.

        \item The onTine phase is non-cryptographic and communication costs $2 \cdot S \cdot T$ elements of $\FF$.

     \end{itemize}

\end{theorem}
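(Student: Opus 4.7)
The plan is to combine the circuit-dependent correlated randomness paradigm of Damgård \emph{et al.}~\cite{C:DNNR17} and Couteau~\cite{EC:Couteau19} with our batch \PCG for \OLE from Theorem~\ref{pcg:theoremEstimateCostOLE-allRestate}. Recall that for circuit-dependent preprocessing, the parties fix random wire masks $r_w \in \FF_q$ for every wire $w$ of the circuit, and for each multiplication gate $g$ with input wires $u,v$ and output wire $z$, they hold additive shares of $(r_u, r_v, r_u \cdot r_v + r_z)$. In the online phase, for every multiplication gate, the parties exchange one masked value per party (\emph{i.e.}\ $2$ field elements of communication total per multiplication), and no cryptography is required: this gives the claimed $2\cdot S \cdot T$ elements of online communication when $T$ circuits are evaluated in parallel.

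The key observation is that this correlated randomness can be batched \emph{per gate across circuit copies}. For each of the $S$ multiplication gates of the template circuit $C$, one needs $T$ independent triples of the form $(a, b, ab + r)$ shared between the two parties (with $a,b,r$ random and independent accross copies). By the standard reduction recalled above (decomposing $a\cdot b = a_0 b_0 + a_0 b_1 + a_1 b_0 + a_1 b_1$ where $a = a_0 + a_1, b = b_0 + b_1$), each batch of $T$ such triples can be produced from $2T$ instances of the \OLE correlation over $\FF_q$, plus local additions of freshly sampled $r$'s. Instantiating Theorem~\ref{pcg:theoremEstimateCostOLE-allRestate} with $\Ring = \FF_q[X_1,\dots,X_n]/(X_1^{q-1}-1,\dots,X_n^{q-1}-1) \simeq \FF_q^T$ yields, for each such batch, \PCG seeds of size $O(\lambda^3 \log(2T))$ and expansion cost $O(\lambda \cdot 2T)$ \PRG calls plus $O(2T\log(2T))$ operations in $\FF_q$.

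Summing over the $S$ multiplication gates of $C$ gives the stated costs: preprocessing communication $O(\lambda^3 \cdot S \cdot \log(2T))$, together with $O(\lambda \cdot S \cdot 2T)$ \PRG calls and $O(S\cdot 2T \log(2T))$ field operations for local expansion. The seed distribution itself is implemented via the generic protocol of Theorem~\ref{pcg:theoremEstimateCostsOLEsetup} (which assumes \OT as a setup), executed once per gate-batch. The remaining random wire masks $r_w$ for non-multiplication wires are sampled non-interactively using a shared \PRG seed of $\lambda$ bits, contributing only an additive $O(\lambda)$ term absorbed in the bound. Security follows from a straightforward hybrid argument: replacing each \PCG output by the ideal correlation (\OLE, hence multiplication triples, hence circuit-dependent masks) is indistinguishable by the \PCG security from Theorem~\ref{pcg:theoremEstimateCostOLE-allRestate} (itself relying on $\QASD(\Ring)$), and on top of the ideal correlation the online protocol is a standard circuit-dependent semi-honest protocol, whose security is information-theoretic.

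The only mildly subtle point in the plan is to ensure that the $T$ pseudorandom \OLEs produced within a single batch for a fixed gate $g$ are indeed distributed as $T$ \emph{independent} \OLEs over $\FF_q$, rather than as a single \OLE over $\FF_q^T$ with non-uniform correlations inside the product. This is exactly what the CRT isomorphism $\Ring \simeq \FF_q^T$ provides: a uniform \OLE over $\Ring$ projects componentwise to $T$ independent uniform \OLEs over $\FF_q$, and the projection is a local deterministic operation, so indistinguishability is preserved. Everything else reduces to standard bookkeeping of the per-gate-batch cost, which I do not grind through here. \qed
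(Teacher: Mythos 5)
Your proposal is correct and follows essentially the same route as the paper: circuit-dependent wire masks with a one-opening-per-multiplication online phase (hence $2\cdot S\cdot T$ elements of online communication), per-gate batching of the $T$ copies into $2T$ \OLE{}s generated by one invocation of the \PCG over $\Ring\simeq \FF_q^T$ (so $2S$ \PCG calls in total), costs obtained by multiplying the per-instance bounds of Theorem~\ref{pcg:theoremEstimateCostOLE-allRestate} by $S$, and security via a hybrid argument reducing to \QASD. Your phrasing in terms of per-gate triples $(a,b,ab+r)$ versus the paper's shares of $(r_u,r_v,r_u\cdot r_v)$ is only a cosmetic difference, and both treatments gloss over the bookkeeping of reusing wire masks across gates at the same level of detail.
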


\begin{proof}
 Let $C$ be an arithmetic circuit over $\FF$ consisting of fan-in two addition and multiplication gates. Each wire $w$ is assigned a mask $r_w$ during the offline phase. The masks are designed as follows.

\begin{itemize}
    \item[$\bullet$] if $w$ is an input wire, $r_w$ is chosen at random.
    \item[$\bullet$] if $w$ is the output wire of a multiplication gate, $r_ w \leftarrow \FF $ is chosen at random
    \item[$\bullet$] if $w$ is the output wire of an addition gate with input wires $u$ and $v$, then $r_w = r_u + r_v$.
    \item[$\bullet$] for each multiplication gate, we assigned a value $s_{u,v}$, such that on input wires $u$ and $v$, $s_{u,v} = r_u \cdot r_v$.
\end{itemize}

The masks are not known by the parties, but they obtain random additive shares of each $r_w$ for any input and output wire of multiplication gates, as well as $s_{u,v}$ for the multiplication gates.

When the online phase begins, both parties hide their secret values with random masks. The party that is not the one giving the input for a given wire $w$ gives to the other one his shares $\langle r_w\rangle$.
The invariant of the online phase is that through the protocol, for each wire, parties know exactly the value $x + r_x$ where $r_x$ is the mask of this wire (for which parties have additive sharing), and $x$ is the real value that is computed by the circuit passing through the wire $w$.
The invariant is preserved through each gate because of the following:
\begin{itemize}
    \item[$\bullet$]  For an addition gate, parties know  $x+r_x$ and $y + r_y$. Then the parties add locally those values to obtain $x + y + r_x + r_y$ , with $r_x + r_y$ being indeed the output mask for the addition gate.
    \item[$\bullet$] For a multiplication gate with $r_{w}$ denoting its output wire's mask, parties know  $x+r_x$ and $y + r_y$. The parties can locally compute their share $ \langle (x+r_x)\cdot r_y + (y+r_y) \cdot r_x + r_x \cdot r_y + r_w \rangle$ (the formula can be a little bit different if we are not in $\FF_2$). By both exchanging one bit of information, they reconstitute that value. Adding up $(x+r_x) \cdot (y + r_y)$, they obtain in clear $x \cdot y + r_w$ where $r_w$ is the mask of the output wire of this multiplication gate.
\end{itemize}

In the end, we have to perform $2S$ different calls to our \PCG to create the ($2$-party) multiplication triples seeds.
Again we use Theorem \ref{pcg:theoremEstimateCostOLE-allRestate} to get the estimation of the costs in communication and space, per instances, and we multiply it by $S$. In the online phase, we gain a factor 2 in communication because each party only has to send a bit of information for each of the multiplication gates. As there are $S \cdot T$ multiplication gates in total, the communication cost in the online phase is $2 \cdot S \cdot T$. \qed
\end{proof}

 }{}

\bibliographystyle{alpha}
\newcommand{\etalchar}[1]{$^{#1}$}

\appendix

\chapter*{Appendices}
\section{Additional Preliminaries}
\label{app:prelims}

\subsection{Function Secret Sharing}

Function secret sharing (\FSS{}), introduced
in~\cite{EC:BoyGilIsh15,CCS:BoyGilIsh16}, allows to succinctly share
functions. In this section, we largely follow the presentation from
the preliminaries of~\cite{C:BCGIKS20} (in particular, the definitions
are reproduced almost verbatim from~\cite{C:BCGIKS20}). An \FSS{}
scheme splits a secret function $f:I\to\GG$, where $\GG$ is some
Abelian group, into two functions $f_0,f_1$, each represented by a key
$K_0,K_1$, such that: (1) $f_0(x) + f_1(x)=f(x)$ for every input
$x\in I$, and (2) each of $K_0,K_1$ individually hides $f$.

\begin{definition}[Function Secret Sharing] Let $\class = \{f:I\to\GG\}$ be a class of function descriptions, where the description of each $f$ specifies the input domain $I$ and  an Abelian group $(\GG,+)$ as the output domain.
A (2-party) \emph{function secret sharing} (\FSS{}) scheme for $\class$ is a pair of algorithms $\FSS = (\FSS.\Gen,\FSS.\Eval)$ with the following syntax:
\begin{itemize}
  \item $\FSS.\Gen(1^\secpar,f)$ is a Probabilistic Polynomial Time (PPT) algorithm that given security parameter $\secpar$ and description of $f\in\class$
outputs a pair of keys $(K_0,K_1)$. We assume that the keys specify $I$ and $\GG$.
  \item $\FSS.\Eval(b,K_b,x)$ is a polynomial-time algorithm that, given a key $K_b$ for party $b\in\bit$, and an input $x\in I$, outputs a group element $y_b\in\GG$.
\end{itemize}
The scheme should satisfy the following requirements:
\begin{itemize}
  \item {\bf Correctness:} For any $f\in\class$ and $x\in I$, we have
  \[
        \Pr\left[(K_0,K_1)\getsr\FSS.\Gen(1^\secpar,f) ~\bigg|~\sum_{b\in\bit}\FSS.\Eval(b,K_b,x) = f(x)\right] = 1.
  \]
  \item {\bf Security:} For any $b\in\bit$, there exists a PPT
        simulator $\Sim$ such that for any polynomial-size function
        sequence $f_\secpar\in\class$, the distributions
        $\{(K_0,K_1)\getsr\FSS.\Gen(1^\secpar,f_\secpar) ~|~ K_b\}$
        and $\{K_b\getsr\Sim(1^\secpar,\Leak(f_\secpar))\}$ are
        computationally indistinguishable.
\end{itemize}
In the constructions we use, the leakage function $\Leak:\bit^*\to\bit^*$ is given by $\Leak(f_\secpar) = (I,\GG)$, namely it outputs a description of the input and output domains of $f$.
\end{definition}

We also define a full-domain evaluation algorithm, $\FSS.\FullEval(b,K_b)$, which outputs a vector of $\abs{I}$ group elements, corresponding to running $\Eval$ on every element $x$ in the domain $I$. For the type of \FSS{} we consider, $\FSS.\FullEval$ is significantly faster than the generic solution of running $|I|$ instances of $\Eval$. We will use \FSS{} for point functions and sums of point functions, as defined below.

\begin{definition}[Distributed Point Function (DPF)~\cite{EC:GilIsh14,EC:BoyGilIsh15}]
  \label{def-dpf}
  Denote by $[n]$ the set of integers $\{0, \dots, n-1\}$.
For an Abelian group $\GG$, $\alpha \in [n]$, and $\beta \in \GG$, the {\em point function} $f_{\alpha,\beta}$ is the function $f_{\alpha,\beta}:[n]\to\GG$ defined by $f_{\alpha,\beta}(x)=0$ whenever $x \ne \alpha$, and $f_{\alpha,\beta}(x) = \beta$ if $x = \alpha$.
A \emph{distributed point function} (DPF) is an \FSS{} scheme for the class of point functions $\{f_{\alpha,\beta} : [n] \rightarrow \GG \mid \alpha\in [n],\beta\in\GG\}$.
\end{definition}

The best known DPF construction~\cite{CCS:BoyGilIsh16} can use any pseudorandom generator (\PRG{}) $G : \bit^\secpar \to \bit^{2\secpar+2}$ and has the following efficiency features.
For $m=\lceil \frac{\log \abs{\GG}}{\secpar+2} \rceil$, the key generation algorithm $\Gen$ invokes $G$ at most $2(\lceil\log n\rceil+m)$ times, the evaluation algorithm $\Eval$ invokes $G$ at most $\lceil\log n\rceil+m$ times, and the full-domain evaluation algorithm $\FullEval$ invokes $G$ at most $n \cdot (1+m)$ times.
The size of each key is at most $\lceil\log n\rceil\cdot(\secpar+2) + \secpar + \lceil\log_2|\GG|\rceil$ bits. We will use a simple and generic extension of DPF to sums of point functions.

\begin{definition}[\FSS{} for sum of point functions (\SPFSS{})]
\label{def-spfss}
For $S = (s_1, \dots, s_t) \in [n]^t$ and $\vec y = (y_1, \dots, y_t) \in \GG^t$, define the \emph{sum of point functions} $f_{S,\vec y} : [n] \rightarrow \GG$ by
\[ f_{S,\vec y}(x) = \sum_{i=1}^t f_{s_i,y_i}(x). \]
An \SPFSS{} scheme is an \FSS{} scheme for the class of sums of point functions.
\end{definition}

Note that for $S = (s_1, \dots, s_t)$, the function $f_{S,\vec y}$ is non-zero on {\em at most} $t$ points.
If the elements of $S$ are distinct, $f_{S,\vec y}$ coincides with a {\em multi-point function} for the set of points in $S$. A simple realization of \SPFSS{} is by summing $t$ independent instances of DPF. This will typically be good enough for our purposes. To simplify notation, when generating keys for a scheme $\SPFSS = (\SPFSS.\Gen,\allowbreak \SPFSS.\Eval)$, we write $\SPFSS.\Gen(1^\secpar, S, \vec y)$, instead of explicitly writing $f_{S,\vec y}$.

\subsection{Pseudorandom Correlation Generators}

We recall the notion of pseudorandom correlation generator (\PCG{})
from~\cite{C:BCGIKS19}. At a high level, a \PCG{} for some target ideal
correlation takes as input a pair of short, correlated seeds and
outputs long correlated pseudorandom strings, where the expansion
procedure is deterministic and can be applied locally. The definitions below are taken almost verbatim from~\cite{C:BCGIKS20}.

\begin{definition}[Correlation generator] A PPT algorithm $\cC$ is called a \emph{correlation generator}, if $\cC$ on input $1^\secpar$ outputs a pair of elements in $\{0,1\}^n\times\{0,1\}^n$ for $n\in\poly(\lambda)$.
\end{definition}

The security definition of \PCGs requires the target correlation to satisfy a technical requirement, which roughly says that it is possible to efficiently sample from the conditional distribution of $\R_0$ given $\R_1=r_1$ and vice versa. It is easy to see that this is true for the correlations considered in this paper.

\begin{definition}[Reverse-sampleable correlation generator]
\label{def:rsc} Let
$\cC$ be a correlation generator. We say $\cC$ is \emph{reverse sampleable} if there exists a PPT algorithm $\RSample$ such that for $\sigma\in\{0,1\}$ the correlation obtained via:
\begin{align*}\{(\R_0^\prime,\R_1^\prime)\mid  &(\R_0,\R_1)\getsr \cC(1^\secpar), \R^\prime_\sigma:=\R_\sigma,\R^\prime_{1-\sigma}\getsr \RSample(\sigma,\R_\sigma)\}\end{align*}
is computationally indistinguishable from $\cC(1^\secpar)$.
\end{definition}

\begin{definition}[Pseudorandom Correlation Generator (PCG)]
\label{def:pcg}
Let $\cC$ be a reverse-sam\-pleable correlation generator.
A \emph{pseudorandom correlation generator (PCG) for $\cC$} is a pair of algorithms $(\PCG.\Gen,\allowbreak\PCG.\Expand)$ with the following syntax:
\begin{itemize}
	\item $\PCG.\Gen(1^\lambda)$ is a PPT algorithm that given a security parameter $\lambda$, outputs a pair of seeds $(\key_0,\key_1)$;

	\item $\PCG.\Expand(\sigma,\key_\sigma)$ is a polynomial-time algorithm that given a party index $\sigma\in\bit$ and a seed $\key_\sigma$, outputs a bit string $\R_\sigma\in \{0,1\}^n$.
\end{itemize}
The algorithms $(\PCG.\Gen,\PCG.\Expand)$ should satisfy the following:
\begin{itemize}
\item {\bf Correctness.} The correlation obtained via:
\begin{align*}\{(\R_0,\R_1) \mid (\key_0,\key_1)\getsr\PCG.\Gen(1^\lambda),R_\sigma\gets \PCG.\Expand(\sigma,\key_\sigma)\text{ for }\sigma\in\{0,1\}\}\end{align*}
is computationally indistinguishable from $\cC(1^\secpar)$.

\item {\bf Security.} For any $\sigma\in\bit$, the following two distributions are computationally indistinguishable:
	\begin{align*}
	\{ (\key_{1-\sigma},\R_\sigma) \mid (\key_0,\key_1)\getsr\PCG.\Gen(1^\lambda), &\R_\sigma\gets \PCG.\Expand(\sigma,\key_\sigma) \} \; \textrm{and} \\
	\{ (\key_{1-\sigma},\R_\sigma) \mid (\key_0,\key_1)\getsr\PCG.\Gen(1^\lambda), &\R_{1-\sigma}\gets \PCG.\Expand(\sigma,\key_{1-\sigma}), \\
	&R_\sigma\getsr\RSample(\sigma,\R_{1-\sigma}) \} \\
	\end{align*}
	where $\RSample$ is the reverse sampling algorithm for correlation $\cC$.
\end{itemize}
\end{definition}

Note that $\PCG.\Gen$ could simply output a sample from $\cC$. To avoid this trivial construction, we also require that the seed size is significantly shorter than the output size.

\subsubsection{Programmable PCG's.} At a high level, a programmable \PCG allows generating multiple \PCG keys such that part of the correlation generated remains the same accross different instances. Programmable \PCGs are necessary to construct $n$-party correlated randomness from the $2$-party correlated randomness generated via the \PCG. Informally, this is because when expanding $n$-party shares (e.g. of Beaver triples) into a sum of $2$-party shares, the sum will involve many ``cross terms''; using programmable \PCGs allows maintaining consistent pseudorandom values accross these cross terms. We recall the formal definition below.

\begin{definition}[Programmable \PCG{}]
\label{def:programmablePCG}
A tuple of algorithms \\
$\PCG = (\PCG.\Gen, \PCG.\Expand)$ following the syntax of a standard
\PCG{}, but where $\PCG.\Gen(1^{\lambda})$ takes additional random
inputs $\rho_0, \rho_1 \in \{0,1\}^{\kappa}$, for a fixed parameter $\kappa$ of size $ \text{\sf poly}(\lambda)$, is a {\em programmable
  \PCG{}} for a simple bilinear 2-party correlation $C_e^n$ (specified
by a bilinear pairing $e: \GG_{1} \times \GG_2 \rightarrow \GG_T$ for
some groups $\GG_1, \GG_2$ and $\GG_T$) if the following holds:

\begin{itemize}
    \item[$\bullet$] \textbf{Correctness.} The correlation obtained via:
\begin{equation*} \left \{
       \begin{matrix}
         ((R_0,S_0),(R_1,S_1))
     \end{matrix}
     \Bigg |
     \begin{matrix}
       \rho_0, \rho_1 \overset{\$} {\leftarrow} \{0,1\}^{\kappa}, (k_0,k_1) {\leftarrow} \PCG.\Gen(1^{\lambda}, \rho_0,\rho_1),\\
       (R_{\sigma},S_{\sigma}) {\leftarrow} \PCG.\Expand(\sigma, k_{\sigma}) \text{ for }  \sigma \in \{0,1\}  \\
  \end{matrix}  \right \}
\end{equation*}

        is computationally indistinguishable from
        $C_{e}^{n}(1^{\lambda})$.

  \item[$\bullet$] \textbf{Programmability} There exist public
        efficiently computable functions
        $\phi_0 : \{0,1\}^{*} \rightarrow \GG_1^n$,
        $\phi_1 : \{0,1\}^{*} \rightarrow \GG_2^n$ such that

\begin{equation*}
    \Pr \left [
      \begin{matrix}
        \rho_0, \rho_1 \overset{\$} {\leftarrow} \{0,1\}^{\kappa} , (k_0,k_1) {\leftarrow} \PCG.\Gen( 1^{\lambda},\rho_0,\rho_1)\\
     (R_0,S_0) \leftarrow \PCG.\Expand(0,k_0),\\
     (R_1,S_1) \leftarrow \PCG.\Expand(1,k_1)\\
  \end{matrix}
  \colon
      \begin{matrix}
       R_0 = \phi_0(\rho_0) \\
       R_1 = \phi_1(\rho_1)  \\
  \end{matrix} \right ] \geq  1 -  \text{\sf negl}(\lambda),
\end{equation*}

        where
        $e : \GG_1^n \times \GG_2^n \rightarrow \GG_T^n$
        is the bilinear map obtained by applying $e$ componentwise.

\item[$\bullet$] \textbf{Programmable security} The following pair of distributions are computationally indistinguishable
\begin{align*}
  &\Biggl\{ ({ k_1},(\rho_0,\rho_1)) \hspace{0.1cm} \Bigg|  \hspace{0.15cm}
    \rho_0, \rho_1 \overset{\$} {\leftarrow} \{0,1\}^{\kappa}, ({k}_0,{k}_1) {\leftarrow} \PCG.\Gen(1^{\lambda},\rho_0, \rho_1) \hspace{0.15cm} \Biggl \}   \hspace{0.3cm} \text{and} \\
    &\Biggl \{ ({k}_1,(\rho_0,\rho_1))  \hspace{0.1cm} \Bigg|  \hspace{0.15cm}
      \rho_0, \rho_1, \tilde{\rho_0}  \overset{\$} {\leftarrow} \{0,1\}^{\kappa}, ({k}_0,{k}_1) {\leftarrow} \PCG.\Gen(1^{\lambda}, \tilde{\rho_0},  \rho_1)  \hspace{0.1cm}\Biggl \}
\end{align*}
as well as the pair of distributions:
\begin{align*}
  &\Biggl\{ ({ k_0},(\rho_0,\rho_1)) \hspace{0.1cm} \Bigg|  \hspace{0.15cm}
    \rho_0, \rho_1 \overset{\$} {\leftarrow} \{0,1\}^{\kappa}, ({k}_0,{k}_1) {\leftarrow} \PCG.\Gen(1^{\lambda},\rho_0, \rho_1) \hspace{0.15cm} \Biggl \}   \hspace{0.3cm} \text{and} \\
    &\Biggl \{ ({k}_0,(\rho_0,\rho_1))  \hspace{0.1cm} \Bigg|  \hspace{0.15cm}
      \rho_0, \rho_1, \tilde{\rho_1}  \overset{\$} {\leftarrow} \{0,1\}^{\kappa}, ({k}_0,{k}_1) {\leftarrow} \PCG.\Gen(1^{\lambda}, \rho_0, \tilde{ \rho_1})  \hspace{0.1cm}\Biggl \}.
\end{align*}
\end{itemize}
\end{definition}

\section{From Decision-\QASD to Search-\QASD}
\label{app:std}
In this section, we describe a reduction from the search version of
$\QASD$ to the decision version, in the concrete chosen instantiations
(all instances over $\Ring = \FF_q[G]$ where $G = (\ZZ/(q-1)\ZZ)^n$,
which is the group we use to obtain \PCGs for \OLEs over
$\FF_q^{(q-1)^n}$). This reduction is actually a natural extension to
that of \cite{C:BomCouDeb22} to the multivariate setting, and
essentially applies in extreme regime of low rate, \ie more in the
\LPN regime. In \cite{C:BomCouDeb22}, the authors introduced a new
problem they called Function Field Decoding Problem ($\FFDP$) that we
recall below, which is the analogue of \RLWE with function fields
instead of number fields (See Section~\ref{sec:ntff} for a quick
reminder on the theory of algebraic function fields).

Let $K/\Fq(T)$ be a function field with constant field $\Fq$ and ring
of integers $\OO_{K}$, and let $Q(T)\in\Fq[T]$ be irreducible. Let
$\gothP\eqdef Q\OO_{K}$ be the ideal of $\OO_{K}$ generated by $Q$.
\FFDP is parametrized by a secret element $\bfs\in\OO_{K}/\gothP$, and
a noise distribution $\psi$ over $\OO_{K}/\gothP$ which is a finite set.

\begin{definition}[\FFDP distribution] A sample
  $(\bfa, \bfb)\in \OO_{K}/\gothP\times \OO_{K}/\gothP$ is distributed
  according to the \FFDP distribution modulo $\gothP$, with secret
  $\bfs$ and noise distribution $\psi$ if
  \begin{itemize}
    \item $\bfa$ is uniformly distributed over $\OO_{K}/\gothP$;
    \item $\bfb=\bfa\cdot\bfs+\bfe$ where $\bfe$ is distributed
          according to $\psi$.
        \end{itemize}
        A sample drawn according to this distribution will be denoted
        by $(\bfa,\bfb)\getsr\mathcal{F}_{\bfs, \psi}$.
\end{definition}
In its search version, the goal of \FFDP is to recover the secret
$\bfs$ given access to enough samples.
\begin{definition}[\FFDP (search version)] Let
  $\bfs\in\OO_{K}/\gothP$, and let $\psi$ be a probability
  distribution over $\OO_{K}/\gothP$. An instance of \FFDP consists in
  an oracle giving access to independent samples
  $(\bfa, \bfb)\getsr \mathcal{F}_{\bfs, \psi}$. The goal is to
  recover $\bfs$.
\end{definition}
In its decision version, the goal is to distinguish between the \FFDP
distribution and the uniform over
$\OO_{K}/\gothP\times \OO_{K}/\gothP$.
\begin{definition}[\FFDP (decision version)] Let $\bfs$ be drawn {\em
    uniformly} at random in $\OO_{K}/\gothP$, and let $\psi$ be a
  noise distribution over $\OO_{K}/\gothP$. Define the following two distributions:
  \begin{itemize}
    \item $\mathcal{D}_{0}:(\bfa, \bfb)$ uniformly distributed over
    $\OO_{K}/\gothP\times\OO_{K}/\gothP$.
    \item $\mathcal{D}_{1}:(\bfa, \bfa\cdot\bfs+\bfe)$ distributed
          according to the \FFDP distribution
          $\mathcal{F}_{\bfs,\psi}$.
        \end{itemize}
        Let $b\in\{0, 1\}$. Given access to an oracle $\code{O}_{b}$
        providing independent samples from distribution
        $\mathcal{D}_{b}$, the goal of the decision \FFDP is to
        recover $b$.
      \end{definition}
      Recall that a distinguisher between two distributions
      $\code{D}_{0}$ and $\code{D}_{1}$ is a PPT algorithm $\code{A}$
      that takes as input an oracle $\code{O}_{b}$ corresponding to
      distribution $\code{D}_{b}$ with $b\in\{0,1\}$ and outputs a bit
      $\code{A}(\code{O}_{b})\in\{0, 1\}$. The distinguisher wins when
      $\code{A}(\code{O}_{b})=b$. Its distinguishing advantage is
      defined as:
\[
  \adv_{\code{A}}(\code{D}_{0}, \code{D}_{1}) \eqdef \dfrac{1}{2} \Bigl( \Prob(\code{A}(\code{O}_{b}) = 1 \mid b = 1) - \Prob(\code{A}(\code{O}_{b}) = 1 \mid b = 0) \Bigr)
\]
and satisfies
\[
  \Prob(\code{A}(\code{O}_{b}) = b) = \frac{1}{2} + \adv_{\code{A}}(\code{D}_{0}, \code{D}_{1}).
\]

      The crucial remark of \cite{C:BomCouDeb22} was to notice that
      some structured variants of the decoding problem could be
      somehow {\em lifted} to the function field setting, and could be
      directly seen as instances of \FFDP.

    \begin{example}
        Let $n\in\NN$ and consider the polynomial
        \[
          F(T, X)\eqdef X^{n}+T-1\in\Fq(T)[X].
        \]
        Eisenstein criterion proves that $F$ is irreducible over
        $\Fq[T]$. Define the function field
        \[
          K \eqdef \Fq(T)[X]/(F(T,X)).
        \]
        Computing partial derivatives shows that the curve defined by
        $F(T, X)$ is non-singular, and therefore $\Fq[T, X]/(F(T,X))$
        is the full ring of integers $\OO_{K}$ of $K$ (see for
        instance \cite[Chapter VII]{L21}). Now, let
        $Q(T)\eqdef T\in\Fq[T]$. Then,
        \[
          \OO_{K}/T\OO_{K} = \Fq[T, X]/(X^{n}+T-1, T) = \Fq[X]/(X^{n}-1) = \Fq[\ZZ/n\ZZ].
        \]
        Therefore, \QASD with the group $\ZZ/n\ZZ$ can be seen as an
        instanciation of \FFDP with the function field
        $K=\Fq(T)[X]/(X^{n}+T-1)$, and modulus $Q(T)=T$.
      \end{example}

      A general search-to-decision reduction for \FFDP would therefore
      immediately provide a search-to-decision reduction for many
      variants of \QASD. However, adapting the reduction of
      \cite{EC:LyuPeiReg10} the authors of \cite{C:BomCouDeb22} were
      only able to give such a reduction with addition algebraic
      constraints on $K$ and $\gothP$. More precisely, they gave the
      following theorem

      \begin{theorem}[Search to decision reduction for
        \FFDP{}]\label{thm:std}
        Let ${K}/{\Fq(T)}$ be a Galois function field of degree $n$
        with field of constants $\Fq$, and denote by $\OO_{K}$ its
        ring of integers. Let $Q(T)\in\Fq[T]$ be an irreducible
        polynomial. Consider the ideal $\mathfrak{P}\eqdef Q\OO_{K}$.
        Assume that $\mathfrak{P}$ does not ramify in $\OO_K$, and
        denote by $f$ its inertia degree. Let $\distrib$ be a
        probability distribution over ${\OO_{K}}/{\gothP}$, closed
        under the action of $\Gal({K}/{\Fq(T)})$, meaning that if
        $\bfe\sample\distrib$, then for any
        $\sigma \in \Gal(K/\Fq(T))$, we have
        $\sigma(\bfe)\sample\distrib$. Let
        $\bfs\in {\OO_{K}}/{\gothP}$.

        Suppose that we have an access to $\ffd{\bfs, \distrib}$
        and there exists a distinguisher between the uniform
        distribution over ${\OO_{K}}/{\goth{P}}$ and the \FFDP{}
        distribution with uniform secret and error distribution
        $\distrib$, running in time $t$ and having an advantage
        $\eps$. Then there exists an algorithm that recovers
        $\bfs\in {\OO_{K}}/{\gothP}$ (with an overwhelming probability
        in $n$) in time
        \[
          O\left( \frac{n^{4}}{f^{3}}\times \frac{1}{\varepsilon^{2}} \times q^{f \deg(Q)}\times t\right).
        \]
\end{theorem}

Unfortunately, not all group algebras arise from Galois extensions of
function fields. Nonetheless, based on the analogy between cyclotomic
number fields and the Carlitz modules, they proposed to instantiate
their reduction with $K=\Fq(T)[\Lambda_{T}]=\Fq(T)[X]/(X^{q-1}+T)$,
and modulus $Q(T)=T+1$. The theory of Carlitz extensions ensures that $\OO_{K}=\Fq[T][X]/(X^{q-1}+T)$, and therefore
\[
  \OO_{K}/(T+1)\OO_{K} = \Fq[T, X]/(T+1, X^{q-1}+T) = \Fq[X]/(X^{q-1}-1) = \Fq[\ZZ/(q-1)\ZZ].
\]

The key point is the fact that $\Gal(K/\Fq(T))=\Fq^{\times}$ and an
element $\zeta\in\Fq^{\times}$ acts on $P(X)\in\Fq[X]/(X^{q-1}-1)$ by
$\zeta\cdot P(X) \eqdef P(\zeta X)$. In particular, the Galois group
keeps invariant the support of {\em any} element, and therefore any
distribution that only depends on the weight is Galois invariant.
Theorem \ref{thm:std} immediately yields a search-to-decision
reduction for \QASD instantiated with the group $G=\ZZ/(q-1)\ZZ$.

\subsubsection{Extension to the multivariate setting.} Consider the
group $G~=~(\ZZ/(q-1)\ZZ)^{t}$, and let
$\mathcal{R}\eqdef \Fq[G] = \Fq[X_{1},\dots, X_{t}]/(X_{1}^{q-1}-1,\dots,X_{t}^{q-1}-1)$.
Using the heavy machinery of inverse Galois theory, it is possible to
find a Galois extension of $\Fq(T)$ with Galois group $G$. However,
this would induce a large overhead in the complexity of the reduction.
Instead, in this case, building on the case of $\Fq[\ZZ/(q-1)\ZZ]$, we
can directly describe the reduction and get Theorem~\ref{thm:std_inst}
from Section~\ref{subsec:std_inst}.

The reduction works as follows. Recall that by the Chinese Remainder
Theorem,
\[
  \mathcal{R}=\prod_{(\zeta_{1},\dots, \zeta_{t})\in(\Fq^{\times})^{t}}\Fq[X_{1},\dots, X_{t}]/(X_{i}-\zeta_{i}),
\]
and fix an ordering of $(\Fq^{\times})^{t}$, which yields an ordering
$\mathfrak{I}_{1},\dots, \mathfrak{I}_{r}$ of the ideals in the above
decomposition (where $r=(q-1)^{t}$):
\[
  \mathcal{R} = \prod_{i=1}^{r}\Fq[X_{1},\dots, X_{t}]/\mathfrak{I}_{i}.
\]

Let $w\in\{0,\dots, (q-1)^{t}\}$ and $\bfs\in\mathcal{R}$. Consider a
noise distribution $\psi=\psi_{w}$ over $\mathcal{R}$ such that
$\Expect[\HW{x}]=w$ when $x$ is sampled according to $\psi$. A sample
$(\bfa, \bfb)$ is distributed according to $\ffd{\bfs, \psi}$ if
$\bfa$ is uniformly distributed in $\mathcal{R}$, and
$\bfb=\bfa\cdot\bfs+\bfe$ where $\bfe\getsr \psi$.

The idea of the reduction is to recover the secret modulo one of the
factors, and then using the action of some group recover the full
secret. We keep a high level, the first steps of the reduction
following {\em exactly} the same path as that of \cite{C:BomCouDeb22}.
The only difference resides in the last step and the considered group
action.

\begin{description}
  \item[Step 1: Randomizing the secret.] In the {\em decision} version, the secret $\bfs$ is supposed to be {\em uniformly} distributed over $\mathcal{R}$, while in the {\em search} version, the secret is {\em fixed}. In other words, the decision version is an average-case problem, while the search version is worst-case. Fortunately, the secret can be easily randomized by sampling some $\bfs'$ uniformly at random in $\mathcal{R}$. Now, for each sample $(\bfa, \bfb)\sample\ffd{\bfs, \psi}$ with a fixed secret $\bfs$, we can build the sample $(\bfa, \bfb + \bfa\cdot\bfs')$ which is distributed according to $\ffd{\bfs+\bfs', \psi}$, and the secret is now uniformly distributed. Feeding the latter sample to a distinguisher, allows to creates a distinguisher for a fixed secret, with exactly the same advantage.
  \item[Step 2: Hybrid argument.] A sample $(\bfa, \bfb)$ is said to follow the hybrid distribution $\mathcal{H}_{i}$ if it is of the form $(\bfa', \bfb'+\bfh)$ where $\bfh$ is uniformly distributed modulo $\mathfrak{I}_{j}$ for $j\le i$, and is $0$ modulo $\mathfrak{I}_{j}$ for $j>i$. Such an $\bfh$ is easily constructed using the Chinese Remainder Theorem. In particular, $\mathcal{H}_{0}=\ffd{s,\psi}$ and $\mathcal{H}_{r}$ is the uniform distribution over $\mathcal{R}$. A simple hybrid argument proves that a distinguisher between $\mathcal{H}_{0}$ and $\mathcal{H}_{r}$ with advantage $\eps$ can be turned into a distinguisher between $\mathcal{H}_{i_{0}}$ and $H_{i_{0}+1}$ for some $i_{0}$, with advantage at least $\frac{\eps}{r}$.
  \item[Step 3: Guess and search.] Given $i_{0}$, the idea is to make a guess $g_{i_{0}}$ for $\bfs$ modulo $\mathfrak{I}_{i_{0}}$ and to use the previous distinguisher to tell whether this guess is correct, or not.
        Define $\bfg, \bfh$ and $\bfv\in\mathcal{R}$ such that:

        \begin{minipage}{0.45\textwidth}
        \[
        \bfg = \left\lbrace
        \begin{array}{ll}
          g_{i_{0}}& \mod \mathfrak{I}_{i_{0}}\\
          0 & \text{elsewhere}
        \end{array}
        \right.
        \]
        \end{minipage}
\begin{minipage}{0.45\textwidth}
        \[
        \bfv = \left\lbrace
        \begin{array}{ll}
          \text{random} & \mod \mathfrak{I}_{i_{0}}\\
          0 & \text{elsewhere}
        \end{array}
        \right.
        \]
        \end{minipage}\ \\

        and $\bfh$ is uniformly distributed modulo $\mathfrak{I}_{j}$ for $j\le i_{0}+1$ and $0$ elsewhere.
        Then, for each sample $(\bfa, \bfb)$, we can build the sample $(\bfa', \bfb')$ where
        \[
        \left\lbrace
        \begin{array}{l}
          \bfa'=\bfa+\bfv\\
          \bfb'=\bfb+\bfh+\bfv\cdot\bfg=\bfa'\cdot\bfs + \bfe +\bfh + \bfv(\bfg-\bfs).
        \end{array}\right.
        \]
        Using the fact that all the factors $\Fq[X_{1},\dots,X_{t}]/\mathfrak{I}_{j}$ are finite fields (isomorphic to $\Fq$), it is easily seen that $(\bfa', \bfb')$ is distributed according to $\mathcal{H}_{i_{0}}$ when the guess is correct, and according to $\mathcal{H}_{i_{0}+1}$ otherwise. Therefore, by the Chernoff-Hoeffding bound, using our distinguisher $\Theta(n(r/\eps)^{2})$ times, one can detect if the guess is correct or not with probability at least $1-2^{-\Theta(n)}$. An exhaustive search on $\Fq$ yields the value of $\bfs\mod\mathfrak{I}_{i_{0}}$.
  \item[Step 4: A group action] This is the only step that changes from the reduction of \cite{C:BomCouDeb22}. We need to find a group $\widehat{G}$  that will replace the Galois group of their reduction in permuting the factors.

        Inspired by the univariate example, let $\widehat{G}\eqdef (\Fq^{\times})^{t}$. It acts on $\mathcal{R}$ by:
        \[
        (\zeta_{1},\dots,\zeta_{t})\cdot P(X_{1},\dots, X_{t}) \eqdef P(\zeta_{1}X_{1},\dots, \zeta_{t}X_{t}).
        \]

        The key observation here is that
        \begin{enumerate}
          \item This action keeps invariant the support of elements in
                $\mathcal{R}$. In particular, the distribution $\psi$
                is invariant under the action of $\widehat{G}$.
                \item $\widehat{G}$ acts transitively on the factors:

                The action of $(\zeta_{1},\dots,\zeta_{t})$ maps the
                ideal $(X_{1}-\gamma_1,\dots,X_{t}-\gamma_{t})$ onto
                $(X_{1}-\zeta_{1}^{-1}\gamma_1,\dots,X_{t}-\zeta_{t}^{-1}\gamma_{t})$.
              \end{enumerate}

        Now, in order to recover $\bfs\mod\mathfrak{I}_{j}$ for $j\neq i_{0}$, it suffices to take the (unique) element $\bfz\in\widehat{G}$ such that $\bfz\cdot \mathfrak{I}_{j} = \mathfrak{I}_{i_{0}}$, and for any sample $(\bfa, \bfb=\bfa\bfs+\bfe)$ we can build $(\bfz\cdot \bfa, \bfz\cdot\bfb = (\bfz\cdot\bfa)(\bfz\cdot\bfs)+\bfz\cdot \bfe)$. Note that $\bfa'$ (resp. $\bfz\cdot\bfe$) is still uniformly distributed over $\mathcal{R}$ (resp. distributed according to $\psi$ since it is $\widehat{G}$-invariant). In other words, $(\bfa', \bfb')$ is distributed according to $\ffd{\bfz\cdot \bfs, \psi}$, and repeating the first three steps of the reduction will yield $\bfz\cdot\bfs\mod \mathfrak{I}_{i_{0}}$, which is equal to $\bfs \mod \bfz^{-1}\cdot \mathfrak{I}_{i_{0}} = \bfs \mod \mathfrak{I}_{j}$, which concludes the reduction.
\end{description}

\section{Algebraic number theory in function fields}
\label{sec:ntff}
There is a well-known analogy between the theory of finite extensions
of $\QQ$, the so-called {\em number fields}, and that of finite
separable extensions of $\Fq(T)$, the field of rational functions with
coefficients in a finite field $\Fq$. The latter algebraic extensions
are called {\em function fields}, because they can be realized as
fields of rational functions on curves over finite fields. In this
section, we recall the minimal requirements about the arithmetic of
function fields that are needed in the sequel. A dictionnary
summarizing the analogies between function fields and number fields is
represented in Table \ref{table:analogy_nf_ff} below.

\begin{table}[ht]
  \centering
\[
  \begin{array}{|c|c|}
    \hline
    \text{ Number fields } & \text{ Function fields } \\
    \hline
    \QQ & \Fq(T) \\
    \ZZ & \Fq[T] \\
    \text{Prime numbers } q \in \ZZ & \text{Irreducible polynomials } Q \in \Fq[T]\\
                           & \\
    K = \fract{\QQ[X]}{(f(X))} & K = \fract{\Fq(T)[X]}{(f(T, X))}\\
                           & \\
    \makecell{\OO_{K} \\ = \text{Integral closure of $\ZZ$} \\ \text{\emph{Dedekind} domain}}  & \makecell{\OO_{K} \\ = \text{Integral closure of $\Fq[T]$} \\ \text{\emph{Dedekind} domain}} \\
                           & \\
    { \textbf{characteristic 0}} & {\bf \textbf{characteristic} >0}\\
    \hline
  \end{array}
\]
\caption{A Number-Function fields analogy}
\label{table:analogy_nf_ff}
\end{table}

\subsection{Algebraic function fields.} Starting from a finite
field $\Fq$, a {\em function field} is a finite extension $K$ of
$\Fq(T)$ of the form
\[
  K = \Fq(T)[X]/(P(T,X)),
\]
where $P(T, X)\in\Fq(T)[X]$ is irreducible. The field
$K\cap \overline{\mathbb{F}}_q$ is referred to as {\em the field of constants}
of $K$. In general, this is a (finite) extension of $\Fq$, but when
$\Fq$ is the full field of constants of $K$, the extension $K/\Fq(T)$
is said to be {\em geometric}. This is equivalent for the modulus
$P(T,X)$ to be irreducible, even regarded as a polynomial in
$\overline{\Fq}(T)[X]$ (\cite[Cor, 3.6.8]{S09}). This will always be
assumed in our setting.

Similarly to the number field setting, the integral closure of
$\Fq[T]$ in $K$ is called the {\em ring of integers} of $K$, and
denoted by $\OO_{K}$. This is a {\em Dedeking domain}. In particular, for any ideal
$\mathfrak{P}$ of $\OO_{K}$, there exist unique prime ideals
$\mathfrak{P}_{i}$ and integers $e_{i}$ such that
$\mathfrak{P}=\mathfrak{P}_{1}^{e_{1}}\dots\mathfrak{P}_{r}^{e_{r}}$,
and the quotients $\OO_{K}/\gothP_{i}$ are finite extensions of $\Fq$.
When the ideal $\mathfrak{P}$ is of the form $P\OO_{K}$ where
$P(T)\in \Fq[T]$ is an irreducible polynomial, the primes
$\mathfrak{P}_{i}$ are said to be {\em lying above}
$P$\footnote{Rigourously, there exists another prime element in
  $\Fq(T)$, which is $1/T$. This element does not belong to $\Fq[T]$,
  and corresponds to the point at infinity on the projective line. To
  take into account this additional point, we could consider
  the ring $\Fq[1/T]$ (or its localization $(\Fq[1/T])_{1/T}$ to avoid
  redundancy), and its integral closure $\OO_{K, \infty}$ in $K$. It
  is also a Dedekind domain, and the primes of $\OO_{K,\infty}$ lying
  above $(1/T)$ are known as the {\em places at infinity}. The main
  difference with the number field setting being that this place at
  infinity plays a similar role as the other primes (which are called
  {\em finite places} in opposition), while in number fields the {\em
    places at infinity} are called {\em archimedean places} would
  correspond to the complex embeddings of $K$ in $\CC$.}. The extension
degrees
$f_{i} \eqdef [\OO_{K}/\gothP_{i} \colon \Fq[T]/(P(T))] = [\OO_{K}/\gothP_{i}\colon \FF_{q^{\deg P}}]$
are called the {\em inertia degrees} of $P$, and $e_{i}$ are known as
its {\em ramification} indexes. When the $e_{i}$'s are all equal to
$1$, the extension is said to be {\em unramified} at $P$. In that
case, the Chinese Remainder Theorem entails that $\OO_{K}/\gothP$ is
isomorphic to $\prod_{i=1}^{r}\OO_{K}/\gothP_{i}$ which is a product
of finite fields. All those quantities are related through the
well-known formula
\begin{equation}\label{eq:fondamental_relation}
  n \eqdef [K:\Fq(T)] = \sum_{i=1}^{r}e_{i}f_{i}.
\end{equation}
\subsection{Galois extensions.} Recall that the extension $K/\Fq(T)$
is said to be {\em Galois} when the automorphism group
\[
  \Aut(K/\Fq(T)) \eqdef \{\sigma\colon K \mapsto K \mid \sigma \text{
    is an isomorphism with } \sigma(a)=a \;\; \forall a\in \Fq(T)\}
\]
has cardinality $[K:\Fq(T)]$. In that case, this group is usually
denoted by $\Gal(K/\Fq(T))$ and known as the {\em Galois} group of
$K$. Galois extensions whose Galois group is abelian are called {\em
  abelian} extensions. This Galois group adds more symmetry to the
function field. More specifically, $G$ keeps $\OO_{K}$ globally
invariant and given an irreducible polynomial $Q(T)\in\Fq[T]$, it acts
transitively on the prime ideals lying above $Q$ (\ie it permutes the
factors). In particular, all the ramification indexes $e_{i}$ (resp.
the inertia degrees $f_{i}$) are equal, and denoted by $e$ (resp.
$f$):
\[
  Q\OO_{K} = (\gothP_{1}\dots\gothP_{r})^{e},
\]
and Equation \eqref{eq:fondamental_relation} simply becomes $n=efr$.
In this work, we sometimes need to work with different extensions. When
the context is not clear, we will put the irreducible polynomial in
index, and the considered function field in brackets: $e_{Q}(K)$ and
$f_{Q}(K)$. Another consequence is that the action of $G$ on $\OO_{K}$
is well defined on the quotient $\OO_{K}/Q\OO_{K}$ and simply permutes
the factors $\OO_{K}/\gothP_{i}^{e}$. The {\em decomposition group}
$D_{\gothP_{i}/Q}$ of $\gothP_{i}$ over $Q$ is the subgroup of Galois
automorphisms keeping $\gothP_{i}$ globally invariant
\[
  D_{\gothP_{i}/Q} \eqdef \{\sigma\in G \mid \sigma(\gothP_{i})=\gothP_{i}\}.
\]
It has cardinality $e\times f$. When $K$ is unramified at $Q$, the
ring $\OO_{K}/\gothP_{i}$ is the finite field $\F{q^{f\deg(Q)}}$ and
the action of $D_{\gothP_{i}/Q}$ is that of the Frobenius
endomorphism: the reduction modulo $\gothP_{i}$ yields an isomorphism
\[
  D_{\gothP_{i}/Q} \simeq \Gal(\F{q^{f\deg(Q)}}/\FF_{q^{\deg(Q)}}).
\]
The decomposition groups of all the primes above $Q$ are conjugate in
$\Gal(K/\Fq(T))$: For any $i\neq j$ there exists $\sigma\in G$ such
that $D_{\gothP_{i}/Q} = \sigma D_{\gothP_{j}/Q}\sigma^{-1}$. In
particular, when the extension is {\em abelian}, they are all equal
and referred to as the {\em decomposition group} of $Q$, and denoted by
$D_{Q}$. The subfield
\[
  L\eqdef K^{D_{Q}} = \{x\in K \mid \sigma(x)=x\quad\forall \sigma \in D_{Q}\}
\]
of all elements of $K$ fixed {\em pointwise} by $D_{Q}$ is called the
{\em decomposition field} of $Q$. It is an algebraic function field,
with ring of integers $\OO_{L}=\OO_{K}^{D_Q}$ consisting in all the
elements of $\OO_{K}$ pointwise fixed by $D_{Q}$. Moreover, it is a
Galois extension with Galois group $G/D_{Q}$. This is the largest
subextension of $K$ in which $Q$ {\em totally splits}\footnote{Hence
  the name decomposition field} (\ie
$f_{Q}(K^{D_{Q}})=e_{Q}(K^{D_{Q}})=1$ and
$r_{Q}(K^{D_{Q}}) = r_{Q}(K) = r$).

\begin{center}
  \begin{tikzpicture} \matrix (m) [matrix of math nodes,row
    sep=3em,column sep=4em,minimum width=2em] {
      \OO_K & K \\
      \OO_K^{D_Q} & K^{D_Q}\\
      \Fq[T] & \Fq(T) \\};

    \path[-] (m-1-1) edge (m-2-1) edge (m-1-2);
    \path[-] (m-2-1) edge (m-3-1) edge (m-2-2);
    \path[-] (m-2-2) edge (m-1-2) edge (m-3-2);
    \path[-] (m-3-1) edge (m-3-2);

    \draw (m-3-1) node[left,xshift=-0.5cm] {$(Q) \subset$};
    \draw (m-2-1) node[left,xshift=-0.5cm] {$(Q) = \mathfrak{p}_{1}\dots\mathfrak{p}_{r} \subset$};
    \draw (m-1-1) node[left,xshift=-0.5cm] {$(Q) = \mathfrak{P}_{1}\dots\mathfrak{P}_{r} \subset$};

    \draw (m-1-2) node[right,xshift=1cm] {$\OO_{K}/\gothP_{i} = \F{q^{f\deg(Q)}}$};
    \draw (m-2-2) node[right,xshift=1cm] {$\OO_{K}/\mathfrak{p}_{i} = \F{q^{\deg(Q)}}$};
    \draw (m-3-2) node[right,xshift=1cm] {$\Fq[T]/Q = \F{q^{\deg(Q)}}$};
\end{tikzpicture}
\end{center}

\subsection{The Carlitz module}
\label{sec:carlitz}
In classical algebraic number theory, the cyclotomic number fields
play a major role. For instance, all abelian extensions of $\QQ$ can
be realized as subfields of some cyclotomic number fields. This is
known as the Kronecker-Webber Theorem, and is the cornerstone of the
very important class field theory.

In the theory of algebraic function fields, the analogues of the
cyclotomic extensions of $\QQ$ are known as the {\em Carlitz
  extensions}. They were discovered by Carlitz in the late 1930's and
the analogy with the cyclotomic number fields was made explicit by his
student Hayes about $40$ years later in \cite{H74} to give an analogue
of the Kronecker-Webber Theorem for the rational function field
$\Fq(T)$. This result was later generalized by Drinfeld and Goss to
yield a complete solution to Kronecker's Jugendtraum\footnote{``childhood dream'' in German.} for function fields, \ie an explicit class
field theory. In the number field setting, such an explicit
construction is only known for $\QQ$, via the cyclotomic number
fields, and for imaginary quadratic number fields, via the theory of
elliptic curves with complex multiplication.

In this section, we just give a quick presentation of the Carlitz
modules, keeping the same notations as \cite{C:BomCouDeb22}. We refer
to \cite[\S~V]{C:BomCouDeb22} paper for a self-contained presentation
(without proofs). For an in-depth exposition, the interested reader
can refer to \cite[Chapter 12]{R02}, \cite[Chapter 12]{VS06}, or the
survey \cite{Conrad-Carlitz}.

A dictionnary summarizing the analogies between cyclotomic number
fields and Carlitz extensions is given in Table~\ref{table:carlitz}.

\begin{table}[!ht]
  \centering
      \[
      \begin{array}{|c|c|}
        \hline
        \QQ & \Fq(T) \\
        \ZZ & \Fq[T] \\
        \text{Prime numbers } q \in \ZZ & \text{Irreducible polynomials } Q\in\Fq[T]\\
            & \\
        \mu_{m} = \ideal{\zeta} \simeq \ZZ/m\ZZ \text{ (groups) }& \Lambda_{M} =\ideal{\lambda}\simeq \Fq[T]/(M) \text{ (modules) }\\
            & \\
        d \mid m \Leftrightarrow \mu_{d} \subset \mu_{m} \text{ (subgroups) } & D\mid M \Leftrightarrow \Lambda_{D} \subset \Lambda_{M} \text{ (submodules) }\\
            & \\
        a \equiv b \mod m \Rightarrow \zeta^{a} = \zeta^{b} & A \equiv B \mod M \Rightarrow [A](\lambda) = [B](\lambda)\\
& \\
        K = \QQ[\zeta] & K = \Fq(T)[\lambda] \\
        \OO_{K} = \ZZ[\zeta] & \OO_{K} = \Fq[T][\lambda] \\
            & \\
        \Gal(K/\QQ) \simeq (\ZZ/m\ZZ)^{\times} & \Gal(K/\Fq(T))\simeq (\Fq[T]/(M))^{\times}\\
            & \\
        \textbf{Cyclotomic} & \textbf{Carlitz}\\
        \hline
      \end{array}
    \]
    \caption{\label{table:carlitz} Analogies between cyclotomic and
      Carlitz}
\end{table}

If one wants to build cyclotomic extensions of $\Fq(T)$, the most
natural idea is to mimic the construction of cyclotomic number fields
and to add roots of unity to $\Fq(T)$. However, the crucial difference
with $\QQ$ is that roots of unity are already {\em algebraic} over
$\Fq$, and adjoining them to $\Fq(T)$ only yields a function field of
the form $\FF_{q^{m}}(T)$, {\ie an extension of the constants}.

Instead, one needs to look deeper into the algebraic structure
adjoined to $\QQ$. Notice that roots of unity form an abelian group,
that is to say a $\ZZ$-module. More precisely, consider the action of
$\ZZ$ on $\overline{\QQ}^{\times}$ by exponentiation:
$m\cdot z \eqdef z^{m}$. Then, the $m$-th roots of unity are nothing
else than the {\em torsion elements} of the action of $m\in\ZZ$:
\[
  \mu_{m}=\{z\in\overline{\QQ^{\times}}\mid m\cdot z = 1\}.
\]
At a high level, the philosophy behind the construction of Carlitz
extension is to replace $\ZZ$ by $\Fq[T]$ when {\em that makes sense},
and therefore {\em abelian groups} by {\em $\Fq[T]$-modules}. In
particular, the analogue of the exponentiation will be a new action of
$\Fq[T]$ on $\overline{\Fq(T)}$, called the {\em Carlitz action}. This
yields another structure of $\Fq[T]$-module on $\overline{\Fq(T)}$,
which is called the {\em Carlitz module}. If $M\in\Fq[T]$, the elements of
$M$--torsion are denoted
$\Lambda_{M}\eqdef \{ \lambda \in \overline{\Fq(T)}\mid M\cdot \lambda = 0\}$,
and form a {\em cyclic} $\Fq[T]$-module, generated by some element
denoted $\lambda_{0}$, which is an analogue of a {\em primitive} root
of unity. The Carlitz extension by $M$ will then be
$\Fq(T)[\Lambda_{M}]=\Fq(T)[\lambda_{0}]$. It is a {\em Galois}
extension of $\Fq(T)$ of Galois group isomorphic to
$(\Fq[T]/(M))^{\times}$.

One key fact about Carlitz extensions is that their ring of integers is
simply $\Fq[T][\lambda_{0}]$ and the decomposition of primes is well
understood:

\begin{theorem}[{\cite[Th.~12.10]{R02}}]\label{thm:carlitz_splitting}
  Let $M\in\Fq[T]$, $M\neq 0$, and let $Q\in\Fq[T]$ be a monic,
  irreducible polynomial. Consider the Carlitz extension $K_{M}$ and
  let $\OO_{M}$ denote its ring of integers. Then,
  \begin{itemize}[label=\textbullet]
    \item If $Q$ divides $M$, then $Q\OO_{M}$ is totally ramified.
    \item Otherwise, let $f$ be the smallest integer $f$ such that
          $Q^{f} \equiv 1 \mod M$. Then $Q\OO_{M}$ is unramified and
          has inertia degree $f$. In particular, $Q$ splits completely
          if and only if $Q\equiv 1 \mod M$.
  \end{itemize}
\end{theorem}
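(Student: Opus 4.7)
The plan is to mirror, step by step, the classical proof of the corresponding theorem for cyclotomic extensions $\QQ(\zeta_m)/\QQ$, using the three structural facts about Carlitz extensions already recalled: the Galois group identification $\Gal(K_M/\Fq(T)) \simeq (\Fq[T]/(M))^\times$, the description $\OO_M = \Fq[T][\Lambda_M]$ of the ring of integers, and the tower structure $K_D \subseteq K_M$ whenever $D \mid M$. The first step will be a tower reduction: writing $M = Q^n M'$ with $\gcd(Q, M') = 1$, the Carlitz module decomposes as $\Lambda_M \simeq \Lambda_{Q^n} \oplus \Lambda_{M'}$, so that $K_M = K_{Q^n} \cdot K_{M'}$ is a linearly disjoint compositum. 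Since ramification indices and inertia degrees multiply in towers, this will reduce the theorem to two extreme cases handled separately: $Q$ is totally ramified in $K_{Q^n}$, and $Q$ is unramified in $K_{M'}$ with inertia degree equal to the order of $Q$ in $(\Fq[T]/(M'))^\times$.

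For the totally ramified case I would assume $M = Q^n$ and pick a generator $\lambda$ of the cyclic $\Fq[T]/(Q^n)$-module $\Lambda_{Q^n}$. The proposed strategy is to show that the minimal polynomial of $\lambda$, namely the Carlitz analogue of the cyclotomic polynomial $\Phi_{Q^n}(X) = [Q^n](X)/[Q^{n-1}](X)$ (of degree $\Phi(Q^n) := \#(\Fq[T]/(Q^n))^\times = [K_{Q^n}:\Fq(T)]$), is Eisenstein at $Q$. Two computational inputs will be needed. First, a congruence of the form $[a](X) \equiv X^{q^{\deg a}} \pmod{Q}$ whenever $Q \mid a$, established by induction on the $Q$-adic valuation of $a$ using $[T](X) = TX + X^q$ and the multiplicativity $[ab] = [a] \circ [b]$; this forces every coefficient of $\Phi_{Q^n}(X)$ other than the leading one to lie in $(Q)$. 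Second, an explicit norm-style computation showing that $\Phi_{Q^n}(0)$ equals $Q$ up to a unit, obtained by expanding $\Phi_{Q^n}(X) = \prod_{\zeta}(X - [\zeta](\lambda))$ with $\zeta$ ranging over primitive generators. Eisenstein at $Q$ will then force $Q \OO_{Q^n}$ to be a single prime raised to the power $\Phi(Q^n)$, giving total ramification.

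For the unramified case I would assume $\gcd(Q, M) = 1$. Unramifiedness will follow by showing that the different of $\OO_M/\Fq[T]$ is coprime to $Q$, equivalently that the formal derivative $[M]'(\lambda)$ is a unit modulo any prime $\gothP \mid Q$; the recursion for $[M]$ gives an explicit formula for $[M]'$ that is visibly a unit mod $Q$ precisely when $\gcd(Q, M) = 1$. To identify the Frobenius, I would fix $\gothP \mid Q$ and let $\sigma_\gothP \in \Gal(K_M/\Fq(T))$ be the associated Frobenius, characterised by $\sigma_\gothP(x) \equiv x^{q^{\deg Q}} \pmod{\gothP}$ for all $x \in \OO_M$. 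Applied to a generator $\lambda$ of $\Lambda_M$, and using the key identity $[Q](X) \equiv X^{q^{\deg Q}} \pmod{Q}$ (which holds for any irreducible $Q$, reflecting the fact that the Carlitz module has good reduction at $Q$ with $Q$-multiplication equal to the $q^{\deg Q}$-power Frobenius on the reduction), one obtains $\sigma_\gothP(\lambda) \equiv [Q](\lambda) \pmod{\gothP}$. Both sides lie in $\Lambda_M$, and the unramifiedness just established implies that the reduction map $\Lambda_M \to \OO_M/\gothP$ is injective, so $\sigma_\gothP(\lambda) = [Q](\lambda)$. Under the Galois group identification, this means $\sigma_\gothP$ corresponds to $Q \bmod M$; its order $f$ is then the smallest integer with $Q^f \equiv 1 \pmod{M}$, which is exactly the inertia degree. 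Complete splitting is the case $f = 1$, equivalently $Q \equiv 1 \pmod{M}$.

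The main obstacle, both conceptually and technically, will be the ramified case: proving that $\Phi_{Q^n}$ is Eisenstein needs both the coefficient congruence (relatively formal, by induction on the $Q$-adic valuation) and the constant-term evaluation $\Phi_{Q^n}(0) = Q$ up to a unit, which has no purely formal shortcut and is the most delicate step of the proof. In the unramified case, by contrast, once the good-reduction property and the injectivity of the reduction map on $\Lambda_M$ are in hand, the Frobenius identification follows directly from the defining congruence of the Frobenius combined with $[Q](X) \equiv X^{q^{\deg Q}} \pmod{Q}$.
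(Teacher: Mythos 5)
The paper offers no proof of this statement at all: it is imported verbatim from the cited reference (Rosen's book, Th.~12.10) as background for the Carlitz-module discussion, so there is no in-paper argument to compare against. Judged on its own, your sketch reconstructs exactly the standard proof found in that source, i.e.\ the Carlitz analogue of the classical cyclotomic argument: Eisenstein-at-$Q$ for $\Phi_{Q^{n}}(X)=[Q^{n}](X)/[Q^{n-1}](X)$ in the ramified case, and good reduction together with the congruence $[Q](X)\equiv X^{q^{\deg Q}} \bmod Q$ to identify the Frobenius at $\gothP\mid Q$ with the class of $Q$ in $(\Fq[T]/(M))^{\times}$ in the unramified case; this outline is correct. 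Two of your steps are easier than you anticipate: since $[a](X)$ is $\Fq$-linear with coefficient $a$ on the degree-one term, the constant term of the quotient is simply the ratio of these coefficients, so $\Phi_{Q^{n}}(0)=Q^{n}/Q^{n-1}=Q$ exactly and no norm-style expansion is needed; likewise $[M]'(X)=M$ identically, which gives at once that the different is prime to any $Q\nmid M$ and that $[M]$ is separable modulo $Q$ --- the latter (not unramifiedness per se) is what yields the injectivity of the reduction $\Lambda_{M}\to\OO_{M}/\gothP$ that your Frobenius identification needs, though your route is easily repaired along these lines. One caveat that your own compositum reduction exposes: for $M=Q^{n}M'$ with $M'$ nonconstant and prime to $Q$, the decomposition $K_{M}=K_{Q^{n}}\cdot K_{M'}$ gives ramification index $\left|(\Fq[T]/(Q^{n}))^{\times}\right|$ and inertia degree equal to the order of $Q$ modulo $M'$, so $Q$ is ramified but \emph{not} totally ramified; the first bullet as transcribed in the paper is literally correct only when $M$ is a power of $Q$ (up to a constant), which is the case you actually prove and the only case the paper uses ($M=T^{\ell+1}$).
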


Those results are completely analogue to the cyclotomic case.

\section{The Curious Case of \texorpdfstring{$\F{2}$}{F2}}
\label{sec:f2}
In Section \ref{sec:pcg}, we showed how to produce batch \OLEs over all
finite fields $\Fq$ for $q\ge 3$. However, this approach cannot be
applied as is to build \OLEs over $\F2$. The most natural approach to
mimic previous construction is to consider the ring
$\mathbb{B}_{n}\eqdef \F2[X_{1},\dots, X_{n}]/(X_{i}^{2}-X_{i})$ of
Boolean functions, for which efficient algorithmics exist and which is
isomorphic to a direct product of $n$ copies of $\F2$. However, there
is a strong bias which is very similar to the one mentioned
in Example~\ref{example:linear_attack_group_codes}. Indeed suppose we are given
a pair $(a, as + e)$ where $a \getsr \mathbb{B}_n$
and $s,e$ are sparse with respect to the basis of monomials.
Then, the constant term of both $s, e$ is very likely to be zero and the constant
term is nothing but their evaluation at $0$. Moreover, the evaluation at $0$
map commutes with the reduction modulo $((X_i^2-X_i))_i$.
Consequently one can evaluate our sample at $0$ and the result is highly
biased since $(as+e)(0) = a(0)s(0) + e(0)$ and hence is $0$ whenever $s,e$
both vanish at $0$ which is highly probable.
Here again, we have a distinguisher on codes from a multivariate ring
which is {\em not} a group algebra.

More generally, we
have the following simple, but powerful, impossibility result.

\begin{theorem}[Impossibility result]\label{thm:impossibility}
  Let $G$ be a finite group and let $\mathcal{R} = \FF[G]$ be its group
  algebra with coefficients in a finite field $\FF$. Assume that
  $\mathcal{R}$ is isomorphic, as algebra, to $\F2^{N}$ for $N\ge 1$.
  Then, $N=1$ and $G = \{1\}.$
\end{theorem}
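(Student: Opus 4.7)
\medskip

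\noindent\textbf{Proof proposal.} The plan is a short counting argument based on units. First, observe that an algebra isomorphism $\FF[G]\simeq \FF_2^N$ forces $\FF=\FF_2$: indeed, the base field sits inside the center of each algebra as the prime subfield, so the characteristics must coincide, and since $\FF_2^N$ is generated over $\FF_2$ by its idempotents, the ground field on the left can only be $\FF_2$.

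Next, I would count invertible elements on both sides. On the right, an element $(a_1,\dots,a_N)\in\FF_2^N$ is a unit if and only if every coordinate $a_i$ is nonzero, i.e., $a_i=1$; hence $(\FF_2^N)^\times=\{(1,\dots,1)\}$ has exactly one element. On the left, every group element $g\in G$ is a unit of $\FF_2[G]$ (its inverse is $g^{-1}$), and these units are pairwise distinct as elements of the group algebra because $\{g\mid g\in G\}$ is an $\FF_2$-basis of $\FF_2[G]$. Therefore $|\FF_2[G]^\times|\geq |G|$.

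Combining the two counts, an algebra isomorphism yields $|G|\leq |(\FF_2[G])^\times|=|(\FF_2^N)^\times|=1$, so $G=\{1\}$. Then $\FF_2[G]=\FF_2$, and comparing $\FF_2$-dimensions gives $N=1$. \qed

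\medskip

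\noindent The argument is essentially complete once the reduction to $\FF=\FF_2$ is carried out; no step looks delicate. The only potential pitfall is being careful about what ``isomorphic as algebra'' entails (in particular, preservation of the ground field), which is why I handle that point first before doing the unit count.
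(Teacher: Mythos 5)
Your proof is correct and follows essentially the same route as the paper's: embed $G$ into the unit group of $\mathcal{R}$, note that $(\FF_2^N)^\times=\{(1,\dots,1)\}$ is trivial, and conclude $|G|=1$. Your preliminary reduction to $\FF=\FF_2$ and the explicit dimension count giving $N=1$ are details the paper leaves implicit, but they do not change the argument.
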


\begin{proof}
  Note that $G$ can be embedded in the invertible elements
  $\mathcal{R}^{\times}$ of $\mathcal{R}$. Indeed, any $g\in G$, when
  regarded as an element of $\Ring$ is invertible, with inverse
  $g^{-1}$. In particular, $|G| \le |\mathcal{R}^{\times}|$. But the
  algebra isomorphism $\mathcal{R} \simeq \F2^{N}$ induces a group
  isomorphism
  $\mathcal{R}^{\times} \simeq \F2^{\times}\times \cdots \times \F2^{\times} = \{(1,\dots, 1)\}$.
  In particular, $|\mathcal{R}^{\times}| = 1$, and $|G| = 1$, \ie
  $G = \{1\}$, which concludes the proof.
\end{proof}
Theorem \ref{thm:impossibility} shows that we cannot adapt directly
our approach based on \QASD to efficiently build \OLEs over $\F2$. In
this Section though, we propose a way to overcome this limitation. In
a nutshell, our approach is to consider the group algebra $\F2[G]$ of
some well-chosen finite abelian group $G$, as was done previously,
such that there is an isomorphism of {\em modules} between $\F2[G]$
and $\F2^{N}$, but not of {\em algebras}. It turns out that this
approach is not so different from the proposal of \cite{C:BCGIKS20}
which uses the ring $\Fp[X]/(P(X))$ where $P(X) = X^{2^{\ell}}+1$ is a
{\em cyclotomic polynomial} and $p$ is a prime such that
$p\equiv 1\mod 2^{\ell+1}.$ Indeed, our proposal uses the theory of
 {\em Carlitz extensions} (see Section \ref{sec:carlitz})
 which are function fields analogues of cyclotomic number fields.

 \subsection{An attempt based on the Carlitz module.}

 In \cite{C:BCGIKS20}, the authors propose to use a cyclotomic ring
 modulo some prime $p$. The natural idea to mimic their construction
 would be to make use of Carlitz extensions.

Consider the rational function field $\F2(T)$, endowed with the
 Carlitz action, and let
 \[
   K_{\ell}\eqdef \F2(T)[\Lambda_{T^{\ell+1}}]
 \]
 for some positive integer $\ell$ to be detailed later. The theory of
 Carlitz modules asserts that $K_{\ell}$ is a Galois extension of
 $\FF_{2}(T)$ of degree $2^{\ell}$, and of Galois group
 \[
   G\eqdef \left(\F2[T]/(T^{\ell+1})\right)^{\times}.
 \]
 The first idea that comes to mind is to find an irreducible modulus
 $Q(T)\in\F2[T]$ that splits completely in $\OO_{K}$ so that
 \[
   \OO_{K}/Q\OO_{K} \simeq \FF_{2^{\deg(Q)}}\times \dots \times \FF_{2^{\deg(Q)}} \simeq \FF_{2} \times \dots \times \FF_{2}.
 \]
 On the one hand, this shows that a necessary condition is
 $\deg(Q)=1$. On the other hand, by
 Theorem~\ref{thm:carlitz_splitting}, the ideal $Q\OO_{K_{\ell}}$
 splits completely if and only if $Q\equiv 1 \mod T^{\ell+1}$. In
 particular, $\deg(Q)$ needs to be large enough, and both conditions
 are incompatible.

 Therefore, one needs to relax some of the hypotheses above in order
 to make this idea somehow work. Clearly, the first condition
 ($\deg(Q)=1$) cannot be released, because all factors of
 $\OO_{K}/(Q\OO_{K})$ are extension fields of $\FF_{2^{\deg(Q)}}$, of
 dimension the inertia degree of the ideal generated by $Q$.
 Therefore, the only condition that can be relaxed is the second one.

Let $Q\in\F2[T]$ be an irreducible polynomial of degree $1$. There are
only two possibilities, namely $Q=T$ or $Q=T+1$. However, by
Theorem~\ref{thm:carlitz_splitting}, $T$ ramifies in $\OO_{K}$, and
therefore the only possible choice for $Q$ is $T+1$. Now, we need to
compute the inertia degree. By the aformentionned theorem, it is
characterized by the multiplicative order of $T+1$ modulo
$T^{\ell+1}$. It is not hard to see that it is the least power of $2$
greater (or equal) than $\ell+1$.

In the sequel, we make a concrete choice for the parameter $\ell$.
Assume that we want to produce $2^{20}$ \OLEs correlations. This is an
estimation of the order of magnitude of the number of multiplicative
gates in a concrete arithmetic circuit.
The number of \OLEs produced being the number of factors, we need to
set $\ell>20$. As we will see, setting $\ell=25$ is
enough.

Indeed, the least power of $2$ greater than $26$ is $32=2^{5}$.
Therefore, $(T+1)$ has inertia degree $32$ in $\OO_{K}$.
Theorem~\ref{thm:carlitz_splitting} and
Equation~\ref{eq:fondamental_relation} entail that
  \[
    \OO_{K}/(T+1)\OO_{K}\equiv \underbrace{\FF_{2^{32}}\times\dots\times\FF_{2^{32}}}_{2^{20}\text{
        times }}.
  \]
  With only Carlitz extensions, this is the best that we can
  produce\footnote{With finite places}. However, we are not required
  to use the {\em full} Carlitz extension: We could consider an
  intermediate one, that would cancel the inertia. This is precisely
  the {\em decomposition field}:

  \begin{proposition}
    Let $\ell=25$, and let $K$ be the Carlitz extension by
    $T^{\ell+1}\eqdef T^{26}$. Let $D_{T+1}$ be the decomposition group of
    $T+1$, and let $L\eqdef K^{D_{T+1}}$ denote the fixed field by
    $D_{T+1}$. Then we have:
    \[
      \OO_{L}/(T+1)\OO_{L}=\underbrace{\F2\times \dots \F2}_{2^{20}\text{
          times }}.
    \]
  \end{proposition}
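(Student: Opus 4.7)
The plan is to combine the structure theorem for Carlitz extensions (Theorem~\ref{thm:carlitz_splitting}) with the standard behaviour of the decomposition field in an abelian extension.

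\medskip

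\noindent\textbf{Step 1: Splitting behaviour of $(T+1)$ in $\OO_K$.} I would first determine the decomposition of $(T+1)\OO_K$. Since $(T+1)$ does not divide $T^{26}$, Theorem~\ref{thm:carlitz_splitting} guarantees that $(T+1)\OO_K$ is unramified, and its inertia degree $f$ equals the multiplicative order of $T+1$ in $(\FF_2[T]/(T^{26}))^\times$. Working in characteristic $2$, one has $(T+1)^{2^k} = T^{2^k}+1$, so $(T+1)^{2^k} \equiv 1 \bmod T^{26}$ precisely when $2^k \geq 26$. Moreover, the group $(\FF_2[T]/(T^{26}))^\times$ has order $2^{25}$, hence it is a $2$-group, so the order of $T+1$ is a power of $2$. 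Combining, the smallest admissible exponent is $2^5 = 32$, so $f = 32$.

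\medskip

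\noindent\textbf{Step 2: Counting primes above $(T+1)$.} Since $K/\FF_2(T)$ is Galois of degree $n = |(\FF_2[T]/(T^{26}))^\times| = 2^{25}$, the fundamental relation $efr = n$ of \eqref{eq:fondamental_relation} together with $e = 1$ and $f = 32 = 2^5$ yields $r = 2^{20}$ primes $\gothP_1, \dots, \gothP_{2^{20}}$ of $\OO_K$ lying above $(T+1)$, each with residue field isomorphic to $\FF_{2^{32}}$.

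\medskip

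\noindent\textbf{Step 3: Descent to the decomposition field.} Since $G = \Gal(K/\FF_2(T))$ is abelian, the decomposition group $D_{T+1}$ is well-defined (independent of the choice of prime above $T+1$) of cardinality $ef = 32$, so $[L:\FF_2(T)] = |G|/|D_{T+1}| = 2^{25}/2^5 = 2^{20}$. By the general theory of decomposition fields, $L = K^{D_{T+1}}$ is the largest subextension in which $(T+1)$ splits completely; concretely, $(T+1)\OO_L = \mathfrak{p}_1 \cdots \mathfrak{p}_{2^{20}}$ with $e_{T+1}(L) = f_{T+1}(L) = 1$ and $r_{T+1}(L) = 2^{20}$, matching the number of primes in $\OO_K$ above $(T+1)$.

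\medskip

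\noindent\textbf{Step 4: Applying the Chinese Remainder Theorem.} Since $\OO_L$ is a Dedekind domain and the primes $\mathfrak{p}_i$ are pairwise coprime, CRT gives
\[
  \OO_L/(T+1)\OO_L \;\simeq\; \prod_{i=1}^{2^{20}} \OO_L/\mathfrak{p}_i.
\]
Each residue field $\OO_L/\mathfrak{p}_i$ is an extension of $\FF_2[T]/(T+1) \simeq \FF_2$ of degree $f_{T+1}(L) = 1$, hence is isomorphic to $\FF_2$ itself. This yields the claimed isomorphism $\OO_L/(T+1)\OO_L \simeq \FF_2^{2^{20}}$.

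\medskip

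\noindent There is no serious obstacle: the argument is essentially bookkeeping with the fundamental relation $efr = n$ and Theorem~\ref{thm:carlitz_splitting}. The only slightly non-trivial input is the computation of the multiplicative order of $T+1$ modulo $T^{26}$, which is short once one exploits the Frobenius identity $(T+1)^{2^k} = T^{2^k}+1$ in characteristic two and the fact that $(\FF_2[T]/(T^{26}))^\times$ is a $2$-group.
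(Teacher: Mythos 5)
Your proof is correct and follows essentially the same route as the paper, which deduces the proposition from Theorem~\ref{thm:carlitz_splitting}, the relation $efr=n$ (giving $f=32$, $r=2^{20}$), the total-splitting property of the decomposition field $L=K^{D_{T+1}}$, and the Chinese Remainder Theorem. Your Step~1 merely spells out the order computation (via $(T+1)^{2^k}=T^{2^k}+1$ and the fact that $(\FF_2[T]/(T^{26}))^\times$ is a $2$-group) that the paper dismisses as ``not hard to see,'' which is a welcome addition but not a different approach.
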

  Although we will not provide it here because it would only obfuscate
  the speech, $\OO_{K}$ has an explicit description of the form
  \[
    \OO_{K} = \F2[X]/(1+P(X))
  \]
  where $P$ is a linearized polynomial of degree $2^{25}$, \ie the
  only monomials that appear in $P$ are powers of $2$. In particular,
  $P$ has a very sparse description. On the other hand, the ring
  $\OO_{L}$ does not seem to inherit this property. It is only defined
  as the subring of $\OO_{K}$ fixed by $D_{T+1}$, and we need to
  understand how $D_{T+1}$ acts on $\OO_{K}$. Recall that by
  definition, $D_{T+1}$ acts as the Frobenius of each of the factors
  of $\OO_{K}/(T+1)\OO_{K}$. In other words, $\OO_{L}/(T+1)\OO_{L}$ is
  the subring of $\OO_{K}/(T+1)\OO_{K}$ fixed by the Frobenius on each
  factor (after applying the Chinese Remainder Theorem). Note that the
  action of $D_{T+1}$ can be directly understood on $\OO_{K}$ (before
  CRT): Indeed, it is isomorphic to the cyclic group (of order $32$)
  generated by $(T+1)\in\left(\F2[T]/T^{26}\right)^{\times}$, where
  $T+1$ acts on $F(T, X)\in\OO_{K}$ by $F(T, (T+1)\cdot X)$ via the
  Carlitz action on the second variable. In other words, $\OO_{L}$ is
  the subring of $\OO_{K}$ fixed by this Carlitz action.

  \subsection{Building \OLEs.}

  It suffices to build {\em one} \OLE over $\OO_{L}/(T+1)\OO_{L}$ to
  generate $2^{20}$ \OLEs over $\F2$. Let
  \[
    \mathcal{R}\eqdef \OO_{L}/(T+1)\OO_{L}.
  \]
  Following \cite{C:BCGIKS20}, in order to build an \OLE over
  $\mathcal{R}$, we could generate $U, V\in \mathcal{R}$ pseudorandom
  such that they admit a sparse description of the form
  $U=a\cdot e_{1}+f_{1}$ and $V=a\cdot e_{2}+f_{2}$ with $e_{i},f_{i}$
  somehow {\em sparse}. However, there are two issues here:

    \begin{itemize}
      \item How to assert pseudorandomness here ?
      \item What does it mean to have a sparse description in
            $\mathcal{R}$ ?
    \end{itemize}

    If in \cite{C:BCGIKS20} the sparsity is well defined in the {\em
      canonical} basis, it is not clear what basis to choose in
    $\mathcal{R}$. Note that $\OO_{K}/(T+1)\OO_{K}$ admits a monomial
    basis (this is a consequence of the fact that $K$ being a Carlitz
    extension, $\OO_{K}$ is generated over $\Fq[T]$ by a unique
    element $\lambda_{0}$), but it is no longer true for $\mathcal{R}$.

    However, as it was recalled in \cite{C:BomCouDeb22}, since $T+1$
    is not ramified in $\OO_{L}$, a result of Noether (see ~\cite{Noether32}
    for the original paper (in German)) entails that $\OO_{L}$ admits
    a {local normal basis}, \ie there exists $a\in\mathcal{R}$ such
    that $(\sigma(a))_{\sigma\in\Gamma}$ forms an $\F2$-basis of
    $\mathcal{R}$, where
    \[
      \Gamma=\Gal(L/\F2(T))=\fract{\left(\F2[T]/(T^{26})\right)^{\times}}{(T+1)}
    \]
    is the Galois group of $L$. In fact, here, the normal basis is
    easy to find: Indeed, $\Gamma$ acts transitively on the factors of
    $\mathcal{R}$. This means that starting from
    $e_{1}=(1, 0,\dots, 0)$, $\sigma(e_{1})$ is another element of the
    canonical basis of $\F2^{2^{20}}$ for all $\sigma\in\Gamma$. In
    particular, $e_{1}$ generates a normal basis of $\F2^{2^{20}}$,
    and therefore its inverse through the CRT generates a normal basis
    of $\mathcal{R}$. Let us call this polynomial $\eps(X)$.

    It is tantalizing to define the sparsity with respect to this
    basis. Moreover, the existence of this normal basis has a
    powerful consequence. Indeed, let $a\in \mathcal{R}$. Written in
    the normal basis, we have that
    \[
      a = \sum_{\sigma\in\Gamma}a_{\sigma} \sigma(\eps) = \underbrace{\left(\sum_{\sigma\in\Gamma}a_{\sigma}\sigma\right)}_{\eqdef A}(\eps(X)).
    \]
    In other words, we can write $a\in \mathcal{R}$ as $A(\eps(X))$
    where $A$ now belongs to the group algebra $\F2[\Gamma]$. This
    exactly means that $\mathcal{R} = \F2[\Gamma]\cdot \eps$, \ie that
    $\mathcal{R}$ is a free module of rank one over $\F2[\Gamma]$, \ie that
    $\mathcal{R}$ is {\em isomorphic} to the group algebra $\F2[\Gamma]$ as
    {\em modules}.

    \subsection{\QASD to the rescue.} With this result in hand, it is
    very appealing to define our \OLE over $\F2[\Gamma]$, and {\em then
      only} map it to $\mathcal{R}$, since hardness of \QASD would
    provide security.

    \begin{proposition}\label{prop:pseudorandom_in_R}
      Let $a$ be uniformly distributed over $\F2[\Gamma]$ and
      $e, f\in\F2[G]$ of Hamming weight $t$. Then, $a\cdot e + f$ is
      pseudorandom assuming the hardness of \QASD over $\F2[\Gamma]$.
    \end{proposition}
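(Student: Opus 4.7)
The plan is to recognize Proposition~\ref{prop:pseudorandom_in_R} as essentially a direct restatement of the decisional $\QASD$ assumption from Definition~\ref{def:QASD assumption}, instantiated with the ring $\Ring = \FF_2[\Gamma]$ and compression factor $c = 2$. Concretely, the pair $(a, a \cdot e + f)$ with $a \getsr \FF_2[\Gamma]$ and $e, f$ sampled from $\Dist_t(\FF_2[\Gamma])$ is exactly the output of the decisional $\QASD$ distribution with parity-check matrix $\bfH = (\bf1 \mid a)$ and noise $(f, e)$. Once this bookkeeping is made explicit, pseudorandomness follows immediately from the hardness assumption invoked in the statement; no further reduction or hybrid argument is required.

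The subtlety worth flagging, rather than an obstacle to this proposition per se, is that $|\Gamma| = 2^{20}$ is a power of the characteristic, which places us squarely in the \emph{modular} case of group algebras. As a consequence, Maschke's theorem fails, the generalised FFT of Proposition~\ref{prop:FFT} does not apply, and the Galois-theoretic search-to-decision reduction of Appendix~\ref{app:std} does not transfer either. However, Theorem~\ref{thm:GV_QA_code} of Fan and Lin was proved for arbitrary abelian group algebras, including the modular case, so a random quasi-$\Gamma$ code still meets the Gilbert--Varshamov bound; combined with Lemma~\ref{lem:security_linear_test}, this secures $\QASD(2, \FF_2[\Gamma])$ against the linear test framework and constitutes the main evidence supporting the assumption we invoke. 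So beyond invoking the assumption, the only substantive check is that the sparse noise distribution used here (Hamming weight $t$ in the canonical basis indexed by $\Gamma$) coincides with $\Dist_t(\FF_2[\Gamma])$, which it does by construction of the normal basis.

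The real obstacle does not lie in this proposition but downstream of it, and is already highlighted in the technical overview: even though $a \cdot e + f$ is pseudorandom in $\mathcal{R}$ via the module isomorphism $A \mapsto A(\eps)$, the ring multiplication in $\mathcal{R}$ is \emph{not} the $\FF_2[\Gamma]$-module action, so the natural product $(a \cdot e_1 + f_1) \cdot (a \cdot e_2 + f_2)$ computed inside $\mathcal{R}$ does not rewrite as a public coefficient times a sparse term. Hence Proposition~\ref{prop:pseudorandom_in_R} supplies the one-sided pseudorandomness needed, but the missing ingredient -- succinctly sharing products of such pseudorandom elements under the ring multiplication of $\mathcal{R}$ -- is what keeps the $\FF_2$ construction from going through end to end.
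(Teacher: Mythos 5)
Your proposal is correct and matches the paper's treatment: the paper gives no separate proof because, exactly as you observe, the statement is just the decisional $\QASD$ assumption over $\Ring = \F2[\Gamma]$ with parity-check $(\bf1 \mid a)$ and sparse noise $(f,e)$, and the paper's accompanying remark makes the same supporting point you do, namely that Theorem~\ref{thm:GV_QA_code} of Fan and Lin covers the modular case, so the instantiation resists linear tests. Your additional observations (failure of Maschke/FFT/search-to-decision in the modular setting, and the downstream obstruction that the ring product in $\mathcal{R}$ is not the $\F2[\Gamma]$-module action) are accurate and consistent with the paper's own discussion.
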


    \begin{remark}
    Note that Theorem~\ref{thm:GV_QA_code} also holds in the modular
    setting, and therefore it holds for $\F2[\Gamma]$. In particular,
    according to our analysis, \QASD with this instantiation is secure
    against linear tests.
  \end{remark}
  Now, since $\mathcal{R}$ is isomorphic (as a module) to $\F2[G]$, if
  $U\in\F2[\Gamma]$ is pseudorandom, then $U(\eps)\in\mathcal{R}$ is
  pseudorandom. Everything seems to be there for building an \OLE over
  $\mathcal{R}$: Let $U, V\in\mathcal{R}$ be such that
  $U = (a\cdot e_{1}+f_{1})(\eps(X))$ and
  $V = (a\cdot e_{2}+f_{2})(\eps(X))$ with $a$ uniformly distributed
  over $\F2[G]$, and $e_{i}, f_{i}$ sparse (as elements of $\F2[G]$).
  Proposition~\ref{prop:pseudorandom_in_R} entails that $U$ and $V$
  are pseudorandom in $\mathcal{R}$. Following~\cite{C:BCGIKS20}, if
  we can distribute additive shares of the product
  $U\cdot V \in \mathcal{R}$, we would win. However, here the
  operations do not commute, and we cannot use \FSS for point
  functions to distribute shares of the cross products. Indeed,
\begin{align*}
  U\times V & = (a\cdot e_{1}+f_{1})(\eps)\times(a\cdot e_{2}+f_{2})(\eps) \\
            & = (a\cdot e_{1})(\eps)\times(a\cdot e_{2})(\eps) + (a\cdot e_{1})(\eps)\times (f_{2})(\eps) + (a\cdot e_{2})(\eps)\times (f_{1})(\eps) \\
            & + (f_{1})(\eps)\cdot(f_{2})(\eps),
\end{align*}
and if every term admits a sparse presentation, it is not clear to us
how to distribute additive shares of them.

\subsection{A note on efficiency.} Even if the previous problem is
solved, there remains the question of efficiency. Indeed, fast
encoding of quasi-abelian codes, \ie fast multiplication in the group
algebra, is usually done through the Fast Fourier Transform, which
does not extend {\em a priori} to the modular setting since it is not
semisimple. However, a recent work of Hong, Viterbo and Belfiore
(\cite{HVB16}) developped a {\em modular} FFT over $\F2$ for the
specific group $(\ZZ/2\ZZ)^{s}$. Their algorithm is particularly
efficient because it only involves {\em additions}, and could be
optimized on hardware.

Our group $\Gamma$ is a little bit more complicated (see \cite[Proposition 2.4]{CL17}):
\[
  \Gamma \eqdef \fract{\left(\F2[T]/T^{26}\right)^{\times}}{(T+1)} \eqdef (\ZZ/2\ZZ)^{6}\times (\ZZ/4\ZZ)^{3}\times (\ZZ/8\ZZ)\times (\ZZ/16\ZZ).
\]

However, note that $\F2[\ZZ/2^{k}\ZZ]\simeq \F2[X]/(X^{2^{k}})$. In
other words, multiplication in this group algebra can be thought as a
truncated multiplication in $\F2[X]$. Now, many algorithms have been
developped as analogues of FFT in characteristics $2$. They are known
as additive fast Fourier transform, and even benefit from very
efficient implementation ~\cite{Coxon21,LCH18}. If they actually work
over extensions of $\F2$, recent works such as
\cite{DLSCCKMCYY18} suggest that multiplying polynomials
over $\F2$ could be made very efficient.

Finally, as the group algebra of a direct product, $\F2[\Gamma]$ is a
tensor product, \ie isomorphic to a multivariate ring, where the
degrees of the variables are bounded by the corresponding power of
$2$. The existence of multivariate FFT also suggests the existence of
efficient multivariate {\em additive} FFT in characteristics $2$.

Moreover, this description is very naive, and further work may
actually directly design efficient algorithms for multiplication in
modular group algebras over $\F2$ in the spirit of what has been done
for $(\ZZ/2\ZZ)^{s}$.

\end{document}